\DeclareSymbolFontAlphabet{\mathbbm}{bbold}
\DeclareSymbolFontAlphabet{\mathbb}{AMSb}
\newtheorem{definition}{Definition}[section]
\newtheorem{theorem}[definition]{Theorem}
\newtheorem*{theorem*}{Theorem}
\newtheorem{lemma}[definition]{Lemma}
\newtheorem{proposition}[definition]{Proposition}
\newtheorem{corollary}[definition]{Corollary}
\newtheorem{algorithm}[definition]{Algorithm}
\newtheorem{model}[definition]{Model} 
\newtheorem{discretization}[definition]{Discretization}
\renewcommand{\theequation}{\arabic{section}.\arabic{equation}}
\numberwithin{equation}{section}
\numberwithin{table}{section}
\numberwithin{figure}{section}
\newcommand{\C}{\mathbb{C}}
\newcommand{\N}{\mathbb{N}}
\newcommand{\R}{\mathbb{R}}
\newcommand{\Z}{\mathbb{Z}}
\newcommand{\B}{\mathcal{B}}
\newcommand{\eps}{\varepsilon}
\newcommand{\E}{\mathbb{E}}
\renewcommand{\Pr}{P}
\renewcommand{\d}{ \mathrm{d}}	
\bmdefine\h{h}
\bmdefine\P{P}     
\bmdefine\r{r}
\bmdefine\u{u} 
\bmdefine\v{v} 
\bmdefine\x{x}   
\bmdefine\y{y}
\bmdefine\z{z}
\bmdefine\w{w} 
\bmdefine\balpha{\alpha} 
\bmdefine\kap{\kappa}
\bmdefine\btheta{\theta} 
\bmdefine\bxi{\xi}  
\bmdefine\bom{\omega}
\newcommand{\Pt} {\P(\btheta)}
\bmdefine\bzeta{\zeta}
\bmdefine\etab{\eta}
\bmdefine\bzeta{\zeta}
\bmdefine\bphi{\varphi}
\bmdefine\Z{\mathcal Z}
\bmdefine\bZ{Z}
\newcommand{\ebb}{\mathbbm{e}}
\newcommand{\etabb}{\bbeta}
\newcommand{\Pbb}{\mathbbm{P}}
\newcommand{\wrt}{w.r.t.\ }
\newcommand{\bdot}{\,\bm{\cdot}\,}
\newcommand{\ind}{\mathds{1}}
\newcommand{\st}{\sigma_{\mathrm{t}}}
\newcommand{\su}{\sigma_{\mathrm{u}}}
\newcommand{\sx}{\sigma_{\mathrm{x\;\!}}}
\newcommand{\sz}{\sigma_{\mathrm{z}}}
\newenvironment{psmallmatrix}
  {\left(\begin{smallmatrix}}
  {\end{smallmatrix}\right)}
\newcommand{\wnr}{\xi}
\newcommand{\wn}{\bm{\xi}}
\newcommand{\flow}{\bphi}
\newcommand*\wthelper[2]{%
        \hbox{\dimen@\accentfontxheight#1%
                \accentfontxheight#11.2\dimen@
                $\m@th#1\widetilde{#2}$%
                \accentfontxheight#1\dimen@
        }%
}
\newcommand*\accentfontxheight[1]{%
        \fontdimen5\ifx#1\displaystyle
                \textfont
        \else\ifx#1\textstyle
                \textfont
        \else\ifx#1\scriptstyle
                \scriptfont
        \else
                \scriptscriptfont
        \fi\fi\fi3
}
\begin{document}

\title[]{
Random field reconstruction of\\ 
inhomogeneous turbulence \\
Part II: Numerical approximation and simulation}
\author[Antoni et al.]{Markus Antoni$^{1}$}
\author[]{Quinten K\"urpick$^{1}$}
\author[]{Felix Lindner$^{1}$}
\author[]{Nicole Marheineke$^{2}$}
\author[]{Raimund Wegener$^{3}$}

\date{\today\\
$^1$ Universit\"at Kassel, Institut f\"ur Mathematik, Heinrich-Plett-Str.~40, 
D-34132 Kassel, Germany\\
$^2$ Universit\"at Trier, Arbeitsgruppe Modellierung und Numerik, Universit\"atsring 15, D-54296 Trier, Germany\\
$^3$ Petrusstr. 1, D-54292 Trier, Germany\\
}
\begin{abstract} 
A novel random field model or the reconstruction of turbulent velocity fluctuations 
from inhomogeneous characteristic flow quantities 
in terms of stochastic Fourier-type integrals 
has recently been introduced and analyzed by the authors.
This article concerns the numerical discretization and implementation of the model 
and discusses 
its key features
by means of numerical simulations. 
We present a suitable discretization scheme 
that combines a randomized quadrature method for stochastic integrals 
with a local linearization of the non-uniform 
advection of the turbulent structures by the mean flow.  
The convergence of the scheme towards the continuous model 
is verified analytically. 
Moreover, we describe 
an efficient algorithmic implementation 
that allows for flexible local evaluations of the simulated turbulence field. 
The main features of the 
model 
are illustrated by a variety of 
simulation results, each highlighting specific aspects such as 
the influence of the inhomogeneous model parameters 
on the generated fluctuations, 
spatio-temporal ergodicity properties under inhomogeneous flow conditions, 
and the validity of Kolmogorov's two-thirds law in dependence on the local turbulence Reynolds number.
\end{abstract}
\maketitle
\noindent
{\sc Keywords.}  
Inhomogeneous turbulence, random velocity field, stochastic integral representation, 
randomized quadrature, 
convergence analysis, sampling procedure, spatio-temporal ergodicity \\ 
{\sc AMS-Classification.}  60G60, 60H05, 65C05 76F55, 76M35







\section{Introduction}

Random field models for the reconstruction of turbulent velocity fluctuations 
from inhomogeneous characteristic flow quantities 
are utilized in many ways in various scientific applications, e.g., 
in the context of RANS and LES computations \cite{Wie+19, Pol+20, XYD22, Zwi+24}. 
While there has been significant progress in 
random field modeling of synthetic inhomogeneous turbulence in the past decades, 
see, e.g., \cite{BBLC94, BED03, SSST14, GJYZ23} and references therein,  
the analytical foundations of available 
inhomogeneous 
models tend to be of limited scope, which 
is often related to 
the use of semi-discrete modelling frameworks 
and a lack of separation 
of questions concerning modeling and analysis 
from questions concerning numerical approximation and simulation. 
Against this background, a 
flexible and 
fully continuous random field model for 
the reconstruction of inhomogeneous turbulence from RANS-type data 
that is accessible to 
a rigorous analytical validation of the desired model properties 
has been 
derived 
and analyzed 
by the authors in \cite{AKLMW24}. 
It is based on an explicit representation formula 
in terms of stochastic integrals 
combining 
moving average and spectral representations in time and space, respectively, 
along with a two-scale approach 
that allows to 
examine the inhomogeneous model properties by means of asymptotic analysis. 
Focusing on second-order statistics and adopting the simplifying assumption of Gaussianity, the model 
manages to consistently include 
spatio-temporal variations in turbulence length and time scales 
and in the 
turbulence Reynolds number, 
non-uniform advection of the turbulent structures by the mean flow, 
and  
the 
recovery of 
inhomogeneous model parameters 
such as turbulent kinetic energy, Reynolds stresses, and dissipation rate 
in terms of local averages of a single sample path in time or space.  
In the present article 
we address the numerical approximation and simulation 
of the model 
by developing a suitable discretization scheme,  
verifying its convergence towards the continuous model, 
and describing an efficient and flexible algorithmic implementation.  
In addition, we illustrate and discuss 
the key model features 
by means of a variety of 
instructive 
simulation results. 

As the literature background 
on random field modeling of inhomogeneous 
turbulence 
has already been portrayed in 
the accompanying paper \cite{AKLMW24}, 
we focus here on a short review of 
related works
concerning numerical discretization methods 
for given random field models before 
describing the contributions of this article in more detail. 
Of particular interest in the context of our model are 
discrete 
spectral representation methods 
involving finite sums of random Fourier modes 
\cite{Pri01, Sab13, DS25}, 
which naturally arise from 
disretizations of the 
classical continuous spectral 
decompositions  
of homogeneous random fields and their correlation functions 
\cite{GS74,MY75, Bre14}. 
Such methods have been widely used 
for the approximation of homogeneous random fields 
and can be interpreted as numerical quadrature techniques 
for stochastic Fourier-type integral 
representations. The 
quadrature points may be chosen 
deterministically 
\cite{Shi72, PS95, SD96, Mann98, VS21}  
or in a randomized way 
\cite{Kra70, Shi71, SJ72, Kur93, BK95, Kur95, KS06, KSK13}, 
where the latter approach offers the advantage of 
being able to exactly reproduce the correlation structure 
of the 
spectrally continuous 
target model 
while 
avoiding artificial 
periodicity 
related to the numerical disretization. 
Discrete spectral representation methods 
have been extended from early on 
to the approximation of inhomogeneous random field models,
see, e.g., \cite{Shi71, SJ72, PS95, LCS07}, 
and have turned out to be particularly useful 
for the generation of inhomogeneous synthetic turbulence fields 
from RANS-type data, see, e.g.,  
\cite{BBLC94, SSC01, BED03, SSST14, YB14, ADDK20, ADDK22,   GJYZ23}.
In the latter class of articles,  however, 
the random fields 
are typically formulated in 
a semi-discrete setting only, 
without 
a full 
specification of a continuous limit 
model and its analytic properties in view of inhomogeneous flow conditions. 
We remark that the randomized approach in \cite{ADDK20, ADDK22} 
is partly similar to the one employed in the present article,  
but does not address aspects such as 
temporal decay, non-uniform mean flow advection, 
and consistency with prescribed values of the dissipation rate. 
Further numerical methods besides the spectral representation approach 
include moving average techniques \cite{KSJ03, Ewe07, KCX13, Ewe16} and wavelet methods \cite{EM95, KKS07,KS08, DL18}, among others. 
Here we highlight the moving average approach in \cite{Ewe07, Ewe16}, 
which 
treats 
rather general 
inhomogeneous flow conditions 
and offers a 
considerable 
analytical foundation, separating 
questions related to a continuous limit model 
from questions related to the numerical approximation thereof. 
The approach 
is fundamentally different 
from the one employed here  
in that, e.g., it is based on a Lagrangian 
instead of an Eulerian perspective, 
it is not designed to 
enable 
flexible localized evaluations of the random field, 
and it leads to weaker regularity properties 
with regard to the temporal evolution.

The contributions and structure of this article are as follows. 
We begin by briefly recalling the random field model 
for the reconstruction of inhomogeneous turbulence 
from \cite{AKLMW24} in Section~\ref{sec:model}, 
before introducing  a suitable discretization scheme 
and analytically verifying its convergence in Section~\ref{sec:discretization_convergence}. 
The discretization combines 
a randomized quadrature method for stochastic integrals 
with a local linearization of the non-uniform mean flow advection. 
The randomized quadrature 
corresponds to  
replacing the underlying white noise random measure -- 
defined on the product space of time domain and spatial Fourier domain -- 
by a finite sum of randomly distributed 
Dirac measures 
with random weights, 
so that the resulting discretized model 
is essentially given 
in terms of  
a finite sum of 
inhomogeneously scaled random traveling plane waves 
that are superposed according to a 
moving average construction in time. 
To facilitate efficient local evaluations at 
different 
time points, 
a stratification approach \wrt the time variable is employed 
for the specification of the random quadrature points. 
The convergence of the discretization scheme 
towards the continuous model 
is established in Theorem~\ref{thm:conv} 
using 
the two-scale approach 
of the model  
in combination with 
the isometry property of stochastic integrals and 
a general convergence result for stratified randomized quadrature methods 
established 
in the appendix of the article. 
In addition, Corollary~\ref{prop:flow_properties} ensures the validity of characteristic flow properties for the discretized field in analogy to the respective findings for the continuous model. 
Section~\ref{sec:implementation} is concerned with 
the algorithmic implementation of the 
discretized model, 
where we are specifically interested in 
enabling flexible 
and efficient 
local evaluations of the 
random field  
that do not necessarily require 
global simulations 
on larger subsets of the spatio-temporal computational domain. 
This aspect is particularly essential 
in the context of highly resolved turbulence-driven dynamics, 
e.g., of particles or fibers 
\cite{MW11,HMRW13,Wie+19}, 
where the relevant 
evaluation points for the velocity field 
are not known in advance but are determined dynamically in the course of the simulation. 
We 
propose an efficient sampling algorithm that ensures 
flexibility \wrt localized evaluations of the velocity field  
while maintaining 
consistency \wrt the random numbers employed in each evaluation. 
The relevant random numbers depend on 
the support of the involved time integration kernel, 
which 
in turn 
depends on the local turbulence scales at the considered evaluation point. 
The algorithm manages to keep track of those random numbers used in previous evaluations 
that are 
potentially relevant for the ongoing simulation process. 
In Section~\ref{sec:sim_res} we present a variety of simulation results 
in order to illustrate and discuss the key features of 
our 
model, 
which have previously been addressed 
in \cite{AKLMW24} from an analytical point of view. 
Subsection~\ref{sec:parameters} concerns the significance of the model parameters 
and shows how inhomogeneities in the local turbulence scales 
and the local turbulence Reynolds number 
are reflected in 
single realizations of the turbulent velocity field. 
The aspect 
of non-uniform mean flow advection is visualized and discussed as well. 
The spatio-temporal ergodicity properties of the 
model 
are illustrated in Subsection~\ref{sec:ergodicity} 
by demonstrating that the 
inhomogeneous 
characteristic flow quantities 
of turbulent kinetic energy and dissipation rate 
can be recovered 
in terms of 
local averages of a singe sample path.
The validity of Komogorov's two-thirds law for the spatial velocity increments in the inertial range 
and its essential dependence  on the local turbulence Reynolds number are demonstrated in 
the final 
Subsection~\ref{sec:Kolmogorov}.  
Appendix~\ref{app:wn} and Appendix~\ref{app:approx} provide 
some background on white noise random measures 
and stochastic integration 
as well as  
suitable abstract auxiliary results on randomized quadrature methods 
for stochastic integrals.

\subsection*{General notation and conventions}

We set  $\R^+ = (0,\infty)$, $\R^+_0 = [0,\infty)$ 
and write $\Re z$ and $\Im z$ for the real and the imaginary part of a  complex number $z\in\C$.
Unless specified otherwise, 
$d$, $\ell$, $m$, $n\in\N$ 
are assumed to be arbitrary natural numbers. 
Vectors and matrices are denoted by small bold letters and capital bold letters, respectively. 
Basic tensor operations are given by  
$\bm{a} \cdot \bm{b} = \sum_{j} a_j b_j$, 
$\bm{a} \otimes \bm{b} = ( a_j b_k)_{j,k}$, 
$\bm{A} \cdot \bm{b} = (\sum_{k} A_{j,k} b_k)_j$, 
$\bm{A} \cdot \bm{B} = (\sum_{k} A_{j,k} B_{k,l})_{j,l}$, 
where $\bm{a} = (a_j)_j$, $\bm{b} = (b_j)_j$, $\bm{A} = (A_{j,k})_{j,k}$, $\bm{B} = (B_{j,k})_{j,k}$ are vectors and matrices of suitable dimensions.  
We write $\| \bm{a} \| = (\sum_j |a_j|^2)^{1/2}$ and $\| \bm{A} \| = (\sum_{j,k} |A_{j,k}|^2)^{1/2}$ for the Euclidean norm and the Frobenius norm of real or complex vectors and matrices, respectively, 
and $S^2 = \{ \x \in \R^3 : \|\x\| = 1 \}$ for the unit sphere in $\R^3$.
The Borel $\sigma$-algebra on a given Borel set $U \subset \R^n$ 
is denoted by $\B(U)$, 
$\lambda^n$ stands for the Lebesgue measure on $(\R^n,\B(\R^n))$, 
and $U_{S^2}$ denotes the uniform distribution on $(S^2, \B(S^2))$ (i.e., the normalized surface measure on the $2$-sphere). 
For any measure space $(U,\mathcal{A},\mu)$ and any finite-dimensional normed vector space $V$, we use the notation $L^2(\mu;V) = L^2(U,\mu;V)$ for the space of (equivalence classes of) measurable and square-integrable functions from $U$ to $V$. 
All random variables and random fields are assumed to be defined on the same underlying probability space $(\Omega,\mathscr{F},\Pr)$. 
The expected value of an integrable random variable 
$X \colon \Omega \to V$  is denoted by $\E[X] = \int_\Omega X \,\d\Pr$.


\section{Inhomogeneous turbulence model}\label{sec:model}

We briefly recall the random field model for the reconstruction of inhomogeneous turbulence from  
characteristic flow quantities 
that has been developed and analyzed in detail in \cite{AKLMW24}. 
Building on the theory of homogeneous turbulence, the inhomogeneous model is based on 
an asymptotic two-scale approach separating the turbulent fluctuations (micro scale) from macro scale variations of the flow quantities. 

Proceeding from a macro length $x_0$ associated with the geometry of the flow problem as well as from typical values for kinematic viscosity $\nu_0$, turbulent kinetic energy $k_0$, and dissipation rate $\eps_0$ as reference values for the non-dimensionalization, we consider 
\begin{align*}
u_0 = \sqrt{k_0}, \qquad t_0 = x_0/\sqrt{k_0}, \qquad x_\mu = \sqrt{k_0}^3/\eps_0, \qquad t_\mu = k_0/\eps_0. 
\end{align*} 
In contrast to the macro scale specified by the macro length $x_0$ and respective time $t_0$,
the quantities $x_\mu$ and $t_\mu$ represent the typical turbulent length and time and indicate a micro scale associated with the turbulent fluctuations.
The velocity $u_0$ is chosen such that $u_0 = x_0/t_0 = x_\mu/t_\mu$.
In inhomogeneous turbulence, the characteristic flow quantities of mean velocity $\overline{\u}(\x,t)$, kinematic viscosity $\nu(\x,t)$, turbulent kinetic energy $k(\x,t)$, and dissipation rate $\eps(\x,t)$ 
are functions of space $\x$ and time $t$. We assume that these functions are given in dimensionless form viewed from a macroscopic perspective, i.e., the function values are scaled with the reference values $u_0$, $\nu_0$, $k_0$, $\eps_0$, while the arguments $\x$, $t$ refer to the macro scale $x_0$, $t_0$. The dimensionless turbulent velocity is expressed by a Reynolds-type decomposition in terms of mean velocity and turbulent fluctuations,
\begin{align}\label{eq:decomp_u}
\u(\x, t) = \overline{\u}(\x, t) + \u'(\x, t),
\qquad \E\bigl[ \u(\x, t) \bigr]  = \overline{\u}(\x, t) , \quad \E\bigl[ \u'(\x, t) \bigr] = \bm{0},
\end{align} 
where we interpret the turbulent fluctuations $\u'$ as a random field depending on the characteristic flow quantities.  All further dependencies of $\u'$ are covered by two dimensionless numbers
\begin{align}\label{eq:z_delta}
z = \frac{\eps_0\nu_0}{k_0^2},\qquad \delta = \frac{x_\mu}{x_0} = \frac{t_\mu}{t_0} =\frac{\sqrt{k_0}^3}{\eps_0 x_0},
\end{align}
$z, \delta\ll 1$. 
The  parameter $z$ is known from the homogeneous turbulence theory 
and represents the inverse of the turbulence Reynolds number. 
It is proportional to the inverse of the turbulent viscosity ratio and indicates the scale ratio between the turbulent fine-scale structure (Kolmogorov scales of small vortices dissipating into heat) and the turbulent large scale structure (large energy-bearing vortices). 
The parameter $\delta$ entering via the two-scale approach represents the ratio between the turbulent scale (micro scale) and the macro scale associated with the geometry of the flow problem and is referred to as turbulence scale ratio. 
While the decomposition \eqref{eq:decomp_u} is formulated from the perspective of the macro scale $x_0$, $t_0$, the modeling of the turbulent fluctuations $\u'$ is based on the micro scale $x_\mu$, $t_\mu$. Scaling factors allow locally the adjustment of the values $x_\mu$, $t_\mu$, $u_0$ and $z$ to the spatio-temporal variations of the characteristic flow quantities, i.e.,
\begin{equation}\label{eq:sigma_x_t_u_z}
\begin{aligned}
\sx(\x,t) = \!\frac{\sqrt{k(\x,t)}^{3}}{\eps(\x,t)}, \quad \st(\x,t) = \frac{k(\x,t)}{\eps(\x,t)}, \quad \su(\x,t) = \sqrt{k(\x,t)}, \quad \sz(\x,t) = \frac{\eps(\x,t)\nu(\x,t)}{k(\x,t)^2}.
\end{aligned}
\end{equation}
The inhomogeneous random field model for the turbulent fluctuations is then specified as follows. 

\begin{model}\label{model:turb_inhom}
An \emph{inhomogeneous turbulence field} $\u' = (\u'(\x,t))_{(\x,t) \in \R^3 \times \R}$ is described as a centered, $\R^3$-valued Gaussian random field of the form 
\begin{equation}\label{eq:u'}
\begin{aligned}
\u'(\x,t)   =  \su(\x,t)\;\Re \! &\int_{\R^+ \times S^2 \times \R}  \Bigl(\frac1{\delta\st(\x,t)}\Bigr)^{1/2}\eta\Bigl(\frac{1}{\delta\st(\x,t)}(t-s)\Bigr) 
\exp\Bigl\{ i \frac{1}{\delta}\;\! \kappa\;\! \btheta \cdot \flow(s;\x,t) \Bigr\} 
\\ & 
\qquad 
\vphantom{\int} 
\sx^{1/2}(\x,t) \;\! E^{1/2} \bigl(\sx(\x,t)\kappa; \sz(\x,t)z\bigr) \;\!  \Pt \cdot \bm{L}(\x,t) \cdot \wn(\d\kappa,\d\btheta,\d s),
\end{aligned}
\end{equation}
where the characteristic numbers $\delta, z \in \R^+$ and the scaling functions $\sx$, $\st$, $\su$, $\sz \colon \R^3 \times \R \to \R^+$ are given by \eqref{eq:z_delta} and \eqref{eq:sigma_x_t_u_z}, and the following holds:
\begin{enumerate}
\item The energy spectrum function 
$\R^+\times\R^+ \ni (\kappa,\zeta) \mapsto E(\kappa;\zeta) \in \R_0^+$ 
is  continuously differentiable and fulfills the integral conditions 
\begin{equation}\label{eq:energy_spectrum}
\int_0^\infty E(\kappa;\zeta) \,\d\kappa = 1, \qquad \int_0^\infty \kappa^2 E(\kappa;\zeta) \,\d\kappa = \frac{1}{2\zeta}.
\end{equation} 
\item The time integration kernel 
$\eta \colon \R \to \R$ is continuously differentiable, has compact support, and satisfies
\begin{equation}\label{eq:basic_integral_prop_eta}
\int_\R \eta^2(s) \,\d s = 1.
\end{equation}
\item For every $(\x,t) \in \R^3 \times \R$ the mean flow function $\flow(\bdot;\x,t) \colon \R \to \R^3$ is continuous and solves the integral equation 
\begin{equation}\label{eq:mean_flow}
\flow(s;\x,t) = \x + \int_t^s\overline{\u} \bigl( \flow(r;\x,t), r \bigr) \,\d r, \quad s \in \R.
\end{equation}
\item The flow quantities $k,\eps,\nu \colon \R^3 \times \R \to \R^+$ and $\overline{\u}\colon \R^3 \times \R \to \R^3$ are continuous in $(\x,t)$, differentiable in $\x$, and such that $\nabla_\x k$, $\nabla_\x \eps$, $\nabla_\x \nu$ and $\nabla_\x \overline{\u}$ are continuous in $(\x,t)$. 
The mean velocity gradient $\nabla_\x \overline{\u} \colon \R^3 \times \R \to \R^{3 \times 3}$ is 
bounded. 
\vspace{1mm}
\item 
The anisotropy function $\bm{L}:\R^3\times \R \to \R^{3\times 3}$ is continuous in $(\x,t)$, differentiable in $\x$, and such that $\nabla_{\x}\bm{L}$ is continuous in $(\x,t)$. Moreover, it satisfies 
$\|\bm{L}(\x,t)\|^2=3$. 
\vspace{1mm}
\item 
$\wn$ is a $\C^3$-valued 
Gaussian white noise on $\R^+ \times S^2 \times \R$ with structural measure  $2\lambda^1|_{\R^+}\otimes U_{S^2}\otimes\lambda^1$ in the sense of Definition \ref{def:white_noise}, 
where $U_{S^2}$ 
and $\lambda^1$ denote the uniform distribution on the $2$-sphere 
(i.e., the normalized surface measure) and the one-dimensional Lebesgue measure.
\end{enumerate}
\vspace{1mm}
The matrix $\Pt \in \R^{3 \times 3}$ in \eqref{eq:u'} denotes the projector onto the orthogonal complement of $\btheta$, $\btheta \in S^2 = \{\x\in\R^3\colon \|\x\| = 1\}$ ($2$-sphere). 
In addition, the following technical integrability condition related to the spatial mean-square differentiability of $\u'$ is fulfilled:
\vspace{1mm}
\begin{enumerate}
\item[\textbf{g)}] For every $(\x,t) \in \R^3 \times \R$ there exists an $r \in \R^+$ such that
\begin{equation*}
\int_0^\infty \sup_{(\y,s) \in B_r(\x,t)} \Bigl\| \nabla_{\y} \Bigl( \sx^{1/2}(\y,s) E^{1/2}\bigl(\sx(\y,s) \kappa; \sz(\y,s)z \bigr) \Bigr) \Bigr\|^2 \d \kappa < \infty, 
\end{equation*}
where $B_r(\x,t)$ denotes the closed ball in $\R^3\times\R$ with center $(\x,t)$ and radius $r$.
\end{enumerate}
\end{model}
\vspace{.5mm}

Characteristic flow properties 
\wrt turbulent kinetic energy, Reynolds stresses, dissipation rate, incompressibility, 
and spatio-temporal ergodicity of the inhomogeneous turbulence field $\u'$ have been established in \cite[Theorems~5.5 and 5.6]{AKLMW24} 
in the form of asymptotic results for the turbulence scale ratio $\delta\to 0$. The assumptions in Model~\ref{model:turb_inhom} in particular ensure that the random field $\u'$ is mean-square differentiable \wrt $\x$, so that the gradient $\nabla_\x \u'$ is well-defined in the mean-square sense. 
In addition, both $\u'$ and $\nabla_\x \u'$ are mean-square continuous in $(\x,t)$.
Concrete examples for the energy spectrum $E(\kappa;\zeta)$ and the time integration kernel $\eta(s)$ 
can be found in \cite[Examples~2.3, 2.4, and 5.3]{AKLMW24}. 
The anisotropy factor $\bm{L}(\x,t)$ can be used to control directional weightings of the one-point velocity correlations (Reynolds stresses), see  \cite[Theorem~5.5]{AKLMW24}, and simplifies to the identity matrix 
$\bm{L}(\x,t)=\bm{I}$ in the case of isotropic one-point velocity correlations. 
Unlike in \cite[Model~5.1]{AKLMW24}, here the stochastic integral representation of $\u'$ is equivalently formulated in terms of spherical coordinates instead of Cartesian coordinates, as this facilitates the establishment of a discretization scheme in Section~\ref{sec:discretization_convergence}. Furthermore, to simplify the exposition we consider a slightly less general setting by assuming that the time integration kernel $\eta$ has a compact support. Note, however, that the discretization scheme and the convergence results presented in this article can be easily extended to kernels with non-compact support, using additional truncation and approximation arguments.

\section{Discretization scheme and convergence analysis}
\label{sec:discretization_convergence}

In this section we 
introduce  
and analyze 
a suitable discretization scheme for 
the numerical approximation of 
our random field model of inhomogeneous turbulence. 
Analytical results concerning the convergence of the scheme 
and 
characteristic flow properties of the discretized model are presented in 
Theorem~\ref{thm:conv} and Corollary~\ref{prop:flow_properties} below. 

For the discretization of Model~\ref{model:turb_inhom} we employ a stratified Monte Carlo quadrature method 
in order 
to approximate the 
stochastic 
integral 
appearing 
in the representation formula \eqref{eq:u'}. 
Formally, this corresponds to replacing the white noise term 
$\wn$ 
in \eqref{eq:u'} 
by a discrete random measure 
$\wn_N$ 
consisting of randomly distributed point masses on $\R^+ \times S^2 \times \R$ with 
random weights, where $N\in\N$ is a discretization parameter.
In combination with 
a linearization 
of the mean flow function $\flow$ 
in \eqref{eq:u'}, 
this leads to 
approximating random fields 
of the form 
\begin{align}\label{eq:u'N_int}
\u'_N(\x,t)  = \su(\x,t)\; \Re & \!\int_{\R^+ \times S^2 \times \R}   \Bigl(\frac1{\delta\st(\x,t)}\Bigr)^{1/2} \notag
\!\eta
\Bigl(\frac{1}{\delta\st(\x,t)}(t-s)\Bigr)
\exp\Bigl\{ i \frac{1}{\delta}\;\! \kappa\;\! \btheta \cdot \bigl(\x-(t-s)\overline\u(\x,t)\bigr) \Bigr\}
\\ & 
\quad 
\vphantom{\int}
\sx^{1/2}(\x,t) \;\! E^{1/2} \bigl(\sx(\x,t)\kappa; \sz(\x,t)z\bigr)  \;\!  \Pt \cdot \bm{L}(\x,t) \cdot \wn_N(\d\kappa,\d\btheta,\d s).
\end{align} 
The 
structure of the 
$\C^3$-valued random measure $\wn_N$ 
involves a 
partitioning of the domain of integration  
$\R^+ \times S^2 \times \R$ 
into strata of the form $\R^+ \times S^2 \times I_j$,
 where $I_j=[j\Delta s,(j+1)\Delta s)$ are stratification intervals of fixed length  
$\Delta s$.  
It is given by 
\begin{align}\label{eq:WN}
\wn_N(\;\!\cdot\;\!) = 
\sum_{j\in\mathbb Z} \frac{1}{\sqrt{N}} 
\sum_{n=1}^N 
\Bigl(\frac{\Delta s}{ p(\kappa_{jn}) }\Bigr)^{1/2} 
\wn_{jn} \, \delta_{(\kappa_{jn}, \btheta_{jn}, s_{jn})}(\;\!\cdot\;\!) 
\end{align}
with suitable $\C^3$-valued random variables $\wn_{jn}$ and random quadrature points  $(\kappa_{jn}, \btheta_{jn}, s_{jn})$ in $\R^+ \times S^2 \times I_j$, where 
$\kappa_{jn}$ is drawn from the reference distribution  
$\ind_{\R^+}(\kappa) \,p(\kappa) \,\d\kappa$. 
Here 
$p$ is a reference probability density function on $\R^+$ 
and $\delta_{(\kappa, \btheta, s)}$ denotes Dirac measure at  
$(\kappa, \btheta, s)$. 
We refer to Appendix~\ref{app:approx} for 
a presentation 
of general auxiliary results on Monte Carlo quadrature 
methods 
for white noise integrals motivating the specific choice \eqref{eq:WN}. 

Rewriting the discretized stochastic integral in \eqref{eq:u'N_int} as a random sum, 
the numerical approximation scheme is summarized and specified as follows. 

\begin{discretization}\label{discretization}
Let $\u'=(\u'(\x,t))_{(\x,t)\in\R^3\times \R}$ be an inhomogeneous turbulence field in the sense of Model~\ref{model:turb_inhom}. 
For $N\in\N$ the approximating random field $\u'_N=(\u'_N(\x,t))_{(\x,t)\in\R^3\times \R}$ is defined by \eqref{eq:u'N_int} and \eqref{eq:WN}, i.e., 
\begin{equation}\label{eq:uN'2}
\begin{aligned} 
\u'_N(\x,t) 
= 
\su(\x,t)\;\Re\sum_{j\in\mathbb Z}  \frac{1}{\sqrt{N}} & \sum_{n=1}^N 
\Bigl(\frac1{\delta\st(\x,t)}\Bigr)^{1/2} \eta\Bigl(\frac{1}{\delta\st(\x,t)}(t-s_{jn})\Bigr)
\\ & 
\quad
\exp\Bigl\{i\frac1\delta\;\!\kappa_{jn}\btheta_{jn}\cdot\bigl(\x-(t-s_{jn})\overline\u(\x,t)\bigr) \Bigr\} 
\Bigl(\frac{\Delta s}{ p(\kappa_{jn}) }\Bigr)^{1/2} 
\\ & 
 \quad
 \vphantom{\int}
\sx^{1/2}(\x,t) \;\! E^{1/2} \bigl(\sx(\x,t)\kappa_{jn}; \sz(\x,t)z\bigr) \;\! 
\P(\btheta_{jn})\cdot \bm{L}(\x,t) \cdot \wn_{jn},
\end{aligned}
\end{equation}
where the following is assumed:    
\begin{itemize}
\item
The 
random wave numbers 
$\kappa_{jn}$, $j\in\mathbb Z$, 
$n\in\N$, 
are independent and identically distributed according to 
the 
probability density function $p\colon \R^+\to\R^+_0$,
which satisfies 
for every $(\x,t)\in\R^3\times\R$ 
the support condition 
\begin{equation}\label{eq:cond_p}
\lambda^1\bigl( \bigl\{\kappa\in\R^+ \colon p(\kappa)=0 \text{ and } E\bigl(\sx(\x,t)\kappa; \sz(\x,t)z\bigr)\neq 0 \bigr\}\bigr) = 0. 
\vspace{1mm}
\end{equation}
\item
The random orientation vectors $\btheta_{jn}$ $j\in\mathbb Z$, 
$n\in\N$, 
are 
independent and identically distributed according to the uniform distribution on the unit sphere $S^2$.
\vspace{1mm}
\item
The temporal quadrature points $s_{jn}$, $j\in\mathbb Z$, 
$n\in\N$, 
are independent random variables such that each $s_{jn}$ is uniformly distributed on the interval $I_j=[j\Delta s,(j+1)\Delta s)$, where $\Delta s$ is a fixed stratification length.
\vspace{1mm}
\item
The 
complex 
noise vectors  
$\wn_{jn}$ 
are 
square-integrable and 
such that 
the 
$\R^3$-valued random variables $\Re\;\!\wn_{jn}$, $\Im\;\!\wn_{jn}$, $j\in\mathbb Z$, 
$n\in\N$, 
are 
independent and identically distributed with mean zero 
and identity covariance matrix 
\begin{align*}
\qquad \E\bigl[\Re\;\!\wn_{jn}\bigr]=\E\bigl[\Im\;\!\wn_{jn}\bigr]=\bm0, \quad \E\bigl[\Re\;\!\wn_{jn}\otimes\Re\;\!\wn_{jn}\bigr]=\E\bigl[\Im\;\!\wn_{jn}\otimes\Im\;\!\wn_{jn}\bigr]=\bm I.
\end{align*}
\end{itemize}
In addition to the independence assumptions above, the combined family of random variables $\kappa_{jn}$, $\btheta_{jn}$, $s_{j,n}$, $\wn_{j,n}$, $j \in\mathbb Z$, $n\in\N$, is assumed to be independent as well.
\end{discretization}

Observe 
that the sum 
$\sum_{j\in \mathbb Z}$ 
appearing 
in
the discretization formula \eqref{eq:uN'2} is actually finite.
Indeed, 
for every fixed choice of $(\x,t)$, all but finitely many of the summands vanish 
due to the boundedness of the support of the time integration kernel $\eta$. 
This is exploited in the
algorithmic implementation of the 
discretized model 
described 
in Section~\ref{sec:implementation} below.
A 
natural choice for the reference density $p$ is 
given 
by 
the energy spectrum function 
without scaling factors, i.e., 
$p(\kappa)=E(\kappa;z)$,
provided 
that the support condition \eqref{eq:cond_p} is satisfied.
The 
condition is trivially fulfilled 
for this choice  
if the employed energy spectrum is strictly positive. 
It is also worth noting that various alternative methods 
as well as 
variants 
of the proposed method 
for the discretization of Model~\ref{model:turb_inhom} 
are conceivable. 
For instance, the stratification approach \wrt the integration variable $s$ 
may be extended to the integration variables $\kappa$ and $\btheta$ 
by employing suitable partitions of the respective domains $\R^+$ and $S^2$; 
see, e.g., \cite[Section~7.1]{KS06} for 
related considerations 
in the context of homogeneous
fields. 
We 
focus on
the discretization scheme introduced above in order to keep the length of the manuscript within reasonable bounds.

For the sake of presentation it is convenient to introduce the shorthand notation
\begin{equation}\label{eq:def_etabb_fxbb}
\begin{aligned}
\etabb(s;\x,t) & = \Bigl(\frac1{\delta\st(\x,t)}\Bigr)^{1/2}\eta\Bigl(\frac{1}{\delta\st(\x,t)}s\Bigr),
\\ \mathbbm{e}(\kappa;\x,t) & = \sx^{1/2}(\x,t) \;\! E^{1/2} \bigl(\sx(\x,t) \kappa; \sz(\x,t)z \bigr),
\\ \Pbb(\btheta;\x,t) & = \su(\x,t) \;\! \Pt \cdot \bm{L}(\x,t), 
\end{aligned}
\end{equation}
so that the 
representation formulas \eqref{eq:u'} and \eqref{eq:uN'2} in Model~\ref{model:turb_inhom} and Discretization~\ref{discretization} can 
be rewritten as
\begin{align}
\u'(\x,t) & = 
\Re \! \int_{\R^+\times S^2 \times \R} 
\etabb(t-s;\x,t) \;\! \exp\Bigl\{ i \frac{1}{\delta}\;\!\kappa\;\! \btheta\cdot \flow(s;\x,t) \Bigr\} \;\!  
\mathbbm{e}(\kappa;\x,t) \,  \Pbb(\btheta;\x,t) \cdot \wn(\d\kappa, \d\btheta, \d s), \notag
\\
\u'_N(\x,t)  &= \Re \!\int_{\R^+ \times S^2 \times \R}  
\etabb(t-s;\x,t) \;\! \exp\Bigl\{ i \frac{1}{\delta}\;\! \kappa\;\! \btheta \cdot \bigl(\x-(t-s)\overline\u(\x,t)\bigr) \Bigr\}  \notag
\\ &
\qquad\qquad\qquad\qquad\qquad\qquad\qquad\qquad\qquad\qquad \qquad\,
\ebb(\kappa;\x,t) \, \Pbb(\btheta;\x,t) \cdot \wn_N(\d\kappa,\d\btheta,\d s),  \notag
\end{align}
where $\wn_N$ is the discrete random measure defined in \eqref{eq:WN}. 

In Theorem~\ref{thm:conv} below we 
investigate the convergence behaviour of the proposed numerical scheme and 
analytically justify that the
discretized fields $\u'_N$ may be used as approximations of the inhomogeneous turbulence field $\u'$. 
Similar to the analysis of the continuous model \cite{AKLMW24}, 
the turbulence scale ratio $\delta\ll 1$ 
introduced in \eqref{eq:z_delta}
plays a crucial role in this context 
as it allows for an asymptotic control of the macroscopic variations of the underlying flow quantities. 
We proceed in two steps and first show that the auxiliary random field
\begin{equation}\label{eq:u'aux}
\begin{aligned}
\u'_{\mathrm{aux}}(\x,t)  = \Re \!\int_{\R^+ \times S^2 \times \R}  \etabb(t-s;\x,t)  \exp\Bigl\{ i \frac{1}{\delta}\;\! \kappa\;\! \btheta \cdot 
 \bigl(\x-(&t-s)\overline\u(\x,t)\bigr) 
\Bigr\}
\\ & \quad
\ebb(\kappa;\x,t) \, \Pbb(\btheta;\x,t) \cdot \wn(\d\kappa,\d\btheta,\d s),
\end{aligned}
\end{equation}
approximates $\u'$ in the mean-square sense as $\delta\to0$. 
This accounts for the linearization of the mean flow function $\flow$ modelling the advection of the turbulent fluctuations.
In a second step 
we show for every fixed value of $\delta$ that the 
finite-dimensional distributions of the  
discretized field $\u'_N$ 
converge to 
the respective finite-dimensional distributions 
of the auxiliary field 
$\u'_{\mathrm{aux}}$ as $N\to\infty$.
This addresses 
the discretization of the underlying white noise $\wn$.

\begin{theorem}[Convergence]\label{thm:conv}
Let $\u'=(\u'(\x,t))_{(\x,t)\in\R^3\times \R}$ be an inhomogeneous  
turbulence field in the sense of Model~\ref{model:turb_inhom}, 
let  $\u'_N=(\u'_N(\x,t))_{(\x,t)\in\R^3\times \R}$, $N\in\N$, 
be 
the
family of 
approximating random fields  
specified in  
Discretization~\ref{discretization},
and let  $\u_{\mathrm{aux}}'=(\u'_{\mathrm{aux}}(\x,t))_{(\x,t)\in\R^3\times \R}$ 
be 
given by 
\eqref{eq:u'aux}. 
Then the following assertions hold: 
\begin{enumerate}
\item 
The auxiliary field $\u_{\mathrm{aux}}'$ and its scaled gradient $\delta\;\!\nabla_\x \u_{\mathrm{aux}}'$
approximate $\u'$ and $\delta\;\!\nabla_\x \u'$ in the mean-square sense 
as the turbulence scale ratio $\delta$ tends to zero. 
More precisely, for every 
compact subset $K$ of $\R^3\times\R$ 
we have that 
\begin{equation*}
\lim_{\delta \to 0 } 
\sup_{(\x,t)\in K}
\E\Bigl[ 
\bigl\| \u'(\x,t) - \u'_{\mathrm{aux}}(\x,t) \bigr\|^2 
+ 
\bigl\| \delta\;\!\nabla_\x \u'(\x,t) - \delta\;\!\nabla_\x \u'_{\mathrm{aux}}(\x,t) \bigr\|^2 
\Bigr]
= 0.
\end{equation*} 
\item
For every value of  
$\delta$,  
the finite-dimensional distributions of $\u'_N$ and $\nabla_\x\u'_N$ converge 
weakly 
to the respective 
finite-dimensional distributions of $\u'_{\mathrm{aux}}$ and $\nabla_\x\u'_{\mathrm{aux}}$ as 
$N\to\infty$, i.e.,   
for any choice of points $(\x_1,t_1)$, $(\x_2,t_2)$, \ldots, 
$(\x_\ell,t_\ell)\in\R^3\times\R$, $\ell\in\N$, 
it holds that 
\begin{equation*}
\bigl(\u'_N(\x_1,t_1), 
\ldots,\u'_N(\x_\ell,t_\ell)\bigr) 
\, \xrightarrow{\;\mathrm{d}\;} \,
\bigl(\u'_{\mathrm{aux}}(\x_1,t_1), 
\ldots,\u'_{\mathrm{aux}}(\x_\ell,t_\ell)\bigr). 
\end{equation*}
and 
\begin{equation*}
\bigl(\nabla_\x\u'_N(\x_1,t_1), 
\ldots,\nabla_\x\u'_N(\x_\ell,t_\ell)\bigr) 
\, \xrightarrow{\;\mathrm{d}\;} \,
\bigl(\nabla_\x\u'_{\mathrm{aux}}(\x_1,t_1), 
\ldots,\nabla_\x\u'_{\mathrm{aux}}(\x_\ell,t_\ell)\bigr). 
\end{equation*}
Moreover, 
for every $N\in\N$ the covariance structure of the random field  $\u'_N$ is 
identical to the covariance structure of $\u'_{\mathrm{aux}}$. 
\end{enumerate}
\end{theorem}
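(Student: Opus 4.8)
\emph{Overall strategy.} The plan is to prove the two assertions along the two-step structure already indicated before the statement. For (a) I would exploit that $\u'$ and $\u'_{\mathrm{aux}}$ are white-noise integrals whose integrands \emph{coincide in every factor except the phase}, so that the It\^o-type isometry for the $\C^3$-valued white noise $\wn$ (Definition~\ref{def:white_noise} together with the auxiliary results of Appendix~\ref{app:approx}) turns the mean-square error into a purely deterministic integral, which I then estimate by a Taylor expansion of the mean flow and dominated convergence. For (b) I would recognise $\u'_N(\x,t)$ as a $1/\sqrt N$-normalised sum of contributions that are independent across the index $(j,n)$ and identically distributed in $n$ within each stratum, identify the limiting Gaussian law through an exact covariance computation, and conclude via a Lindeberg central limit theorem combined with the Cram\'er--Wold device.

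\emph{Assertion (a).} The isometry represents $\E\bigl[\|\u'(\x,t)-\u'_{\mathrm{aux}}(\x,t)\|^2\bigr]$ as an integral over $\R^+\times S^2\times\R$ of $|\etabb(t-s;\x,t)|^2\,\bigl|\exp\{i\tfrac1\delta\kappa\btheta\cdot\flow(s;\x,t)\}-\exp\{i\tfrac1\delta\kappa\btheta\cdot(\x-(t-s)\overline\u(\x,t))\}\bigr|^2\,\ebb^2(\kappa;\x,t)$ against the structural measure, weighted by $\|\Pbb(\btheta;\x,t)\|^2$. I would bound the squared phase increment by $\min\bigl(4,\,\tfrac1{\delta^2}\kappa^2\|\flow(s;\x,t)-(\x-(t-s)\overline\u(\x,t))\|^2\bigr)$, and estimate the flow error by subtracting the linearisation directly in the integral equation~\eqref{eq:mean_flow}, writing it as $\|\int_t^s[\overline\u(\flow(r;\x,t),r)-\overline\u(\x,t)]\,\d r\|$: the spatial part is controlled by the boundedness of $\nabla_\x\overline\u$ and the temporal part by the uniform continuity of $\overline\u$ in $t$ on compacta. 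Since $|t-s|\le C\delta\st(\x,t)$ on the support of $\etabb$, this makes the phase increment of order $o(1)\cdot\kappa$ uniformly over $K$ as $\delta\to0$. After the substitutions normalising the time kernel via~\eqref{eq:basic_integral_prop_eta} and turning the $\kappa$-weight into $E$ and $\kappa^2E$, dominated convergence — dominated by the integrable data $\int_0^\infty E\,\d\kappa=1$ and $\int_0^\infty\kappa^2E\,\d\kappa=\tfrac1{2\zeta}$ from~\eqref{eq:energy_spectrum} — yields the claim for $\u'$. For the scaled gradient I would differentiate under the integral: the prefactor $\delta$ exactly compensates the $1/\delta$ from differentiating the phase, the dominant term is treated as above with the variational equation for $\nabla_\x\flow$ replacing~\eqref{eq:mean_flow} and its comparison with $\nabla_\x(\x-(t-s)\overline\u(\x,t))=\bm I-(t-s)\nabla_\x\overline\u(\x,t)$, while the remaining terms from differentiating $\su,\etabb,\ebb,\Pbb$ carry a spare factor $\delta$ and vanish; integrability of the dominating functions is furnished here by the technical condition~g).

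\emph{Assertion (b).} I would first establish the covariance identity, which simultaneously identifies the limit. Writing $\u'_N(\x,t)=\Re\sum_j\tfrac1{\sqrt N}\sum_{n=1}^N X_{jn}(\x,t)$ with $X_{jn}(\x,t)=c_{jn}(\x,t)\,\Pbb(\btheta_{jn};\x,t)\cdot\wn_{jn}$, where $c_{jn}(\x,t)=(\Delta s/p(\kappa_{jn}))^{1/2}\etabb(t-s_{jn};\x,t)\,\ebb(\kappa_{jn};\x,t)\,\exp\{i\tfrac1\delta\kappa_{jn}\btheta_{jn}\cdot(\x-(t-s_{jn})\overline\u(\x,t))\}$ is a complex scalar and $\Pbb$ is real, the independence assumptions kill all cross terms and make the $N$ within-stratum copies identical in law, so the factor $\tfrac1N$ collapses the double sum to $\sum_j\E[\Re X_{j1}(\x,t)\otimes\Re X_{j1}(\x',t')]$. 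Integrating over $\wn_{j1}$ first, using $\E[\Re\wn\otimes\Re\wn]=\E[\Im\wn\otimes\Im\wn]=\bm I$ and $\E[\Re\wn\otimes\Im\wn]=\bm0$, produces the real scalar $\Re(c_{j1}(\x,t)\overline{c_{j1}(\x',t')})$ times $\Pbb(\btheta;\x,t)\Pbb(\btheta;\x',t')^{\top}$; the design of $\wn_N$ then telescopes everything: the weight $1/p(\kappa_{jn})$ cancels against the sampling density $p$, the uniform draw of $s_{jn}$ with $\sum_j\int_{I_j}=\int_\R$ restores the full temporal integral, and the complex normalisation $\E[\wn\otimes\overline\wn]=2\bm I$ together with the $\tfrac12$ from taking real parts matches the factor $2$ in the structural measure. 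The outcome coincides term by term with the covariance of $\u'_{\mathrm{aux}}$ obtained from the isometry, proving the \enquote{Moreover} statement for every $N$. For the distributional convergence I would fix points $(\x_1,t_1),\dots,(\x_\ell,t_\ell)$ and a test vector $\bm a\in\R^{3\ell}$, form the real array $Y_{jn}=\Re\sum_m\bm a_m\cdot X_{jn}(\x_m,t_m)$ — independent across $(j,n)$, i.i.d.\ in $n$, centered, with finite variance by $\int_0^\infty E\,\d\kappa=1$ — and apply the Lindeberg CLT to $\sum_{jn}\tfrac1{\sqrt N}Y_{jn}$, the Lindeberg condition reducing to $\E[Y_{j1}^2\,\ind_{\{|Y_{j1}|>\eps\sqrt N\}}]\to0$, which holds by dominated convergence. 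Cram\'er--Wold then yields convergence to the centered Gaussian law with the covariance computed above, i.e.\ to the corresponding finite-dimensional distribution of the Gaussian field $\u'_{\mathrm{aux}}$; the gradient statement follows identically after differentiating each $X_{jn}$ in $\x$, the finite second moments being supplied by $\int_0^\infty\kappa^2E\,\d\kappa<\infty$ and condition~g).

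\emph{Main obstacle.} I expect the hard part to be the gradient estimate in assertion~(a): controlling $\nabla_\x\flow(s;\x,t)$ through its variational equation, comparing it with $\bm I-(t-s)\nabla_\x\overline\u(\x,t)$, and keeping every error term uniform over $K$, while checking that the product-rule bookkeeping really leaves the $1/\delta$-critical contribution as the only surviving term. A secondary subtlety is that $\overline\u$ is assumed merely continuous (not differentiable) in $t$, so the temporal part of the phase increment is controlled only through a modulus of continuity rather than a clean power of $\delta$; since~(a) asks only for convergence to zero this suffices, but it precludes reading off a rate. In assertion~(b) the only genuinely delicate point is the exact matching of covariances — the interplay of the $\Re$-operation, the complex white-noise normalisation, and the importance-sampling cancellation of $1/p$ against $p$ — after which the central limit theorem is routine.
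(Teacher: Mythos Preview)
Your proposal is correct and follows essentially the same architecture as the paper's proof: isometry plus flow-linearisation error for part~(a), and a CLT-based argument for part~(b). Two minor differences are worth noting.

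For the gradient estimate in~(a), the paper does not pursue the variational equation for $\nabla_\x\flow$ as you outline; instead it introduces a truncated spectrum $E_C(\kappa;\zeta)=\ind_{[0,C]}(\kappa)E(\kappa;\zeta)$ and splits $\ebb=\ebb_C+(\ebb-\ebb_C)$, so that on the truncated part $\kappa$ is uniformly bounded and the argument for $\u'-\u'_{\mathrm{aux}}$ carries over verbatim, while the tail is made small by choosing $C$ large. Your route via the $\min(4,\cdot)$ bound and dominated convergence is an equally valid way of avoiding the spurious $\int\kappa^4 E\,\d\kappa$ requirement that a naive Lipschitz estimate of the cross term would produce; it is arguably cleaner since it avoids the truncation parameter. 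For the non-gradient part the paper in fact computes the bound exactly (obtaining $C^2/(\sz z)\cdot\sup_\tau\|\overline\u(\flow(\tau;\x,t),\tau)-\overline\u(\x,t)\|^2$), using only the Lipschitz estimate $|e^{ia}-e^{ib}|\le|a-b|$ together with $\int\kappa^2 E=1/(2\zeta)$, rather than dominated convergence.

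For part~(b), the paper simply invokes a general stratified Monte Carlo result for random fields (Corollary~\ref{cor:stratMC_randomfields} in the appendix), whose proof is precisely the CLT/Cram\'er--Wold argument you describe, together with the covariance identity of Lemma~\ref{lem:covariance}. So your direct computation is not different in substance, only in packaging: you have effectively reproduced the content of the appendix results inline for the specific integrand at hand.
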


\begin{proof}
In order to verify the assertion of part a), 
first note 
that the isometric property of the stochastic integral \eqref{eq:isometric_property} 
implies the identity
\begin{equation}\label{eq:isometry_on_u'-u'_aux}
\begin{aligned}
& \E\Bigl[ \bigl\| \u'(\x,t) - \u'_{\mathrm{aux}}(\x,t) \bigr\|^2 \Bigr]
\\ & \qquad  = 
\int_{\R^+} \int_{S^2} \int_{ \R} \, \biggl| \, \exp\Bigl\{ i \frac{1}{\delta} \;\! \kappa \;\! \btheta \cdot \flow(s;\x,t)\Bigr\} 
- \exp\Bigl\{ i \frac{1}{\delta} \;\! \kappa \;\! \btheta \cdot \bigl( \x - (t-s) \overline{\u}(\x,t)\bigr) \Bigr\} \biggr|^2
\\ & \qquad \phantom{ = \int_{\R^+} \int_{S^2} \int_{ \R}} \;\; \bigl| \etabb(t-s;\x,t) \, \ebb(\kappa;\x,t) \bigr|^2 \bigl\| \Pbb(\btheta;\x,t) \bigr\|^2 
\,\d s\; U_{S^2}(\d\btheta)\, \d \kappa,
\end{aligned}
\end{equation}
where $U_{S^2}$ denotes the normalized surface measure on $S^2$, compare Model \ref{model:turb_inhom}. 
Considering the difference of the arguments of the 
two exponential functions  
in \eqref{eq:isometry_on_u'-u'_aux}, 
a straightforward calculation shows that 
\begin{align}\label{eq:estimate_xi_linear_xi}
\bigl\| \flow(s;\x,t) - \bigl( \x - (t-s) \overline{\u}(\x,t) \bigr) \bigr\| 
\leq |t-s| 
 \, \sup_{\tau}\, \bigl\| \overline{\u}\bigl( \flow(\tau;\x,t),\tau \bigr) - \overline{\u}(\x,t) \bigr\|,
\end{align}
where the supremum is taken over all $\tau \in [\min(s,t), \max(s,t)]$. 
For the sake of controlling the influence of the factor $1/\delta$ in \eqref{eq:isometry_on_u'-u'_aux}, 
take $C>0$ 
such that the support of $\eta$ lies in the interval $[-C,C]$ 
and 
observe 
that $\etabb$ has the beneficial property that $\etabb(t-s;\x,t) \neq 0$ implies $|t-s| \leq C \delta \st(\x,t)$. 
Combining this, \eqref{eq:isometry_on_u'-u'_aux}, \eqref{eq:estimate_xi_linear_xi}, 
and the fact that $\int_{S^2} \| \Pt \cdot \bm{L}(\x,t) \|^2\, U_{S^2}(\d\btheta) = 2$ 
yields 
\begin{align*}
\E\Bigl[ \bigl\| \u'(\x,t) - \u'_{\mathrm{aux}}(\x,t) \bigr\|^2 \Bigr] &
\leq  C^2 \st(\x,t)^2  \sup_{\tau}\, \bigl\| \overline{\u}\bigl( \flow(\tau;\x,t),\tau \bigr) - \overline{\u}(\x,t) \bigr\|^2\\ & \qquad 2\,\su(\x,t)^2 \int_{\R} \bigl| \etabb(t-s;\x,t) \bigr|^2 \d s \, \int_{\R^+} \kappa^2 \bigl| \ebb(\kappa;\x,t) \bigr|^2 \d\kappa ,
\end{align*}
where the supremum is taken over all $\tau \in [t-C\delta \st(\x,t), t + C\delta\st(\x,t)]$. 
Further employing the identities 
$\int_{\R^+} \kappa^2 | \ebb(\kappa;\x,t) |^2 \d\kappa = 1/(2 \sz(\x,t) z \sx(\x,t)^2)$ and $\int_\R |\etabb(t-s;\x,t)|^2 \d s = 1$ 
following from \eqref{eq:energy_spectrum} and \eqref{eq:basic_integral_prop_eta}, we conclude that
\begin{align} \label{eq:proof_convergence}
\E\Bigl[ \bigl\| \u'(\x,t) - 
\u'_{\mathrm{aux}}(\x,t) 
\bigr\|^2 \Bigr]\leq 
\frac{ C^2 }{\sz(\x,t) z} \sup_{\tau}\, \bigl\| \overline{\u}\bigl( \flow(\tau;\x,t),\tau \bigr) - \overline{\u}(\x,t) \bigr\|^2.
\end{align}
The uniform continuity of  $(\tau,\x,t) \mapsto \overline{\u}(\flow(\tau;\x,t),\tau)$ on compact subsets of $\R\times \R^3 \times \R$ and  the positivity and continuity of $\sz$ 
therefore imply that 
the mean-square error on the left-hand side of \eqref{eq:proof_convergence} 
converges to zero 
as $\delta\to0$, uniformly with respect to  $(\x,t)$ in any fixed compact subset 
of $\R^3\times\R$.
To complete the proof of part a), 
a corresponding convergence for 
$\delta \nabla_\x \u'(\x,t) - \delta \nabla_\x \u'_{\mathrm{aux}}(\x,t)$ in place of 
$\u'(\x,t) - \u'_{\mathrm{aux}}(\x,t)$ can be shown in a similar way. 
Indeed, note that \cite[Lemma 5.4]{AKLMW24} and the isometric property 
\eqref{eq:isometric_property} 
of the stochastic integral 
ensure that 
\begin{equation*}\label{eq:isometry_on_nabla_u'-nabla_u'_aux}
\begin{aligned}
& \E\Bigl[ \bigl\| \delta\;\!\nabla_\x \u'(\x,t) -
 \delta\;\!\nabla_\x \u'_{\mathrm{aux}}(\x,t) 
 \bigr\|^2 \Bigr]
\\ & \qquad  = 
\delta^2 \int_{\R^+} \int_{S^2} \int_{ \R} \, \biggl| \nabla_\x \biggl( \Bigl( \exp\Bigl\{ i \frac{1}{\delta} \;\! \kappa \;\! \btheta \cdot \flow(s;\x,t)\Bigr\} 
- \exp\Bigl\{ i \frac{1}{\delta} \;\! \kappa \;\! \btheta \cdot \bigl( \x - (t-s) \overline{\u}(\x,t)\bigr) \Bigr\} \Bigr) 
\\ & \qquad \phantom{ = 2 \su(\x,t)^2 \int_{\R^+ \times S^2 \times \R}} \;\;  \etabb(t-s;\x,t) \, \ebb(\kappa;\x,t) \, \Pbb(\btheta;\x,t)  \biggr) \biggr|^2 \d s\, U_{S^2}(\d\btheta)\, \d \kappa.
\end{aligned}
\end{equation*}
In this case we additionally introduce a truncated version
of the energy spectrum 
by defining, for any $C > 0$, $E_C(\kappa;\zeta) = \ind_{[0,C]}(\kappa) E(\kappa;\zeta)$ and $\ebb_C(\kappa;\x,t) = \sx^{1/2}(\x,t) E_C(\sx(\x,t) \kappa;\sz(\x,t) z)$, which has the property that $\ebb_C(\kappa;\x,t) \neq 0$ implies $\kappa \leq C/\sx(\x,t)$. The convergence is then established by employing the decomposition $\ebb = \ebb_C+(\ebb - \ebb_C)$ and proceeding in a similar way as before.

Part b) is an immediate consequence of Corollary \ref{cor:stratMC_randomfields}, 
using  
$\Delta_j = \R^+ \times S^2 \times [j \Delta s, (j+1)\Delta s)$, 
$p_j(\kappa,\btheta,s) = p(\kappa)\, \ind_{\Delta_j} (\kappa,\btheta,s) / \Delta s$ 
for $j \in \mathbb{Z}$, and
\begin{align*}
\bm{G}(\x,t,\kappa,\btheta,s) = 
\etabb(t-s;\x,t) 
\exp\Bigl\{ i \frac{1}{\delta}\;\! \kappa\;\! \btheta \cdot \bigl(\x-(t-s)\overline\u(\x,t)\bigr) \Bigr\} \,\ebb(\kappa;\x,t) \,
\Pbb(\btheta;\x,t)  
\end{align*}
in the notation therein.
\end{proof}

As a 
consequence 
of Theorem~\ref{thm:conv}, we obtain that the characteristic flow properties 
established 
in 
\cite[Theorem~5.5]{AKLMW24} 
for the continuous model carry over to the discretized fields $\u'_N$.

\begin{corollary}[Characteristic flow properties]\label{prop:flow_properties}
Let
$\u'_N=(\u'_N(\x,t))_{(\x,t)\in\R^3\times \R}$, $N\in\N$, 
be the family of approximating random fields specified in  Discretization~\ref{discretization}. 
Then, considering the turbulence scale ratio 
$\delta$,  
for every $(\x,t)\in\R^3\times\R$, $N\in\N$ we have that  
\begin{align}\label{eq:kinetic_energy}
\frac{1}{2}\,&\E \Bigl[ \bigl\| \u'_N(\x,t) \bigr\|^2 \Bigr]  \,=\, k(\x,t), 
\\ \label{eq:dissipation}
\lim_{\delta \to 0} \,
\frac12 \,\delta ^2 z  \,&\E \Bigl[ \bigl\| \nabla_{\x} \u'_N(\x,t) + \bigl( \nabla_{\x} \u'_N(\x,t)\bigr){\vphantom{)}}^{\!\top} \bigr\|^2 \Bigr]  \,=\, \frac{\eps(\x,t)}{\nu(\x,t)}, 
\\ \label{eq:divergence}
\lim_{\delta \to 0}  \;
&\E \Bigl[ \bigl| \delta \, \nabla_{\x} \cdot\u'_N(\x,t) \bigr|^2 \Bigr]  \,=\, 0,
\end{align} 
where the 
first equality holds 
for all values of 
$\delta$. 
In addition and consistent with \eqref{eq:kinetic_energy},  
the one-point velocity correlations satisfy 
\begin{equation}\label{eq:one-point_correlation}
\begin{aligned}
\E\bigl[\u'_N(\x,t)\otimes\u'_N(\x,t)\bigr] 
= k(\x,t)\Bigl[\frac{7}{15}\,\bm{L}(\x,t)\cdot\bm{L}(\x,t){\vphantom{)}}^{\!\top} + \frac{1}{5}\,\bm{I}\Bigr]. 
\end{aligned} 
\end{equation}
\end{corollary}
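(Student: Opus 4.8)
The plan is to observe that every quantity appearing in \eqref{eq:kinetic_energy}--\eqref{eq:one-point_correlation} is a second moment of $\u'_N$ or of its mean-square gradient, and then to transport these second moments from the continuous field $\u'$ to $\u'_N$ via the auxiliary field $\u'_{\mathrm{aux}}$, using the two parts of Theorem~\ref{thm:conv}. The expressions in \eqref{eq:kinetic_energy} and \eqref{eq:one-point_correlation} are one-point second moments of $\u'_N$, whereas those in \eqref{eq:dissipation} and \eqref{eq:divergence} are one-point second moments of the matrix field $\nabla_\x\u'_N+(\nabla_\x\u'_N)^\top$ and of the scalar field $\nabla_\x\cdot\u'_N$, hence are determined by the covariance of $\nabla_\x\u'_N$, which in turn is obtained by differentiating the field covariance. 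Since Theorem~\ref{thm:conv}\,b) guarantees that for every $N$ the covariance structure of $\u'_N$ is identical to that of $\u'_{\mathrm{aux}}$, the same holds for the covariances of $\nabla_\x\u'_N$ and $\nabla_\x\u'_{\mathrm{aux}}$, and I would replace $\u'_N$ by $\u'_{\mathrm{aux}}$ in all four identities. It then suffices to verify \eqref{eq:kinetic_energy}--\eqref{eq:one-point_correlation} with $\u'_{\mathrm{aux}}$ in place of $\u'_N$.

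For the exact relations \eqref{eq:kinetic_energy} and \eqref{eq:one-point_correlation}, valid for every $\delta$, I would show that the one-point second moment of $\u'_{\mathrm{aux}}$ coincides \emph{exactly} with that of $\u'$. Applying the isometric property \eqref{eq:isometric_property} to $\E[\u'_{\mathrm{aux}}(\x,t)\otimes\u'_{\mathrm{aux}}(\x,t)]$ and to $\E[\u'(\x,t)\otimes\u'(\x,t)]$ produces identical integrands: the advection exponential enters only through its squared modulus, which equals $1$ because its argument is real, so the linearization of the mean flow $\flow$ has no effect at the level of one-point moments. Hence $\E[\u'_{\mathrm{aux}}\otimes\u'_{\mathrm{aux}}]=\E[\u'\otimes\u']$, and \eqref{eq:one-point_correlation} follows from the corresponding exact identity for the continuous field in \cite[Theorem~5.5]{AKLMW24}; taking the trace and dividing by two gives \eqref{eq:kinetic_energy}.

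For the asymptotic relations \eqref{eq:dissipation} and \eqref{eq:divergence}, I would invoke Theorem~\ref{thm:conv}\,a), which provides the mean-square convergence $\delta\,\nabla_\x\u'_{\mathrm{aux}}(\x,t)\to\delta\,\nabla_\x\u'(\x,t)$ in $L^2(\O)$ as $\delta\to0$. Because the symmetrization $\nabla_\x\u'+(\nabla_\x\u')^\top$ and the divergence $\nabla_\x\cdot\u'$ are bounded linear functions of the gradient, this convergence carries over to $\delta\bigl(\nabla_\x\u'_{\mathrm{aux}}+(\nabla_\x\u'_{\mathrm{aux}})^\top\bigr)$ and to $\delta\,\nabla_\x\cdot\u'_{\mathrm{aux}}$. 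Writing the left-hand sides of \eqref{eq:dissipation} and \eqref{eq:divergence} as constant multiples of squared $L^2(\O)$-norms of these quantities, and combining the reverse triangle inequality with the limits $\tfrac12 z\,\bigl\|\delta\bigl(\nabla_\x\u'+(\nabla_\x\u')^\top\bigr)\bigr\|_{L^2}^2\to\eps/\nu$ and $\bigl\|\delta\,\nabla_\x\cdot\u'\bigr\|_{L^2}\to0$ from \cite[Theorem~5.5]{AKLMW24}, yields the claimed limits with $\u'_{\mathrm{aux}}$, hence with $\u'_N$, in place of $\u'$.

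I expect the main obstacle to be the limit interchange in this last step: one must ensure that the $L^2(\O)$-convergence of the scaled gradients as $\delta\to0$ is strong enough to transfer the $\delta\to0$ limits of the quadratic functionals from $\u'$ to $\u'_{\mathrm{aux}}$. This is somewhat delicate because both the field and the differential operator carry the factor $\delta$ simultaneously, but it is resolved by the reverse triangle inequality in $L^2(\O)$ together with the existence of the limiting norms for $\u'$. By comparison, the reduction to $\u'_{\mathrm{aux}}$ through the equality of covariance structures and the insensitivity of the one-point moments to the advection phase are routine.
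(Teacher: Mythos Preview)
Your proposal is correct, and the initial reduction from $\u'_N$ to $\u'_{\mathrm{aux}}$ via the equality of covariance structures in Theorem~\ref{thm:conv}\,b) is exactly what the paper does (the paper phrases this by invoking \cite[Lemma~A.8]{AKLMW24} to say that the quantities in \eqref{eq:kinetic_energy}--\eqref{eq:one-point_correlation} depend only on the covariance structure).  For the exact identities \eqref{eq:kinetic_energy} and \eqref{eq:one-point_correlation} your observation that the advection phase has unit modulus and therefore drops out of the one-point isometric integral is precisely how the paper's instruction to ``reason along the lines of the proof of \cite[Theorem~5.5]{AKLMW24} with $\flow$ replaced by its linearization'' plays out, so here the two arguments coincide.

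The genuine difference lies in your treatment of the asymptotic identities \eqref{eq:dissipation} and \eqref{eq:divergence}.  The paper verifies them for $\u'_{\mathrm{aux}}$ by redoing the relevant computation from \cite[Theorem~5.5]{AKLMW24} with the linearized mean flow (which is, as the paper notes, easier to analyze than $\flow$).  You instead bypass any recomputation by invoking Theorem~\ref{thm:conv}\,a) to obtain $\|\delta\nabla_\x\u'-\delta\nabla_\x\u'_{\mathrm{aux}}\|_{L^2(\Pr)}\to0$ and then transferring the known $\delta\to0$ limits for $\u'$ from \cite[Theorem~5.5]{AKLMW24} to $\u'_{\mathrm{aux}}$ via the reverse triangle inequality.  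Your route is cleaner in that it uses only results already on the table and requires no inspection of the proof in \cite{AKLMW24}; the paper's route is more self-contained in that it does not lean on the gradient estimate of Theorem~\ref{thm:conv}\,a), at the price of repeating a nontrivial calculation.  Both are valid, and the ``delicate'' point you flag (simultaneous $\delta$-dependence of field and scaling) is indeed handled by the reverse triangle inequality exactly as you describe.
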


\begin{proof}
Observe that \cite[Lemma A.8]{AKLMW24} guarantees that the expected values 
appearing in \eqref{eq:kinetic_energy}--\eqref{eq:one-point_correlation} 
are fully determined by 
the covariance structures of the  
considered random fields. 
The fact that the covariance structures of 
$\u'_{\mathrm{aux}}$ and $\u_N$ coincide according to Theorem~\ref{thm:conv}    
therefore implies that it is sufficient to 
verify the assertions \eqref{eq:kinetic_energy}--\eqref{eq:one-point_correlation} for $\u'_{\mathrm{aux}}$ in place of $\u_N$. 
This in turn is achieved by reasoning directly along the lines of the proof of \cite[Theorem 5.5]{AKLMW24}, 
replacing the mean flow function $\flow(s;\x,t)$ appearing therein by its easier to analyze linearized counterpart $\x + (s-t)\overline{\u}(\x,t)$.
\end{proof}

\section{Algorithmic implementation}
\label{sec:implementation}

This section 
discusses 
the algorithmic implementation 
of the discretized 
inhomogeneous random field 
model 
analyzed 
in the previous section. 
The 
sampling 
procedure 
presented in Algorithm~\ref{algorithm} 
below 
is particularly flexible in that it 
allows for localized  
simulations 
of a sample path of 
the discretized field $\u'_N$ at evaluation points $(\x,t)$ that 
may be determined dynamically and need not be known in advance. 

We first 
reformulate 
the representation formula \eqref{eq:uN'2} 
for 
$\u'_N$ 
in Discretization~\ref{discretization} 
in a 
suitable 
way by 
explicitly 
writing out 
the real parts of the 
involved 
complex products. 
Further using 
the abbreviations 
$\etabb(s;\x,t)$, $\mathbbm{e}(s;\x,t)$, and $\Pbb(\btheta;\x,t)$ 
for the scaled time integration kernel, the scaled energy spectrum, and the 
involved 
transformation matrix 
given 
in  \eqref{eq:def_etabb_fxbb},  
the representation formula reads 
\begin{equation}\label{eq:u'N_implementation}
\begin{aligned}
\u'_N(\x,t) = \sum_{j\in\mathbb Z}\frac{1}{\sqrt{N}}\sum_{n=1}^N
&\, \etabb(t-s_{jn};\x,t) \,
\mathbbm{e}(\kappa_{jn};\x,t) 
\;\!\Bigl(\frac{\Delta s}{ p(\kappa_{jn}) }\Bigr)^{1/2} 
\\ & 
\,\Pbb(\btheta_{jn};\x,t)\cdot 
\bigl[\cos(\alpha_{jn})\, \Re\;\!\wn_{jn} - \sin(\alpha_{jn})\, \Im\;\!\wn_{jn}\bigr],
\end{aligned}
\end{equation}
where we have 
additionally 
introduced 
the notation 
\begin{align*}
\alpha_{jn}
&= 
\frac1\delta \kappa_{jn}\btheta_{jn} \cdot \bigl( \x-(t-s_{jn})\overline\u(\x,t) \bigr).
\end{align*}

The random quadrature points $(\kappa_{jn},\btheta_{jn},s_{jn})$ 
appearing 
in \eqref{eq:u'N_implementation} 
can be sampled by standard simulation techniques \cite{MNR12,RK17}. 
Specifically, 
we 
employ
the inverse transformation method 
in order 
to sample the  
wave numbers $\kappa_{jn}$ 
according the probability density function 
$p(\kappa)$,
while the 
uniformly  distributed 
orientation vectors $\btheta_{jn}$ on the unit sphere $S^2$ are obtained 
via radial projection of 
standard normal random vectors in $\R^3$. 
The noise vectors $\Re\;\!\wn_{jn}$, $\Im\;\!\wn_{jn}$, 
whose distribution 
is not fully specified in Discretization~\ref{discretization}, are 
chosen 
to be standard 
normally distributed
as well. 
Note that the covariance structure of the noise vectors is adjusted 
via the anisotropy factor $\bm{L}(\x,t)$  
occurring in $\Pbb(\btheta_{jn};\x,t)$.
In order to 
achieve 
prescribed directional weightings for the  one-point velocity correlations (Reynolds stresses), 
this factor can be calculated by means of a Cholesky decomposition of 
an auxiliary matrix related to the 
Reynolds stress tensor; compare \cite[Theorem~5.5]{AKLMW24} and  Corollary~\ref{prop:flow_properties} above. 
In the case of isotropic one-point velocity correlations
the factor simplifies to the identity matrix $\bm{L}(\x,t)=\bm{I}$. 

From an application point of view, it is desirable to be able to 
efficiently 
simulate 
a sample path of $\u'_N$ 
at 
selected 
evaluation points $(\x,t)$ that may be 
determined 
successively 
in the course of the simulation process. 
As 
the temporal quadrature points $s_{jn}$ are 
located in 
the 
respective 
stratification intervals $I_j=[j\Delta s,(j+1)\Delta s)$, 
it is clear that 
for a given evaluation point $(\x,t)$ 
the 
sum $\sum_{j\in\mathbb{Z}}$ 
in \eqref{eq:u'N_implementation} can be restricted 
to those indices $j$ which represent 
intervals
that overlap with the 
support 
of the weight function $s \mapsto \etabb(t-s;\x,t)$.
This support is bounded and 
depends on both the current time 
$t$ and the spatial evaluation point $\x$,  
as   
its length 
is
adjusted in terms of the temporal scaling factor $\st(\x,t)=k(\x,t)/\eps(\x,t)$ 
in \eqref{eq:def_etabb_fxbb}.
In order to ensure consistency of evaluations of a sample path of $\u'_N$ at two distinct 
spatio-temporal 
evaluation points, 
it is necessary to use identical samples  values $\kappa_{jn}$, $\btheta_{jn}$, $s_{j,n}$, $\wn_{j,n}$ 
for both evaluations if 
a
stratification interval $I_j$ contributes to both of them.  
It is therefore crucial to systematically keep track of the stratification intervals and random numbers 
employed in the simulation.   

Algorithm~\ref{algorithm} below describes 
a procedure for generating samples of 
$\u'_N$ at a given time point $t$ and one or multiple spatial evaluation points $\x$ 
in such a way that consistency with possible previous evaluations of the sample path of $\u'_N$ at previous time points 
$t_\text{old}\le t$ 
in terms of  the employed random numbers
is guaranteed.  
For definiteness we assume that the support of the time integration kernel $\eta(s)$ 
is 
given by 
the compact 
interval with boundary points $\pm C$, where $C\in\R^+$. 
The support of the 
scaled 
kernel 
$s \mapsto \etabb(t-s;\x,t)$ 
thus 
coincides with the  
compact 
interval with boundary points $t\pm \delta\;\!\st(\x,t) C$. 
In order to indicate the 
range of stratification intervals $I_j$ contributing to the 
evaluations 
of $\u'_N$ 
at the current time point $t$,  
we employ 
the indices 
\begin{equation} \label{eq:ind_lowerupper}
\verb!ind_lower! = \big\lfloor \bigl(t - \delta \max_{\x}\st(\x,t) \;\! C\bigr) / \Delta s \big\rfloor, 
\quad 
\verb!ind_upper!  = \big\lceil \bigl(t + \delta \max_{\x}\st(\x,t) \;\! C\bigr) / \Delta s \big\rceil; 
\end{equation}
see Figure~\ref{fig:implementation} for an illustration. 
Here 
$\lfloor\cdot \rfloor$ and $\lceil \cdot \rceil$ denote the floor and ceiling function, respectively, and
the maximum is taken locally over all 
spatial evaluation 
points 
$\x$ where the random field $\u_N'$ has to be sampled at 
time $t$.
In addition, we use the index 
\begin{equation} \label{eq:ind_lower_glob}
\verb!ind_lower_glob! = \big\lfloor \bigl(t - \delta \, \max_{(\tilde\x,\tilde t\;\! )}\st(\tilde\x,\tilde t\;\! ) \;\! C\bigr) / \Delta s \big\rfloor,
\end{equation}
as a lower bound for those indices $j$ 
which correspond to stratification intervals $I_j$ that are 
potentially relevant for 
future 
evaluations of the sample path of $\u'_N$ at subsequent time points $t_{\text{new}}\ge t$. 
In \eqref{eq:ind_lower_glob} the maximum is taken globally over all points $(\tilde \x,\tilde t\;\!)$ in the 
spatio-temporal domain of simulation of the underlying $k$-$\eps$ model. 
In view of the necessity to employ consistent sample values 
$\kappa_{jn}$, $\btheta_{jn}$, $s_{j,n}$, $\wn_{j,n}$, 
the range of potentially relevant stratification intervals $I_j$ 
indicated by 
the lower bound in \eqref{eq:ind_lower_glob} 
and the upper index in \eqref{eq:ind_lowerupper} has to be compared with 
a corresponding range associated to the preceding evaluation time 
$t_\text{old}$ 
in order to identify those intervals $I_j$ for which previously drawn sample values 
have to reused.
The latter range is 
specified 
by suitable indices  $\verb!ind_lower_saved!$ and $\verb!ind_upper_saved!$, compare Figure~\ref{fig:implementation}. 
 
\begin{figure}
  \centerline{\includegraphics[trim={3cm 3.5cm 2cm 3.3cm}, clip, scale=0.8]{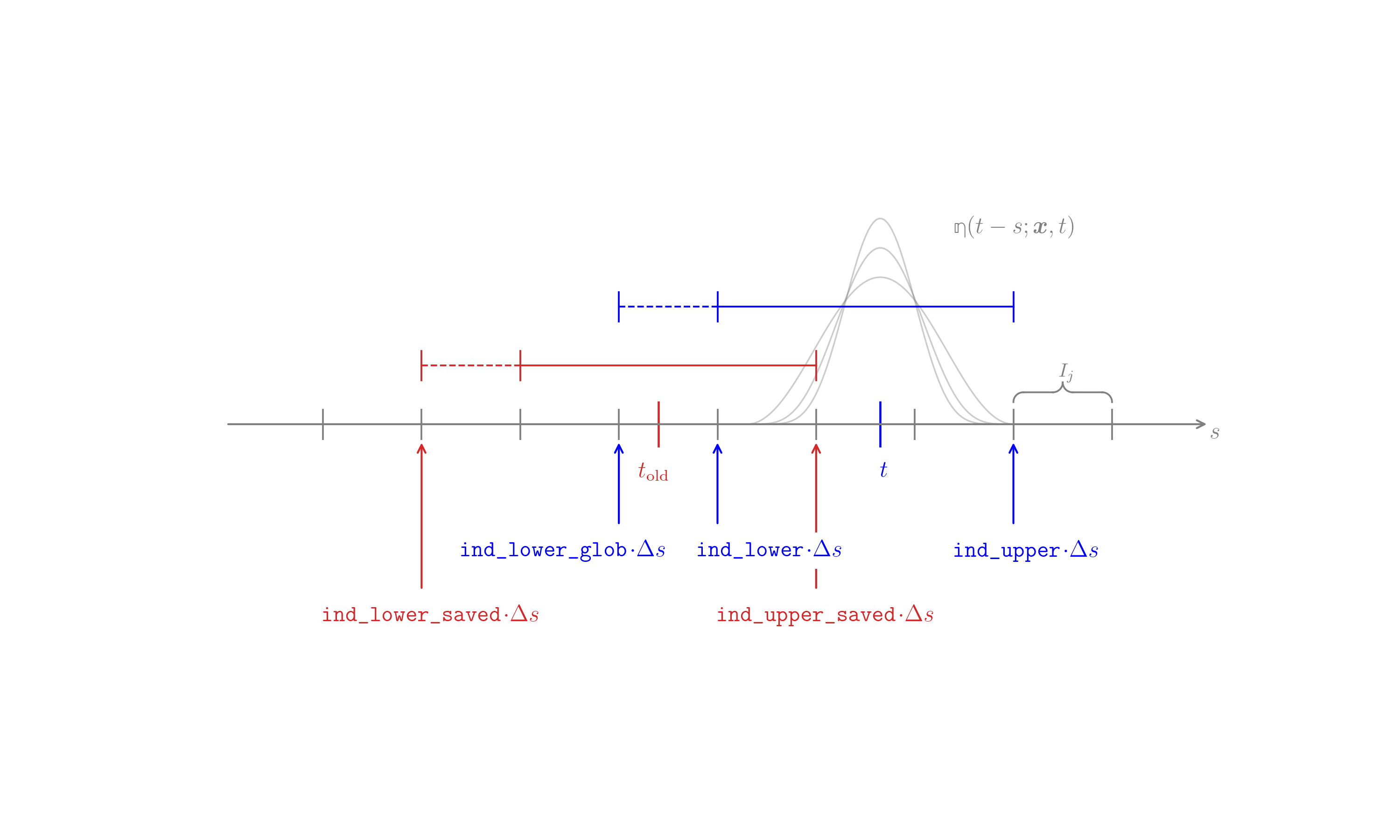}}
  \caption{
  Illustration of the indices used in Algorithm~\ref{algorithm}. 
  The sample path of $\u'_N$ is 
  evaluated at time $t$ and one or multiple spatial evaluation points $\x$, 
  specifying one or multiple scaled time integration kernels $s\mapsto \etabb(t-s;\x,t)$ 
  with possibly different supports. 
  The range indicated by the 
  horizontal 
  solid blue line comprises those intervals $I_j$ that contribute to the evaluations at time $t$. See the text for details.
  }
  \label{fig:implementation}
\end{figure}

For the formulation of Algorithm~\ref{algorithm} we assume that 
the numerical parameters $N$ and $\Delta s$ have been 
chosen 
and that the global paramters $\delta$, $z$, and $\max_{(\tilde\x,\tilde t\;\! )}\st(\tilde\x,\tilde t\;\! )$ are available. 

\begin{algorithm}[Sampling procedure] \label{algorithm}
\phantom{.} \vspace{3mm}\\
\textbf{output}\,:  \;
\begin{minipage}[t]{0.9\textwidth}
sample of $\u'_N$ at time $t$ and one or multiple spatial evaluation points $\x$,  
consistent with \\
possible previous evaluations of the sample path of $\u'_N$ at previous time points 
in terms \\ of  the employed random numbers
\end{minipage}
\vspace{2.3mm}\\
\textbf{input}\,:  \hspace*{-4.2mm}
\begin{minipage}[t]{0.92\textwidth}
\begin{itemize}
\item 
evaluation time $t$ and 
one or multiple spatial evaluation points $\x$
\item 
flow data $\overline\u$, $\nu$, $k$, $\eps$ 
and Reynolds stress tensor $\bm{R}$ 
at evaluation points 
$(\x,t)$
\item
range of 
indices 
$j = $ $\verb!ind_lower_saved!,$  $\ldots ,  \verb!ind_upper_saved!-1$
and associated \\
sample values $s_{jn}$, $\kappa_{jn}$, $\btheta_{jn}$, $\Re\;\!\wn_{jn}$, $\Im\;\!\wn_{jn}$ 
saved from 
possible previous evaluations of  \\
the sample path of $\u'_N$ at prededing time point
\end{itemize}
\end{minipage}
\vspace{4mm}\\
determine indices \verb!ind_lower!, \verb!ind_upper! and \verb!ind_lower_glob! according to \eqref{eq:ind_lowerupper} and \eqref{eq:ind_lower_glob} 
\vspace{4mm}\\
\textbf{if} \;
\begin{minipage}[t]{0.8\textwidth}
sample path of $\u'_N$ has not been evaluated previously\,:  
\end{minipage}
\vspace{3.5mm}\\
\hspace*{10.5mm}
\begin{minipage}[t]{0.9\textwidth}
generate and save sample values $s_{jn}$, $\kappa_{jn}$, $\btheta_{jn}$, $\Re\;\!\wn_{jn}$, $\Im\;\!\wn_{jn}$ 
corresponding to indices \\
$j = \verb!ind_lower_glob! , \ldots ,  \verb!ind_upper!-1$, \,\! $n=1,\ldots, N$
\vspace{3.5mm}\\
save indices
$\verb!ind_lower_saved! \leftarrow \verb!ind_lower_glob!$, 
$\verb!ind_upper_saved! \leftarrow \verb!ind_upper!$
\end{minipage}
\vspace{4mm}\\
\textbf{else}\,: 
\vspace{2mm}\\
\hspace*{10.5mm}
\begin{minipage}[t]{0.9\textwidth}
delete sample values $s_{jn}$, $\kappa_{jn}$, $\btheta_{jn}$, $\Re\;\!\wn_{jn}$, $\Im\;\!\wn_{jn}$ 
with
indices $j < \verb!ind_lower_glob!$ \\
and save index $\verb!ind_lower_saved! \leftarrow \verb!ind_lower_glob!$
\vspace{3.5mm}\\
\textbf{if} \;
\begin{minipage}[t]{0.8\textwidth}
$\verb!ind_upper_saved! < \verb!ind_upper!$\,:
\end{minipage}
\vspace{3.5mm}\\
\hspace*{10.5mm}
\begin{minipage}[t]{0.93\textwidth}
generate and save sample values $s_{jn}$, $\kappa_{jn}$, $\btheta_{jn}$, $\Re\;\!\wn_{jn}$, $\Im\;\!\wn_{jn}$ 
corresponding to 
indices \\
$j = \max\bigl(\verb!ind_lower_glob!\,,\;\! \verb!ind_upper_saved!\bigr) , \ldots ,  \verb!ind_upper!-1$, \,\! $n=1,\ldots, N$
\vspace{3.5mm}\\
save index $\verb!ind_upper_saved! \leftarrow \verb!ind_upper!$
\end{minipage}
\end{minipage}
\vspace{4mm}\\
calculate sample of $\u'_N$ at evaluation points $(\x,t)$ according to \eqref{eq:u'N_implementation} 
with summation index $j$\\ 
 ranging from $\verb!ind_lower!$ to $\verb!ind_upper!-1$ 
 \vspace{3.5mm}\\
 return sample values $\u'_N(\x,t)$ 
 \vspace{1.5mm}
\end{algorithm}

We 
end 
this section by 
noting 
that a suitable choice for the stratification length $\Delta s$ 
is half the length of the smallest possible support of 
the scaled time integration kernel 
$s \mapsto \etabb(t-s;\x,t)$,    
i.e., 
$ 
\Delta s = \delta \min_{(\tilde \x,\tilde t\;\!)}\st(\tilde \x,\tilde t\;\!)\;\!  C,
$
where the minimum is taken globally over all points $(\tilde \x,\tilde t\;\!)$ in the relevant spatio-temporal domain of simulation of the underlying $k$-$\eps$ model. 
In dependence on the number $N$ of quadrature points per stratification interval, 
this ensures an adequate 
control of the  
minimum approximation quality.

\section{Simulation results and discussion of the model}\label{sec:sim_res}

Here we present 
various 
numerical simulation results illustrating the specific features of 
our inhomogenous random field model. 
Subsection~\ref{sec:parameters} addresses the influence of the model parameters
on the generated 
fluctuations,  
with 
particular emphasis on the 
$(\x,t)$-dependent 
scaling factors 
and the 
inhomogeneous 
mean flow function.  
In Subsection~\ref{sec:ergodicity} we demonstrate the ergodicity properties of the model 
by recovering the underyling flow fields of kinetic turbulent energy 
$k$ 
and dissipation rate 
$\eps$ 
in terms of local 
sample path 
averages in space and time. 
Subsection~\ref{sec:Kolmogorov} 
concerns 
the reproducibility of Kolmogorov's two-thirds law for the second moments of the spatial velocity increments
 in the inertial subrange. 
To facilitate 
the 
exposition, 
the simulations are set up in 
stylized scenarios 
that allow to 
emphasize  
different 
aspects 
separately from each other.

Throughout this section 
we close Model~\ref{model:turb_inhom} by employing  
the 
spatial energy 
spectrum 
$E$ 
and the time integration kernel 
$\eta$  
specified in \cite[Examples 2.3 and 2.4]{AKLMW24}. 
All 
presented 
simulations are based on 
the numerical approximation scheme 
in Discretization~\ref{discretization}   
and its algorithmic implementation as 
described 
in Section~\ref{sec:implementation} above. 
The 
reference density 
for the random wave numbers 
is 
taken as 
$p(\kappa)=E(\kappa;z)$. 
The 
stratification length $\Delta s$ is 
chosen as  
half the length of the smallest possible support of 
the scaled time integration kernel 
$s \mapsto \etabb(t-s;\x,t)$
in each considered scenario.

\subsection{Significance of the model parameters} 
\label{sec:parameters} 

In this subsection we 
illustrate  
the influence of the 
parameters 
involved 
in the definition of 
the inhomogeneous turbulence field 
in Model~\ref{model:turb_inhom}.
The 
main focus lies on the 
$(\x,t)$-dependent 
scaling factors 
$\sx=k^{3/2}/\eps$, $\st=k/\eps$,  
$\su=k^{1/2}$, 
and $\sz=\eps\nu/k^2$, 
specifying 
the
turbulence scales 
and 
the 
inverse 
turbulent viscosity ratio 
(inverse turbulence Reynolds number) 
prescribed by the flow fields 
$k$, $\eps$, and $\nu$. 
Different scenarios for the flow fields are considered, 
each of which  
highlights 
one of the scaling factors varying 
in space or time,   
while the other factors are kept constant 
unless interdependencies imply otherwise.  
We also discuss the influence of the mean flow function $\flow$ 
defined in \eqref{eq:mean_flow},  
modeling 
the advection of the turbulent structures along the mean velocity field $\overline \u$. 
In order to simplify the interpretation of the 
simulation results, 
the 
anisotropy factor   
$\bm{L}$ 
is chosen as 
the identity matrix 
throughout,   
$\bm{L}(\x,t)=\bm{I}$, 
yielding 
isotropic one-point velocity correlations.  

We begin by 
presenting 
streamline plots based on a two-dimensional 
variant 
of our 
model, 
which permits 
to illustrate some of the features of the three-dimensional model in a simplified way.  
To this end, we define an 
$\R^2$-valued 
fluctuation field $\u'=(\u'(\x,t))_{(\x,t)\in\R^2\times \R}$ in analogy to Model~\ref{model:turb_inhom}, 
replacing the unit sphere $S^2$ in the representation formula \eqref{eq:u'} by the unit circle $S^1$ and employing an underlying white noise with values in $\C^2$ instead of $\C^3$. 
We remark that this two-dimensional analogue is used merely for 
illustration purposes
and 
is not intended to 
capture 
the well-known 
structural differences between three-dimensional 
and two-dimensional 
turbulence 
\cite[Chapter~10]{Dav15}, 
which lie beyond the scope of this article. 
In particular,  
we 
employ 
the energy spectrum from  \cite[Example~2.3]{AKLMW24},   
obeying  
Kolmogorov's $5/3$ law for three-dimensional turbulence. 
Approximate realizations 
of the two-dimensional fluctuation field at a fixed time point $t$ are shown in 
Figures~\ref{fig:2D_sx}--\ref{fig:2D_su} 
in the form of streamlines on 
a rectangular domain.  
Here the  
turbulence scale ratio is 
chosen as 
$\delta = 0.08$, 
the mean velocity field $\overline \u$ is assumed to be identically zero, and the numerical parameter 
governing the number of quadrature points
is taken 
as 
$N=4\,000$. 


\begin{figure}[h!]
	\centerline{\includegraphics[width=.8\textwidth]{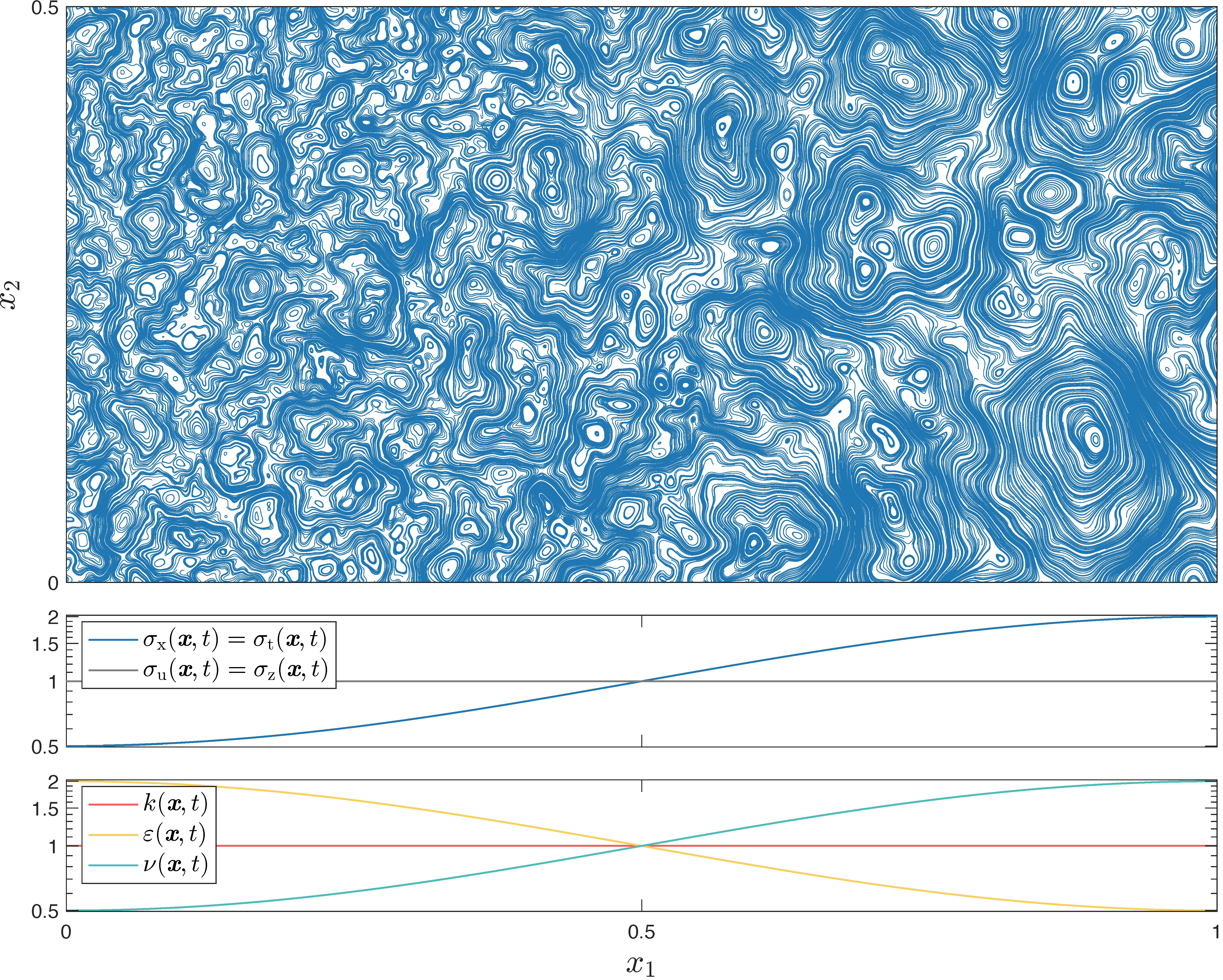}}
	\caption{Streamline plot showing an approximate realization of a two-dimensional analogue of Model~\ref{model:turb_inhom} at a fixed time point,   
	with spatial scaling 
	function 
	$\sx=k^{3/2}/\eps$ 
	increasing along the $x_1$-axis  
	by a factor of four. 
	The scaling functions and underlying flow quantities are depicted in semi-log plots. 
	See the text for details.
	} 
	 \label{fig:2D_sx}	
\end{figure}

Figure~\ref{fig:2D_sx} presents a scenario in which the spatial scaling 
function
$\sx$ 
increases along the $x_1$-axis by a factor of 
four 
and does not depend on 
the $x_2$-coordinate, 
whereas 
$\su$ 
and 
$\sz$ 
are constant with value one.  
Specifically, $\sx$ has an $S$-shaped graph 
on a logarithmic scale 
and is given by 
$\sx(\x,t) = 2^{\;\!\cos(\pi + \pi x_1)}$   
for $\x=(x_1,x_2)\in [0,1]\times[0,0.5]$. 
Accordingly, the 
flow fields of kinetic turbulent energy, dissipation rate, and kinematic viscosity 
are chosen as 
$k(\x,t)=1$, $\eps(\x,t)=2^{\;\!-\!\cos(\pi + \pi x_1)}$, and $\nu(\x,t)=2^{\;\!\cos(\pi + \pi x_1)}$. 
The 
characteristic value for the 
inverse turbulent viscosity ratio is  
set to 
$z=0.005$, 
and we further note that  
$k$ being 
normalized 
entails the identity $\st=\sx$. 
In 
agreement 
with the behavior of the 
spatial scaling factor 
$\sx$, 
it can be seen that 
the turbulent structures grow in size 
from left to right, with a 
scale 
ratio of roughly four 
in regard to 
the edge regions 
near $x_1=0$ and $x_1=1$. 
Apart from the 
difference in scale,  
the composition of the  
structures 
in terms of 
superposed vortices of different sizes 
appears to be similar in all regions, 
reflecting the fact that the 
factor $\sz$ 
governing the shape of the energy spectrum
is held constant.

\begin{figure}[h!]
	\centerline{\includegraphics[width=.8\textwidth]{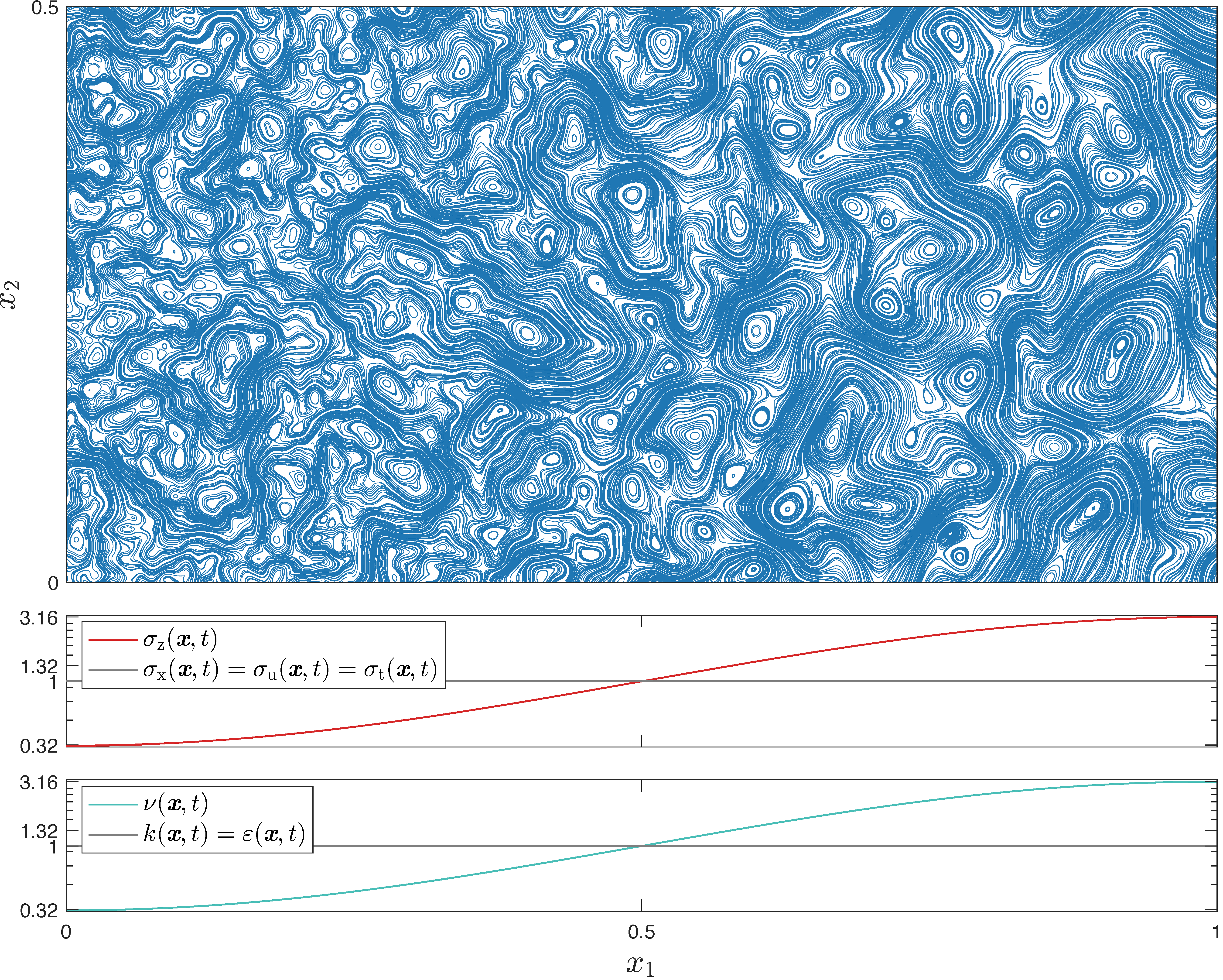}}
	\caption{Streamline plot showing an approximate realization of a two-dimensional analogue of Model~\ref{model:turb_inhom} at a fixed time point,  
	with 
	viscosity 
	scaling 
	function 
	$\sz=\eps\nu/k^2$ 
	increasing along the $x_1$-axis 
	by a factor of ten. 
	The scaling functions and underlying flow quantities are depicted in semi-log plots.  
	See the text for details.
	} 
	 \label{fig:2D_sz}	
\end{figure}

In contrast, the scenario in Figure~\ref{fig:2D_sz}  involves 
an increase 
of $\sz$ along the $x_1$-axis by a factor of ten, while $\sx$, $\su$, and $\st$ are constant with value one. 
Here we assume
$\sz(\x,t)=\nu(\x,t)=10^{\;\!\cos(\pi+\pi x_1)/2}$ 
and $k(\x,t)=\eps(\x,t)=1$ 
as well as 
$z=0.01$.   
Recalling that the factor 
$\sz=\eps\nu/k^2$ 
is associated to 
the inverse of the turbulent viscosity ratio 
and 
the inverse of the turbulence Reynolds number, 
we note that 
the 
$3/4$\;\!th 
power of 
$\sz z=(\eps\nu/k^2)(\eps_0\nu_0/k_0^2)$ 
represents the scale ratio between the turbulent fine-scale (Kolmogorov) and large-scale structures  
\cite[Section~6.3]{Pope00}. 
Moreover, in
the employed model spectrum from \cite[Example~2.3]{AKLMW24} the parameter  
$\zeta = \sz z$
implicitly 
determines the transition wave numbers 
$0<\kappa_1(\zeta) < \kappa_2(\zeta)<\infty$ 
indicating the  
inertial subrange, 
whose width increases as 
$\zeta$ 
decreases. 
Specifically, 
the transition 
wave number 
$\kappa_1(\zeta)$ 
associated to the turbulent large-scale structures 
shows 
only a minor dependence on $\zeta$, 
while 
the transition wave number 
$\kappa_2(\zeta)$ 
associated to the turbulent small-scale structures 
tends 
to infinity as $\zeta\to 0$. 
In accordance with 
these interrelations 
and the fact that the spatial scaling factor $\sx$ is held constant, 
it can be observed that 
the size of 
the large-scale structures in the streamline plot 
in Figure~\ref{fig:2D_sz} 
is roughly uniform 
throughout the 
rectangular domain.  
In addition, 
the large-scale vortices are superposed by smaller 
vortices, 
and  
it is visible 
that both the rate and the range of 
the smaller vortex sizes 
decrease as 
the value of the scaling factor $\sz$ 
increases.  
The 
composition of the 
turbulent structures on the left-hand side 
can thus be attributed 
to local energy spectra 
involving a wider variety of different wavenumbers 
than the comparably narrow spectra 
corresponding to 
the structures on the right-hand side.

\begin{figure}[h!]	
	\centerline{\includegraphics[width=.8\textwidth]{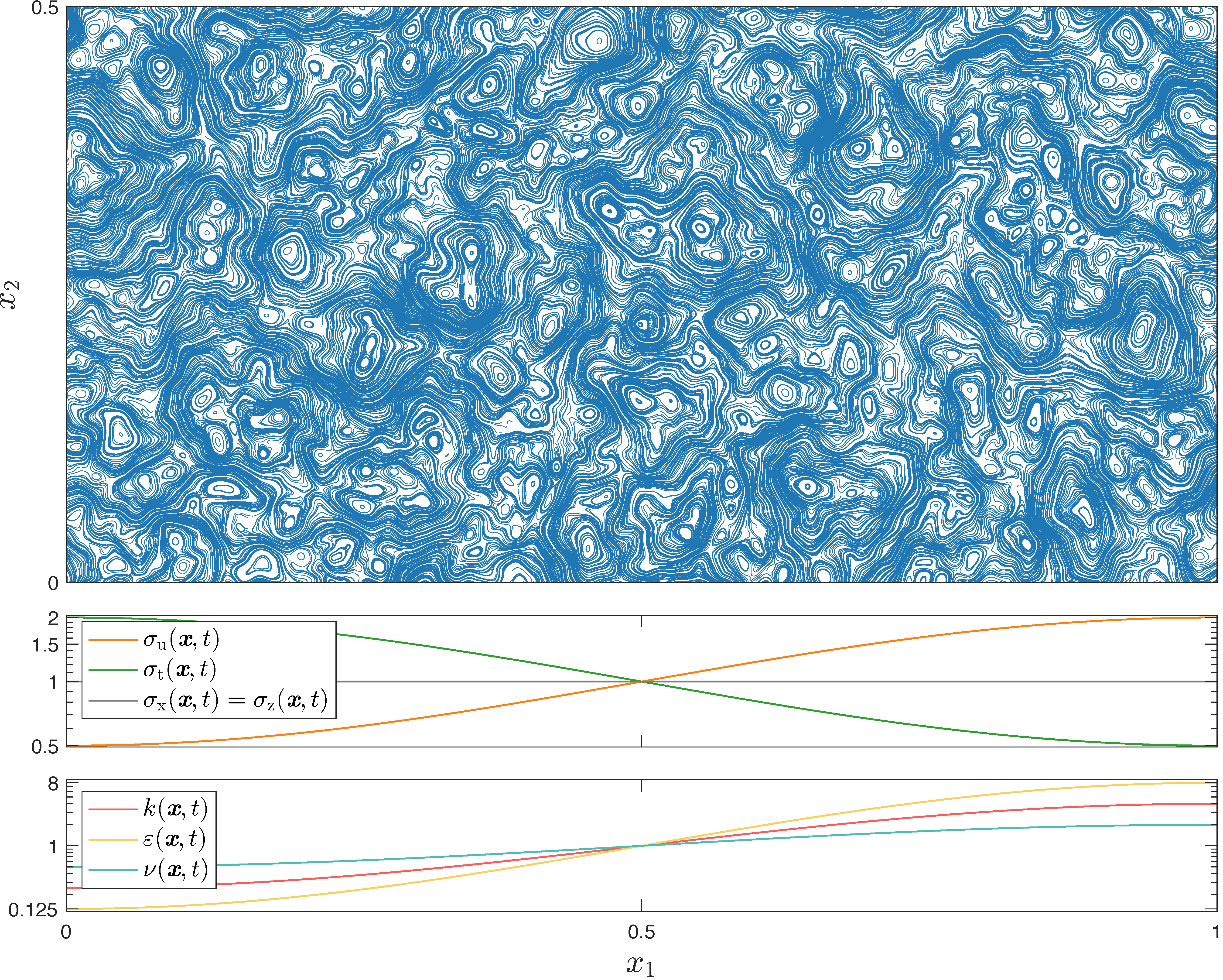}}
	\caption{Streamline plot showing an approximate realization of a two-dimensional analogue of Model~\ref{model:turb_inhom} at a fixed time point, 
	with velocity scaling 
	function 
	$\su=k^{1/2}$ 
	increasing along the $x_1$-axis  
	by a factor of four. 
	The scaling functions and underlying flow quantities are depicted in semi-log plots.   
	See the text for details.
	} 
	\label{fig:2D_su}
\end{figure}

The final streamline plot in Figure~\ref{fig:2D_su} 
focuses on the velocity scaling function $\su$,  
which is assumed to increase 
from left to right 
by a factor of four, 
while $\sx$ and $\su$ are held constant with value one. 
The scenario is specified by 
$\su(\x,t)=2^{\;\!\cos(\pi+\pi x_1)}$ 
with flow fields 
$\nu(\x,t)=2^{\;\!\cos(\pi+\pi x_1)}$, 
$k(\x,t)=4^{\;\!\cos(\pi+\pi x_1)}$,
$\eps(\x,t)=8^{\;\!\cos(\pi+\pi x_1)}$ 
and characteristic number $z=0.005$, 
further implying the identity $\st=\su^{-1}$. 
Unlike in the previous plots, 
the composition of the turbulent structures 
does not 
show distinct qualitative changes 
in dependence on the location along the $x_1$-axis. 
This is consistent with the fact that 
the only non-constant scaling factors are $\su$ and $\st$.  
While $\su$ affects the length of the velocity vectors $\u'(\x,t)$, it does not alter their direction  
and therefore only influences the length of the streamlines but not their shape. 
The difference in length of the streamlines is 
not noticeable due to their thinness and overlapping behavior. 
For the temporal scaling factor $\st$ no influence on 
the statistical properties of the fluctuation field at a fixed time point 
was to be expected 
other than 
minor 
effects related to the numerical approximation.

\begin{figure}[h!]
	\centerline{\includegraphics[width=.7\textwidth]{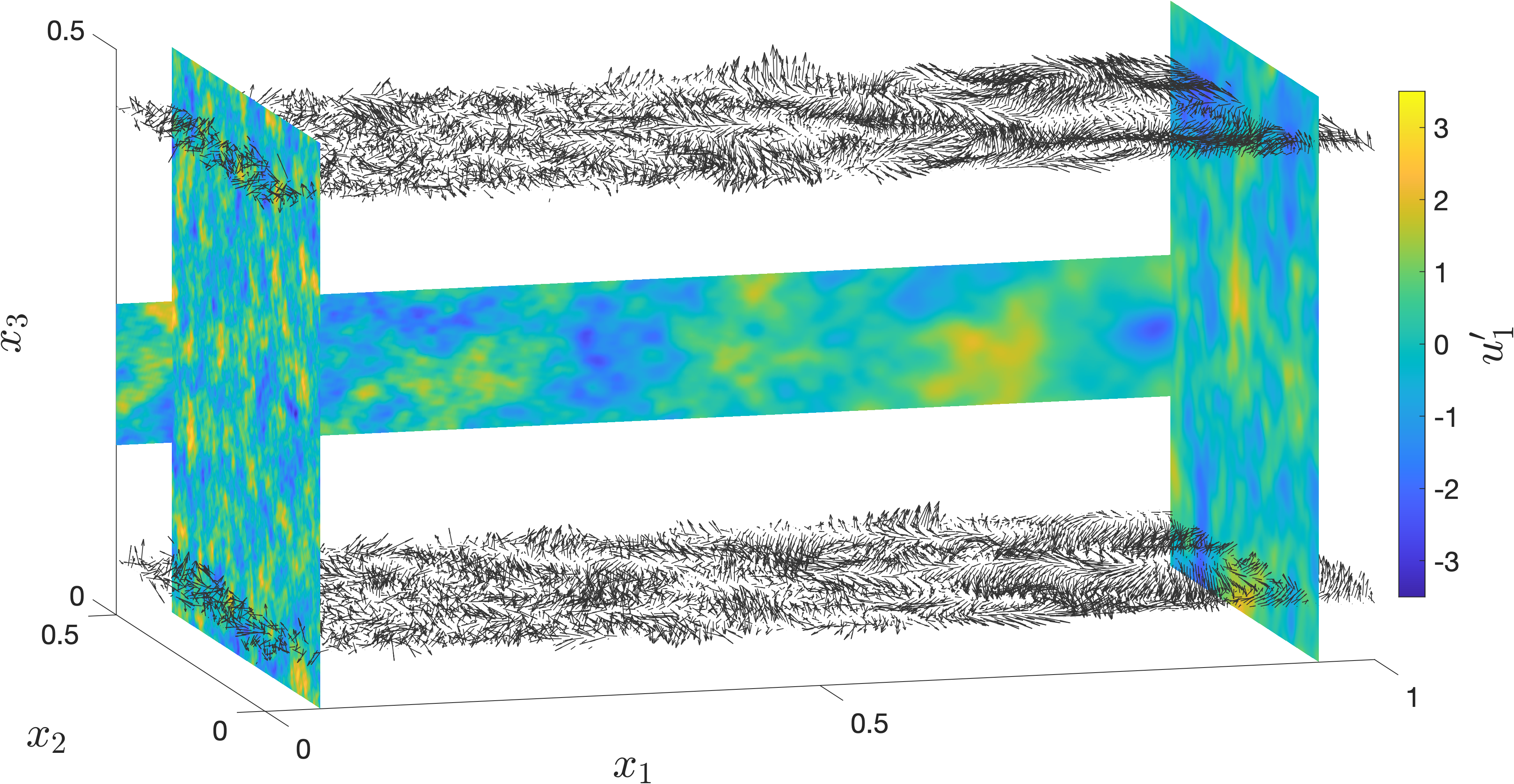}}\vfill
	\centerline{\includegraphics[width=.7\textwidth]{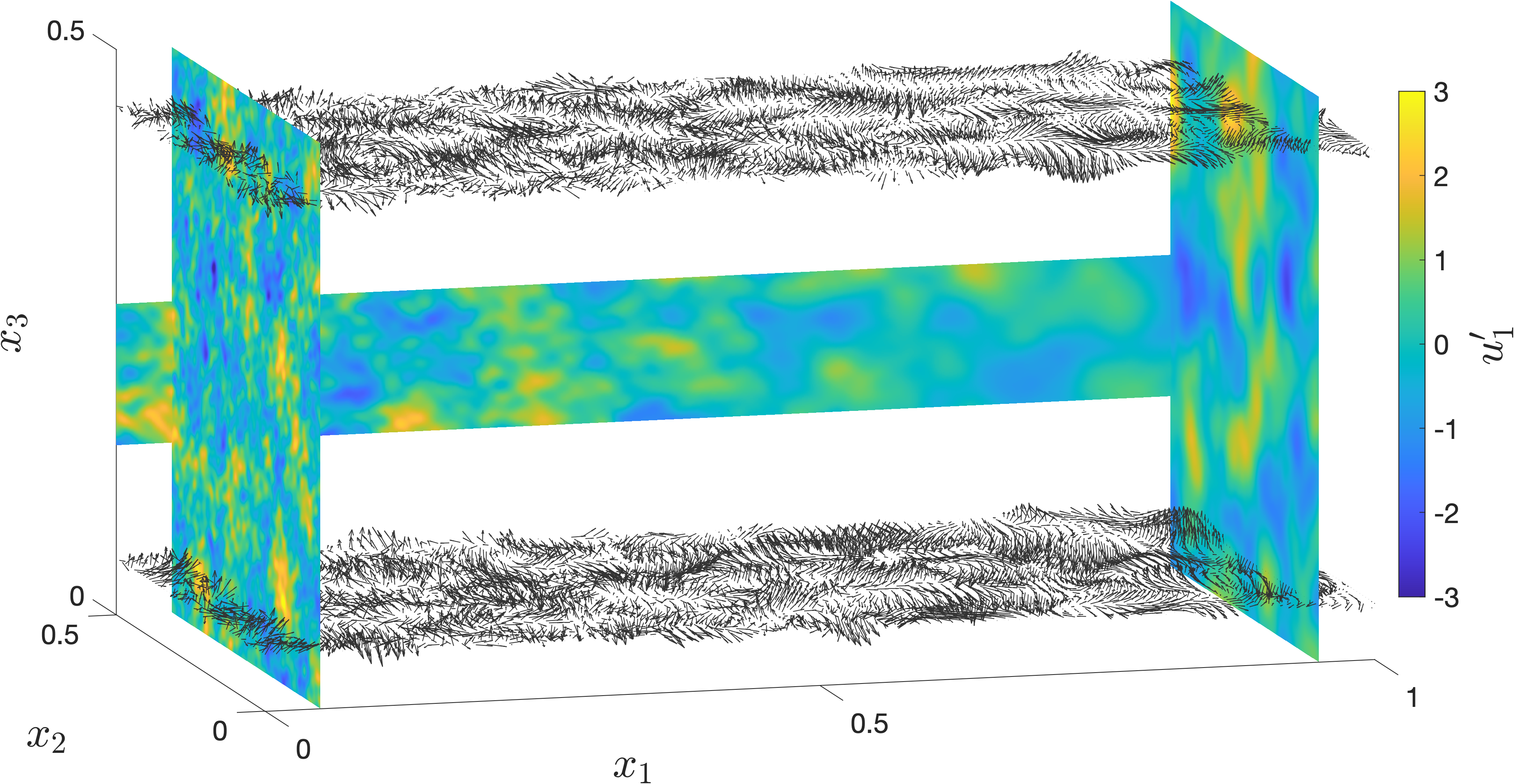}}\vfill
	\centerline{\includegraphics[width=.7\textwidth]{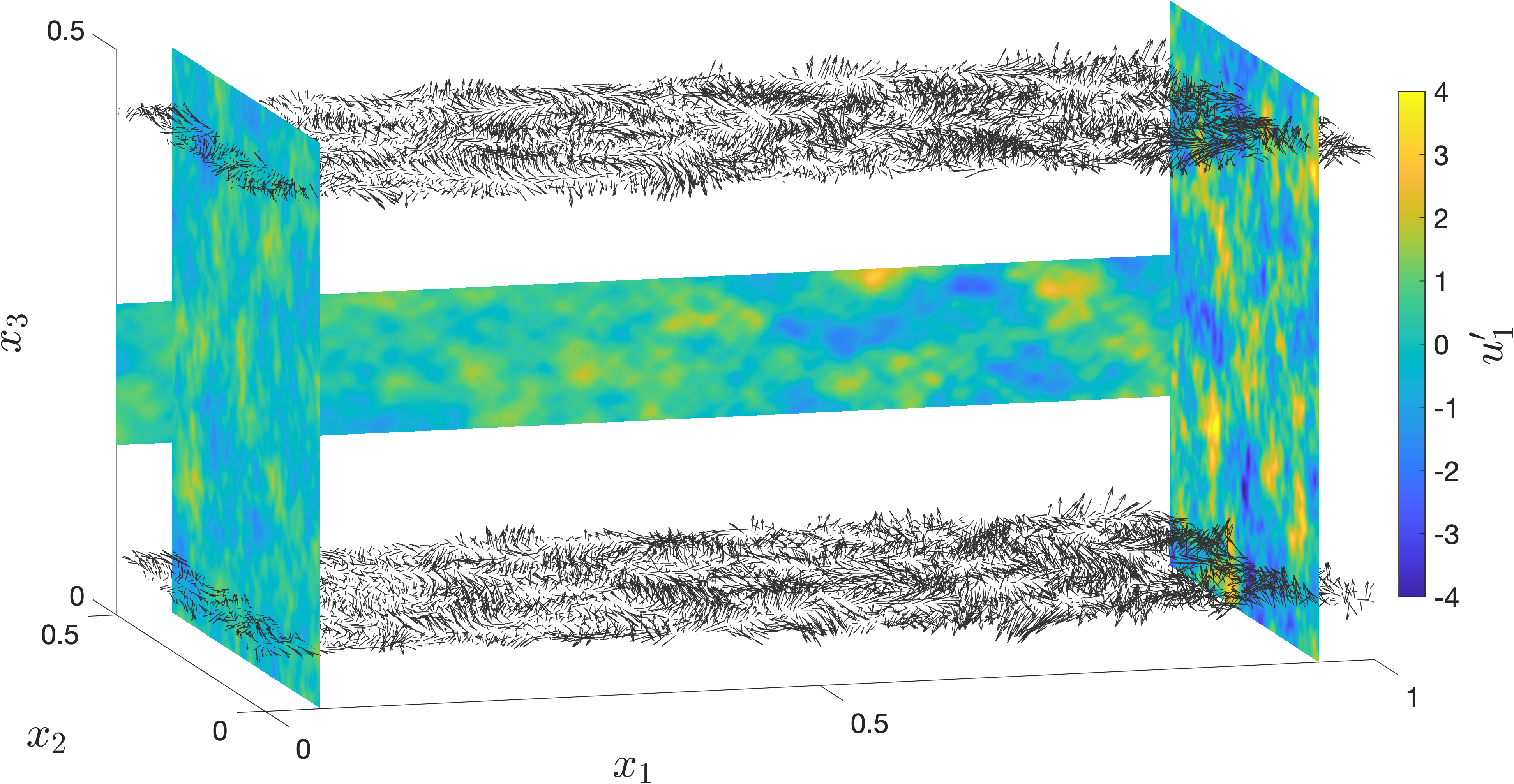}}\vfill
	\caption{ 
	Approximate realizations of the turbulence field 
	in Model~\ref{model:turb_inhom} 
	at a fixed time point. 
	The heat maps show the first velocity component,  
	and the plotted velocity vectors 
	are scaled by a factor of 
	$0.01$.
	Top: 
	Spatial scaling function $\sx=k^{3/2}/\eps$ 
	increasing along the $x_1$-axis by a factor of four 
	(scenario from Figure~\ref{fig:2D_sx}).
	Middle: 
	Viscosity scaling function $\sz=\eps\nu/k^2$ 
	increasing along the $x_1$-axis by a factor of ten 
	(scenario from Figure~\ref{fig:2D_sz}).
	Bottom:  
	Velocity scaling function $\su=k^{1/2}$ 
	increasing along the $x_1$-axis by a factor of four 
	(scenario from Figure~\ref{fig:2D_su}).   
	See the text for details.
	}
	\label{fig:3D_sxzu}
\end{figure}
 
The scenarios considered so far in the context of two-dimensional streamline plots 
are further illustrated 
by correspondig simulations of the three-dimensional model. 
Figure \ref{fig:3D_sxzu} presents 
approximate realizations of the fluctuation field 
in Model~\ref{model:turb_inhom} 
at a fixed time point $t$ 
in the form of heat maps and vector plots 
associated to cross sections in a 
three-dimensional rectangular domain.  
The employed 
settings 
are identical to those used in the streamline plots in 
Figures \ref{fig:2D_sx}--\ref{fig:2D_su}, 
with the exceptions that 
the scaling functions and underlying flow fields now depend on the spatial argument 
$\x=(x_1,x_2,x_3)$ in the three-dimensional domain $[0,1]\times[0,0.5]\times [0,0.5]$ 
and that the numerical parameter related to the number of quadrature points is chosen as 
$N = 20\,000$. 
The 
heat maps show the values of 
the first a component $u'_1(\x,t)$ 
of the fluctuation field $\u'(\x,t)$, 
and the length of the vectors depicted in the vector plots is  scaled by a factor of  
$0.01$
in order to facilitate their display. 
Comparing the three-dimensional plots with the 
respective 
two-dimensional streamline plots,
it can be noted that the main features observed in the simplified two-dimensional setting carry over to the three-dimensional case. 
The simulation presented at the top of Figure~\ref{fig:3D_sxzu} 
is based on the scenario from Figure~\ref{fig:2D_sx}, i.e., 
the flow fields are chosen such that 
the spatial scaling function 
$\sx$ 
increases along the $x_1$-axis 
by a factor of four, 
while $\su$ and $\sz$ are constant with value one.    
The 
corresponding 
increase in size of the turbulent structures from left to right 
is visible in both the heat map and the vector plots. 
It is instructive to observe that 
if one partitions 
the quadratic 
cross section on the left 
into 
smaller 
squares 
of one fourth the 
side length of the cross section, 
then 
the composition of the turbulent structures in each of the smaller squares 
appears to be 
similar to 
the composition of the structures on 
the whole quadratic cross section on the right, 
reflecting 
the growth of $\sx$ and the constancy of $\su$, $\sz$. 
In the simulation in the middle of Figure~\ref{fig:3D_sxzu} 
the scaling function $\sz$ associated to the inverse turbulent viscosity ratio 
grows 
from left to right 
by a factor of ten, 
whereas all other scaling functions remain constant, 
just as in the scenario from  Figure~\ref{fig:2D_sz}. 
Consistent with 
this setting,  
the size of the large-scale structures 
on 
the quadratic cross-section on the left 
is comparable to the size of the large-scale structures on the right,
while the structures on the left are 
superposed by a wider variety of small-scale structures
than those on the right. 
The simulation presented at the bottom of  
Figure~\ref{fig:3D_sxzu} in turn  adopts the scenario from Figure~\ref{fig:2D_su}, 
in which the velocity scaling factor $\su$ increases along the $x_1$-axis by a factor of four 
and $\sx$, $\sz$ are kept constant.  
Unlike in 
the corresponding streamline plot, 
the growth of the velocity values in magnitude 
is apparent in both the heat map and the vector plots.

\begin{figure}[h!]
	\includegraphics[width=.8\textwidth]{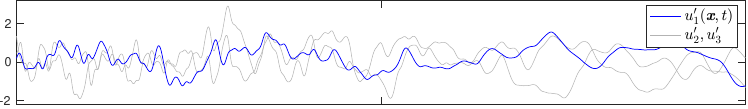}\vfill\vspace{0.3cm}
	\includegraphics[width=.8\textwidth]{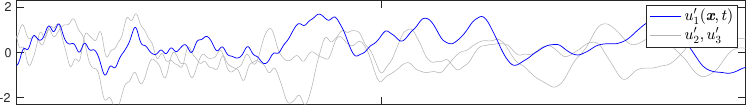}\vfill\vspace{0.3cm}
	\includegraphics[width=.8\textwidth]{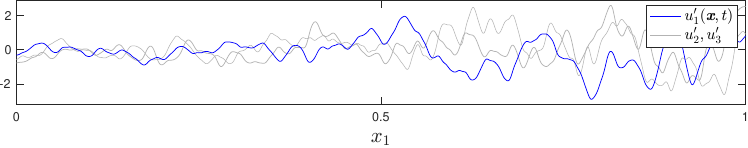}\vfill
	\caption{
	Approximate sample paths of the components 
	$u'_i(\x,t)$ 
	of the turbulence field in Model~\ref{model:turb_inhom} 
	on the line segment $\{\x=(x_1,0,0)\colon 0\leq x_1\leq 1\}$, 
	at a fixed time point $t$. 
	Top: 
	Increasing spatial scaling factor $\sx=k^{3/2}/\eps$ (scenario from Figure~\ref{fig:2D_sx}).
	Middle: 
	Increasing 
	viscosity 
	scaling factor $\sz=\eps\nu/k^2$ (scenario from Figure~\ref{fig:2D_sz}). 
	Bottom: 
	Increasing velocity scaling factor $\su=k^{1/2}$ (scenario from Figure~\ref{fig:2D_su}). 
	See the text for details. 
	}
	\label{fig:paths_sxzu}
\end{figure}

The illustration of the behavior of the three-dimensional model in the considered scenarios 
is complemented in Figure~\ref{fig:paths_sxzu}, 
where approximate sample paths of the 
velocity components 
$u'_1$, $u'_2$, $u'_3$ 
along the $x_1$-axis 
are presented for each scenario.  
The 
underlying settings for the model parameters 
are the same as before, 
except that the value of the numerical parameter 
related to the number of quadrature points is 
taken as 
$N= 4\,000$. 
The discussed 
distinctive 
features of the different scenarios can be observed here too --
be it the growing spatial scale of the fluctuations in the case of increasing $\sx$, 
the changing structural composition of the fluctuations 
in the case of varying $\sz$,  
or the direct influence of $\su$ on 
the magnitude of the fluctuations.

\begin{figure}[h!]
	\centerline{\includegraphics[width=.8\textwidth]{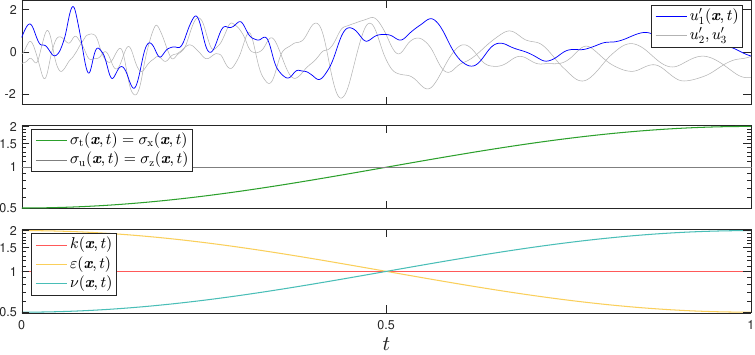}}
	\caption{
	Approximate sample paths 
	\wrt time 
	of the components 
	$u'_i(\x,t)$ 
	of the turbulence field in Model~\ref{model:turb_inhom} 
	at a fixed spatial point $\x$,  
	with temporal scaling 
	function 
	$\st=k/\eps$ 
	increasing along the $x_1$-axis  
	by a factor of four. 
	The scaling functions and underlying flow quantities are depicted in semi-log plots.  
	See the text for details. 
	}
	\label{fig:paths_st}
\end{figure}

As all simulation results discussed up to this point 
concern snapshots of the fluctuation field 
at a fixed point in time, 
the temporal  scaling factor $\st=k/\eps$ has not played a relevant role yet.  
The simulation presented in Figure~\ref{fig:paths_st}  
therefore 
focuses on the temporal evolution of the fluctuations, 
showing approximate sample paths of the velocity components 
as in Figure~\ref{fig:paths_sxzu} but \wrt time instead of space.  
Here we suppose that the temporal scaling factor varies over time 
and grows from $t=0$ to $t=1$ by a factor of four, 
whereas $\su$ and $\sz$ are constant with value one. 
The underlying flow fields are chosen as 
$k(\x,t)=1$, $\eps(\x,t)=2^{\;\!-\!\cos(\pi+\pi t)}$, and $\nu(\x,t)=2^{\;\!\cos(\pi+\pi t)}$, 
so that the temporal scaling factor 
is given by 
$\st(\x,t)= 2^{\;\!\cos(\pi+\pi t)}$ 
and coincides with the spatial scaling factor $\sx$. 
The mean velocity field 
 $\overline \u$ 
is assumed to be identically zero, 
and we further set $\delta=0.08$, $z=0.005$, $N=4\,000$ as before. 
It is clearly noticeable 
that the increase of $\st$ induces a respective decrease of the frequency of 
fluctuations in the sample path plot.  
In combination with the previous simulation results, 
this illustrates how the inhomogeneous random field model 
manages to incorporate the 
flow data provided by the underlying 
fields $k$, $\eps$, and $\nu$ 
into the spatio-temporal 
structure of the generated fluctuations.

\begin{figure}[h!]
	\includegraphics[width=.7\textwidth]{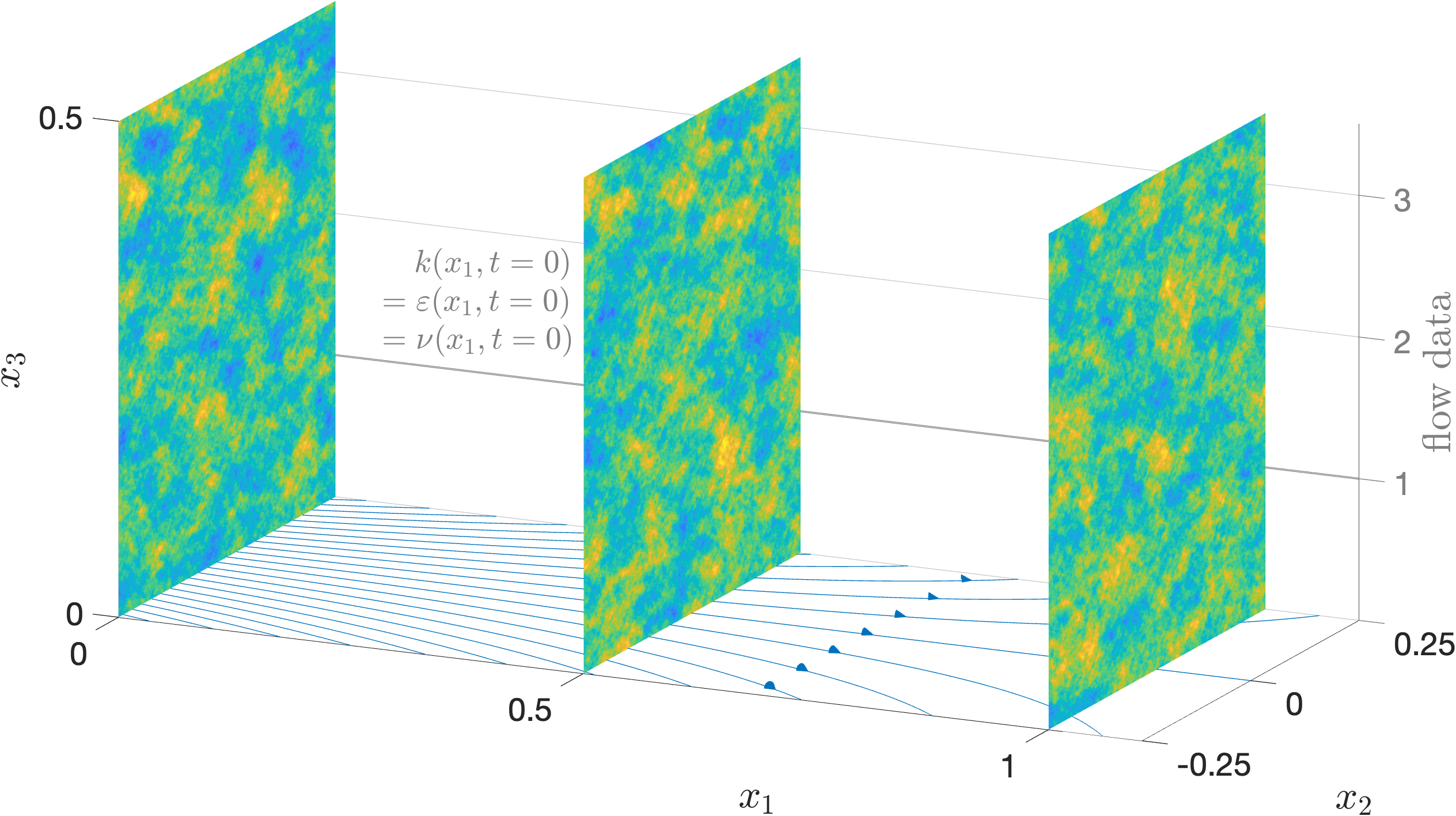}\vfill
	\includegraphics[width=.7\textwidth]{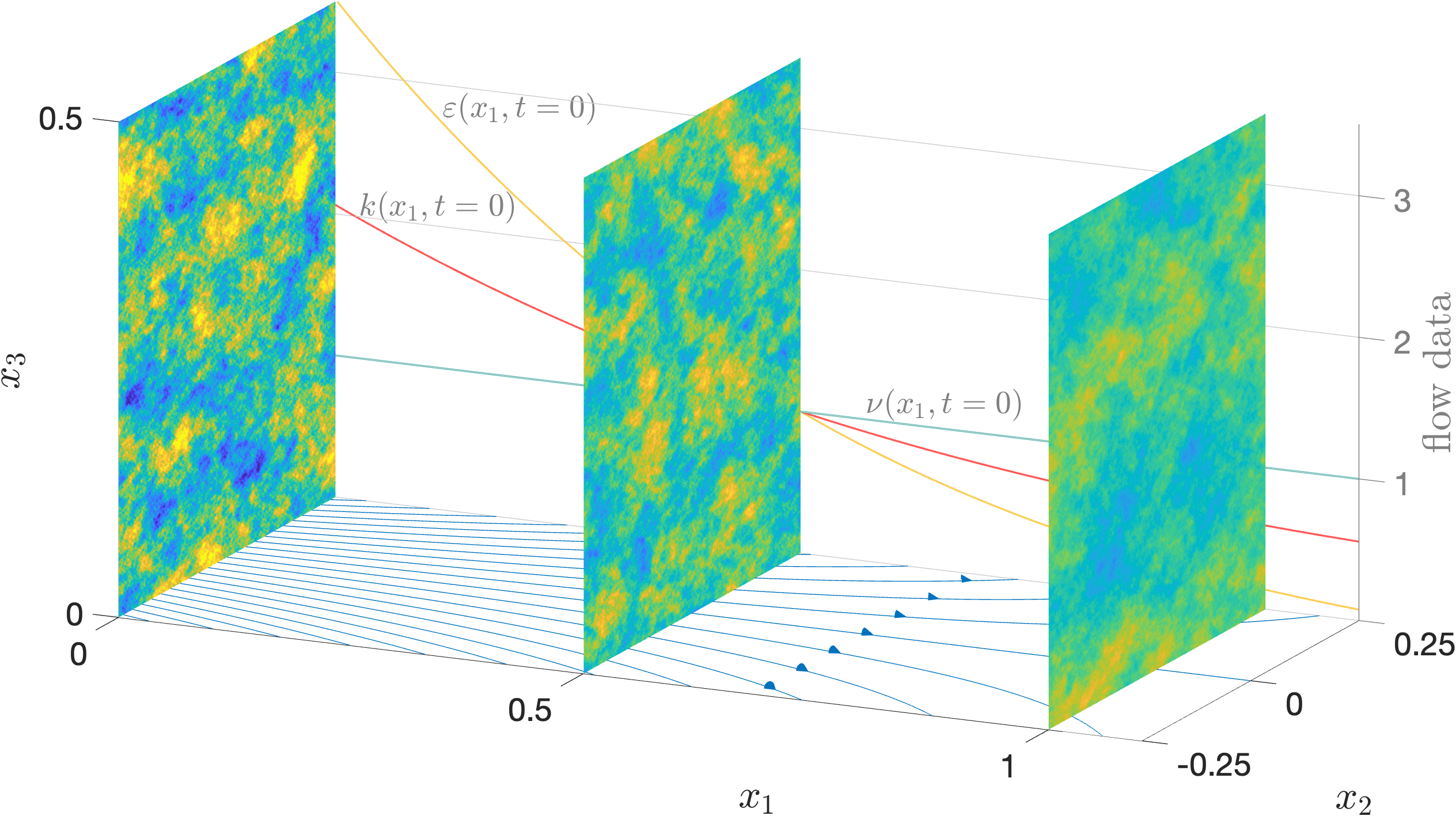}\vfill
	\caption{
	Approximate realizations of the first component $u'_1$ of the turbulence field 
	in Model~\ref{model:turb_inhom} at a fixed time point. 
	The fluctuations are subject to advection by a stationary non-uniform mean flow (streamlines in blue color) 
	in two different szenarios for the underlying flow data $k$, $\eps$, and $\nu$. 
	See the text for details.  
	}
	\label{fig:advection}
\end{figure}

A further 
essential 
feature of our model 
is 
its ability to 
consistently capture 
the advection of the turbulent structures by the mean flow 
even in the case of a non-uniform mean velocity.
Note that the mean velocity field $\overline \u$ enters 
the definition of the fluctuation field $\u'$ in Model~\ref{model:turb_inhom} 
via the mean flow function $\flow$ 
in \eqref{eq:mean_flow}  
for the purpose of describing 
the advection 
from an Eulerian perspective; 
compare the discussion in \cite[Section~5.1]{AKLMW24}. 
In Discretization~\ref{discretization} the 
mean flow function 
appears in a simplified, locally linearized form. 
Figure~\ref{fig:advection} 
highlights the 
feature of non-uniform advection 
by presenting 
two simulations of the fluctuation field $\u'$ 
based on a stationary underlying mean velocity field $\overline\u$ 
of the form 
$\overline u_1(\x,t)=(x_1-1.5)^2$, $\overline u_2(\x,t)=-2(x_1-1.5)x_2$,  $\overline u_3(\x,t)= 0$,    
$\x=(x_1,x_2,x_3)\in [0,1]\times[-0.25,0.25]\times [0,0.5]$. 
Associated mean flow streamlines are 
indicated 
in blue color on the 
$x_1$$x_2$-plane. 
Each of the two plots shows heat maps of an approximate realization 
of the first velocity component $u'_1(\x,t)$ 
on three cross sections parallel to the 
$x_2$$x_3$-plane 
at a fixed time point $t$.  
As indicated 
on the background plane, 
the upper plot represents a stylized szenario 
in which the flow data $k$, $\eps$, $\nu$ are held constant with value one, 
while the szenario of the lower plot assumes $k$ and $\eps$ to be decreasing functions of the $x_1$-coordinate. 
In both simulations the remaining parameters are taken as $\delta=0.08$, $z=10^{-4}$, $N=4\,000$. 
Focusing first on the upper plot, 
we remark that the mean flow essentially transports the turbulent structures 
from left to right, 
so that the structures visible on the cross sections in the middle and on the right 
can be 
roughly 
thought of as having evolved from structures previously located at the cross section on the left. 
As the mean flow also exhibits a nonlinear diverging behaviour, 
the turbulent structures would be streched in $x_2$-direction while being transported along the mean flow stream lines  
if no temporal decay was involved (frozen turbulence).  
This streching effect is prevented due to the natural temporal decay of the turbulent structures induced by 
the time integration kernel $\eta$ in Model~\ref{model:turb_inhom}. 
The generated fluctuations thus remain consistent with the prescribed flow data 
independently of the 
nonlinear transport 
by the mean flow. 
This aspect is further illustrated in the lower plot, where the inhomogeneity of $k$ and $\eps$ 
leads to
corresponding differences
of the turbulent structures on the cross sections \wrt 
velocity magnitude, spatial scale, and spectral composition.

\subsection{Spatio-temporal ergodicity} 
\label{sec:ergodicity}

The 
spatio-temporal 
ergodicity properties of
the inhomogeneous turbulence field in Model~\ref{model:turb_inhom} 
have been analytically investigated in \cite[Theorem~5.6]{AKLMW24}. 
Roughly speaking, the derived results state that local characteristic values of the fluctuation field 
at a point $(\x,t)$ that are given in terms of expected values \wrt the probability distribution of $\u'(\x,t)$ 
can be estimated by means of 
local averages in space and time.  
In this subsection we confirm and illustrate 
the ergodicity properties of our model 
via numerical simulations. 
We focus on the characteristic values of 
turbulent kinetic energy $k(\x,t)$ and dissipation rate $\eps(\x,t)$, 
whose representations in terms of expected values read 
\begin{align}\label{eq:k_eps_expectation}
 \frac{1}{2}\,\E \Bigl[ \bigl\| \u'(\x,t) \bigr\|^2 \Bigr] = k(\x,t), \qquad
\frac12 \,\delta^2 z  \, \E \Bigl[ \bigl\| \nabla_{\x} \u'(\x,t) + \bigl( \nabla_{\x}\u'(\x,t) \bigr){\vphantom{)}}^{\!\top} \bigr\|^2 \Bigr]  \approx \frac{\eps(\x,t)}{\nu(\x,t)};
\end{align}
compare \cite[Theorem~5.5]{AKLMW24} and  Corollary~\ref{prop:flow_properties} above. 
The approximate identity concerning $\eps(\x,t)$ is unterstood as an asymptotic result for 
the turbulence scale ratio $\delta\to0$.
In the following 
simulations we estimate $k(\x,t)$ and $\eps(\x,t)$ 
by replacing the expected values in \eqref{eq:k_eps_expectation} with 
local averages  
of a single sample path in space and time.  
Here the turbulence scale ratio, the inverse turbulent viscosity ratio, 
and the numerical parameter associated to the number of quadrature points 
are set to $\delta=0.01$, $z=0.005$, and $N= 4\,000$.
As before, we make the simplifying assumption of isotropic one-point velocity correlations and take the anisotropy factor $\bm{L}(\x,t)$ 
to be the identity matrix.

\begin{figure}[h!]
	\includegraphics[width=.49\textwidth]{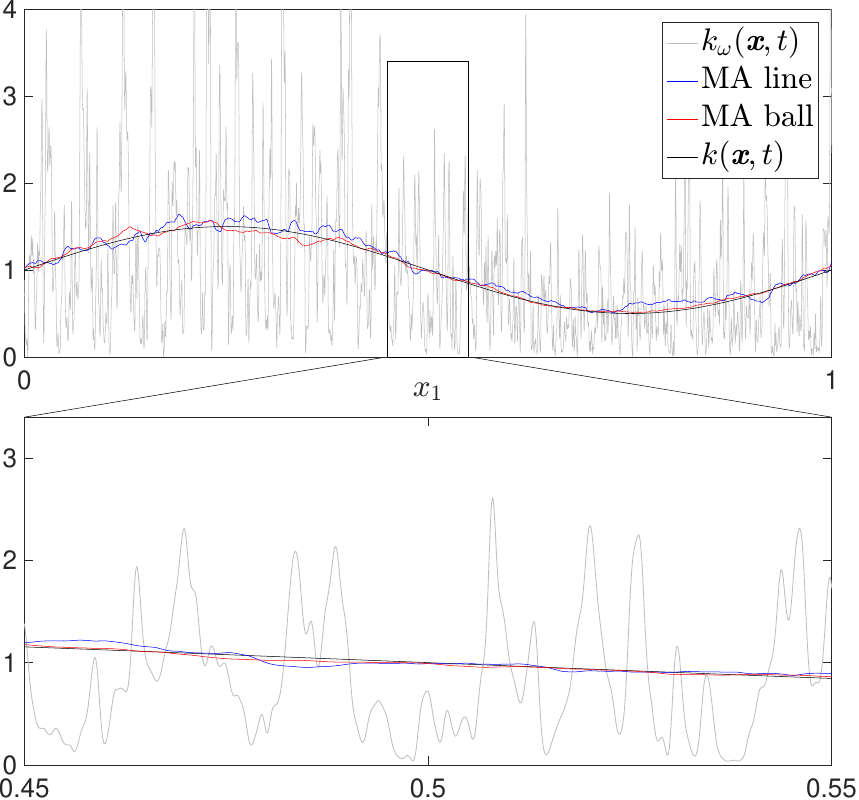}\hfill
	\includegraphics[width=.49\textwidth]{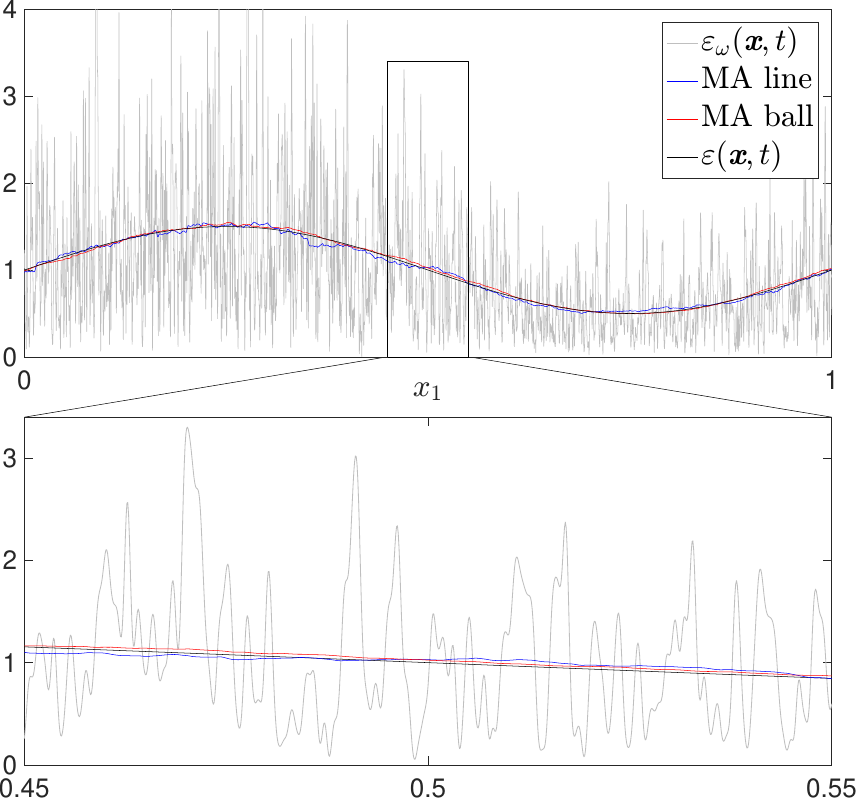}\hfill
	\caption{
			Spatial averaging at a fixed time point $t$. 
			Left: 
			Approximate sample path $k_\omega$ of 
			the instantaneous turbulent kinetic energy 
			$\|\u'\|^2/2$ along the $x_1$-axis,  
			together with associated 
			moving averages 
			over line segments and three-dimensional balls,  
			estimating 
			the underlying turbulent kinetic energy $k$. 
			Right: 
			Approximate sample path $\eps_\omega$ of 
			the instantaneous dissipation rate 
			$\delta^2 z  \,\nu\, \| \nabla_{\x} \u' + ( \nabla_{\x}\u' ){\vphantom{x}}^{\!\top} \|^2/2$, 
			together with associated 
			moving averages 
			estimating 
			the underlying dissipation rate $\eps$. 
			See the text for details. 
			}
	\label{fig:ergo_space}
\end{figure}

We first focus on 
spatial averages 
and consider simulations of the fluctuation field 
at a fixed time point $t$. 
The plots on the left-hand side and on the right-hand side of Figure~\ref{fig:ergo_space} 
involve spatial variations of 
the turbulent kinetic energy and the dissipation rate, 
respectively, 
and present estimates of 
the varying characteristic flow quantities 
in terms of spatial moving averages  
based on sample paths of the fluctuation field. 
More precisely, 
in the szenario underlying the plot on the left-hand side of Figure~\ref{fig:ergo_space} we assume that 
the turbulent kinetic energy varies along the $x_1$-axis and is given by 
$k(\x,t)=1+\sin(2\pi x_1)/2$ for $\x=(x_1,x_2,x_3)$, 
while the dissipation rate $\eps$ and the kinematic viscosity $\nu$ are constant with value one and the mean velocity  $\overline\u$ is identically zero.  
The plot shows the graph of 
$k(\x,t)$ along the $x_1$-axis 
together with an approximate sample path of 
the instantaneous turbulent kinetic energy $\|\u'(\x,t)\|^2/2$, 
denoted by $k_\omega(\x,t)$ in order to emphasize 
its dependence on a random outcome $\omega$ of the underlying probability space  $(\Omega,\mathscr{F},\Pr)$.
In addition, 
two moving 
averages 
associated to the sample path are shown: 
The first moving average consists of average values over line segments on the $x_1$-axis 
and assigns to every value of $x_1$ the 
unweighted average 
of $k_\omega((y,0,0),t)$ 
over all evaluation points $y$ in the interval $[x_1-R,x_1+R]$,  where $R=0.075$. 
The second moving average employs average values over three-dimensional balls 
and assigns to every value of $x_1$ the 
unweighted average 
of $k_\omega(\y,t)$ 
over all evaluation points $\y$ in the closed ball in $\R^3$ with center $(x_1,0,0)$ and radius 
$R=0.075$. 
The evaluation points are taken from a rectangular grid with 
a 
fine spacing in $x_1$-direction 
($\Delta x_1= 10^{-4}$)  
and 
a 
coarser spacing in  $x_2$-direction and $x_3$-direction ($\Delta x_2= \Delta x_3 = 0.032$). 
We remark that 
the 
second type of average can be interpreted as an approximation of the 
average integrals considered in 
\cite[(5.42)]{AKLMW24},  
with $R'=0.075$ and $R=0$ in the notation used therein. 
It its clearly visible that the sample path $k_\omega$ oscillates around the mean function $k$ and that the latter is approximately reproduced by the 
local average values. 
The moving average 
based on three-dimensional balls shows a slightly better fit than the one based on line segments. 
As a side observation, note that the sample path $k_\omega$ 
further illustrates  
that the magnitude of $k$ influences 
not only the amplitude of the turbulent fluctuations 
but also their 
spatial frequency governed by the scaling factor $\sx=k^{3/2}/\eps$. 
The plot on the right-hand side of Figure~\ref{fig:ergo_space} presents an analogue simulation
with regard to the 
dissipation rate $\eps$. 
Here we assume that 
$\eps(\x,t)=1+\sin(2\pi x_1)/2$, $\x=(x_1,x_2,x_3)$, 
and keep the turbulent kinetic energy $k$ constant with value one. 
As before, the kinematic viscosity $\nu$  and the mean velocity  $\overline\u$ are constant with values one and zero, respectively.   
The plot shows an approximate sample path $\eps_\omega(\x,t)$ of 
the instantaneous dissipation rate 
$\delta^2 z  \,\nu\, \| \nabla_{\x} \u'(\x,t) + ( \nabla_{\x}\u'(\x,t) ){\vphantom{x}}^{\!\top} \|^2/2$ 
along the $x_1$-axis,  
together with associated moving averages 
over line segments and 
balls that are 
calculated in the same way as before. 
Here the averages over three-dimensional balls can be considered as approximations of the 
average integrals in
\cite[(5.43)]{AKLMW24},  
with $R'=0.075$ and $R=0$ in the notation used therein. 
Compared to the plot on the left-hand side, 
the sample path $\eps_\omega$ appears to oscillate more rapidly than $k_\omega$, 
which reflects the fact that the instantaneous dissipation rate 
is composed of a larger variety of fluctuating terms 
stemming from the 
different 
components of the velocity gradient. 
Accordingly, 
the 
moving averages approximate the mean function $\eps$ 
even better than in the case of the turbulent kinetic energy $k$.

\begin{figure}[h!]
	\includegraphics[width=.49\textwidth]{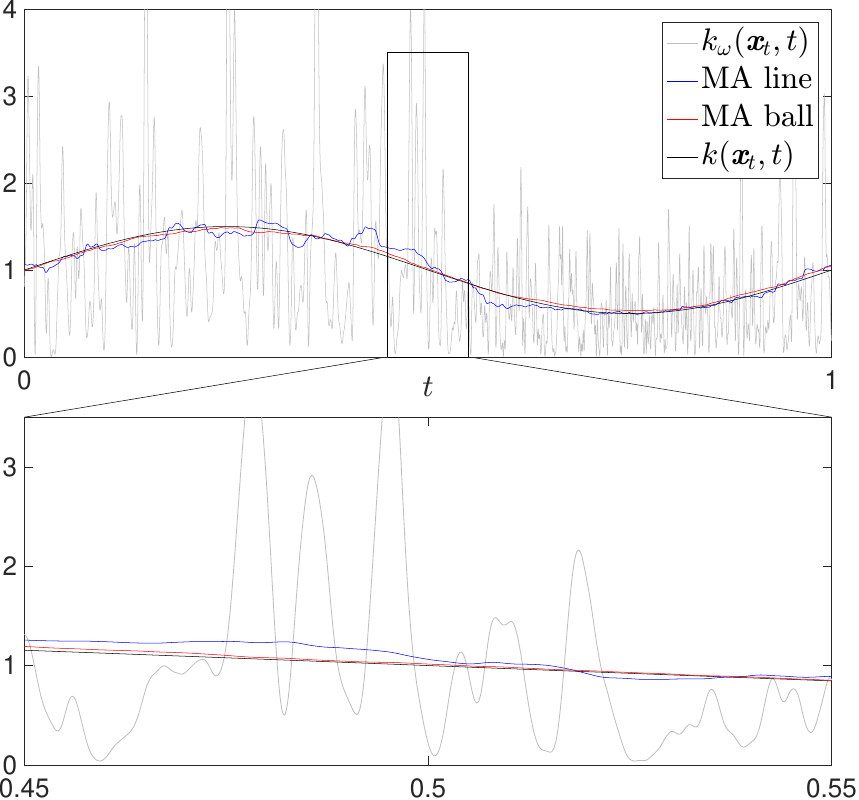}\hfill
	\includegraphics[width=.49\textwidth]{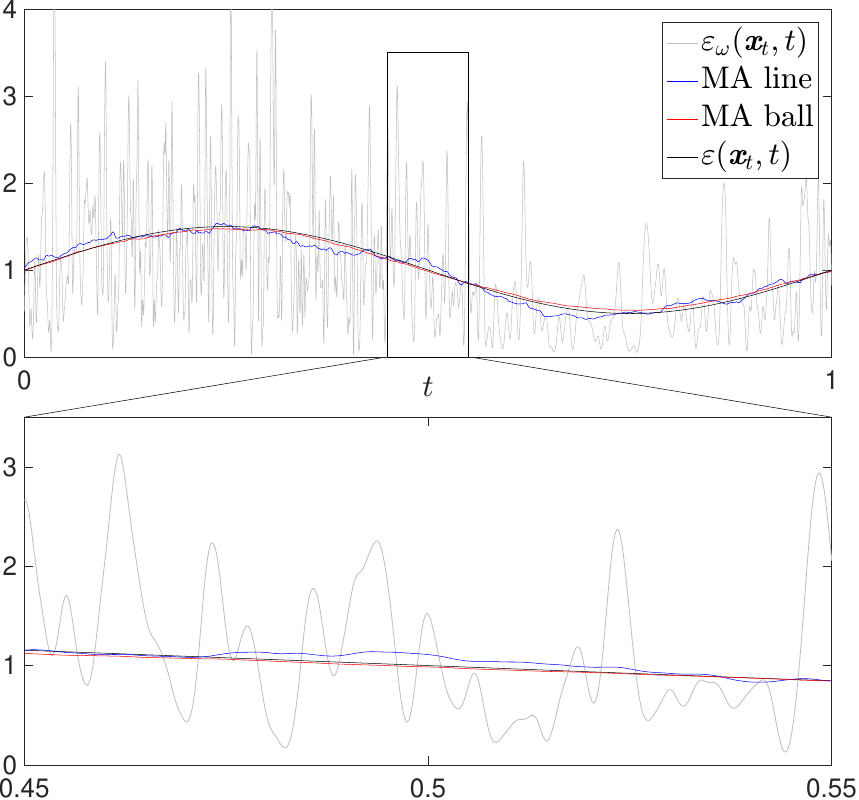}\hfill
	\caption{
			Spatio-temporal averaging. 
			Left: 
			Approximate sample path $k_\omega$ of 
			the instantaneous turbulent kinetic energy 
			$\|\u'\|^2/2$ along a mean flow  
			pathline $\x_t$,  
			together with associated 
			moving averages 
			over pathline segments and environments thereof generated by three-dimensional balls, 
			estimating 
			the underlying turbulent kinetic energy $k$. 
			Right: 
			Approximate sample path $\eps_\omega$ of 
			the instantaneous dissipation rate 
			$\delta^2 z  \,\nu\, \| \nabla_{\x} \u' + ( \nabla_{\x}\u' ){\vphantom{x}}^{\!\top} \|^2/2$, 
			together with associated 
			moving averages 
			estimating 
			the underlying dissipation rate $\eps$. 
			See the text for details. 
			}
	\label{fig:ergo_time}
\end{figure}

Next we include temporal averages and investigate simulations of the fluctuation field over time.  
As in the previous scenarios,  
the plots on the left-hand side and on the right-hand side of Figure~\ref{fig:ergo_time} 
involve spatial variations of the turbulent kinetic energy and the dissipation rate, respectively, 
but now 
the fluctuations are captured from the perspective of an observer moving with the mean flow as time evolves. 
This perspective allows to focus on those temporal fluctuations in our model 
that are due to the temporal decay of the turbulent structures induced by the time integration kernel $\eta$, 
without  
superpositions due to the advection of spatial fluctuations by the mean flow. 
We assume a uniform mean velocity  $\overline\u(\x,t)=(1,0,0)$ 
and employ the mean flow pathline $\x_t=(t,0,0)$. 
In the szenario underlying the plot on the left-hand side of Figure~\ref{fig:ergo_time} the turbulent kinetic energy varies according to 
$k(\x,t)=1+\sin(2\pi x_1)/2$, $\x=(x_1,x_2,x_3)$, 
while the dissipation rate $\eps$ and the kinematic viscosity $\nu$ are constant with value one. 
The plot shows an approximate 
sample path $k_\omega(\x_t,t)$ 
of the instantaneous turbulent kinetic energy $\|\u'(\x_t,t)\|^2/2$  
along the mean flow pathline 
$\x_t$ 
together with two associated moving averages.  
The first moving average consists of average values over pathline segments 
and assigns to every 
time point $t$ 
the unweighted average of $k_\omega(\x_s,s)$ 
over all evaluation time points $s$ in the interval $[t-R,t+R]$. 
The second moving average is taken over 
environments of pathline segments 
based on three-dimensional 
balls and 
assigns to every time point $t$ the 
unweighted average 
of $k_\omega(\y,s)$ 
over all spatio-temporal evaluation points $(\y,s)$ satisfying 
$\max\{\|\y-\x_s\|,|s-t|\}\leq R$. 
The size of the averaging domains is specified by $R=0.075$,  
and the evaluation points are taken from a spatio-temporal grid with 
a fine resolution on the time axis ($\Delta t = 10^{-4}$) 
and a coarser resolution on the spatial axes ($\Delta x_i= 0.032$ for $i=1,2,3$). 
Both types of averages 
can be interpreted as approximations of the 
average integrals in \cite[(5.42)]{AKLMW24},  
choosing $R'=0$ in the notation used therein 
for the first type of average 
and $R'=R$ for the second type. 
Similar to the purely spatial simulation in Figure~\ref{fig:ergo_space}, it can be 
observed 
that the temporal sample path $k_\omega$ oscillates 
around the mean function $k$, 
allowing the temporal and spatio-temoral averages to approximately recover the latter. 
The significantly better fit of the second moving average is owed to the 
fact that it employs information of both temporal and spatial fluctuations. 
As a side note, we remark that 
the visible change in frequency of the oscillations 
in $k_\omega$ 
over time 
is in full accordance with the varying magnitude of $k$ 
and its influence on the turbulent time scale governed by the scaling factor $\st=k/\eps$. 
The fact that the temporal sample path $k_\omega$ appears to have a slightly simpler structure than 
its spatial counterpart in Figure~\ref{fig:ergo_space}  
is due to the  simplified treatment of temporal correlations in 
our model as opposed to the more detailed description of the spatial spectral structure. 
 The plot on the right-hand side of Figure~\ref{fig:ergo_time} presents analogue simulation results 
with regard to the dissipation rate $\eps$,  
assuming that 
$\eps(\x,t)=1+\sin(2\pi x_1)/2$ for $\x=(x_1,x_2,x_3)$,  
while $k$ and $\nu$ are constant with value one.    
It shows an approximate sample path $\eps_\omega(\x_t,t)$ of 
the instantaneous dissipation rate 
$\delta^2 z  \,\nu\, \| \nabla_{\x} \u'(\x_t,t) + ( \nabla_{\x}\u'(\x_t,t) ){\vphantom{x}}^{\!\top} \|^2/2$ 
along the mean flow pathline $\x_t$,  
together with associated moving averages 
over pathline segments and 
spatio-temporal environments thereof that are
calculated in the same way as before. 
Both types of averages  
represent approximations of the average integrals considered in \cite[(5.43)]{AKLMW24},  
using $R'=0$  and $R'=R$ in the notation therein. 


\subsection{Kolmogorov's two-thirds law} 
\label{sec:Kolmogorov}

In view of the fact that 
our random field model 
is focused 
on second-order statistics 
provided 
by case-dependent RANS data 
as well as general turbulence theory, 
we complement the previous discussions 
by examining to what extent 
Kolmogorov's two-thirds law 
for the second moments of the spatial velocity increments 
can be reproduced in an inhomogeneous context.  
Employing the model spectrum from \cite[Example~2.3]{AKLMW24} as before, we 
demonstrate that the 
inhomogeneous 
synthetic turbulence field 
captures 
the two-thirds scaling law 
in dependence on the 
local value of the 
turbulent viscosity ratio (turbulence Reynolds number)
and even realizes 
the correct constant of proportionality  
in consistency with analytical estimates. 

In order to reformulate 
the two-thirds law 
within the framework of our two-scale approach, 
we first consider the idealized case of 
homogeneous and isotropic turbulence 
with 
Kolmogorov spectrum  
$E(\kappa)=C_\mathrm{K}\,\kappa^{-5/3}$ 
on the whole range of wave numbers $\kappa$. 
Note that in this case only the velocity increments $\u'(\x+\bm{r},t)- \u'(\x,t)$ 
are defined but not the turbulent velocity $\u'(\x,t)$  itself. 
For the sake of consistency 
with our 
non-dimensionalization and scaling 
approach, compare Section~\ref{sec:model},   
the  power-law spectrum does not carry the usual scaling factor 
$\eps^{2/3}$.  
The Kolmogorov constant $C_\mathrm{K}$ is taken as $C_\mathrm{K}=1.5$. 
Under these idealized assumptions, 
the two-thirds 
law can be directly derived 
from the representation formula~\eqref{eq:u'}  
in combination with the integral isometry~\eqref{eq:isometric_property} 
and takes the form 
\begin{equation}\label{eq:long}
\E\Bigl[\Bigl( \bigl[\u'(\x+r\bm{e},t)- \u'(\x,t)\bigr]\cdot
\bm{e}
\Bigr)^{\!2}\;\!\Bigr]
\,=\,
C_l \,\Bigl(\frac{\eps r}\delta\Bigr)^{2/3}, 
\end{equation}  
$r\in \R^+$, with constant of proportionality given by 
\begin{equation}\label{eq:constant_long}
C_l
= 
C_{\mathrm{K}} \,\frac{36}{55} \int_0^\infty \!\kappa^{-5/3}(1-\cos(\kappa))\,\d\kappa \approx 
1.97.
\end{equation}
The left-hand side of \eqref{eq:long},  where 
$\bm{e}\in\R^3$ 
is an arbitrary unit vector, 
represents the second-order longitudinal structure function. 
The appearance of the turbulence scale ratio $\delta$ 
on the right-hand side of \eqref{eq:long} 
is owed to the fact that the spatial argument of $\u'$  
is scaled \wrt the macroscopic reference scale, 
whereas the two-thirds law is naturally connected to the turbulence scale. 
In a similar way one obtains the relation
\begin{equation}\label{eq:trans}
\E\Bigl[\Bigl( \bigl[\u'(\x+r\bm{e},t)- \u'(\x,t)\bigr]\cdot
\bm{e}^\perp 
\Bigr)^{\!2}\;\!\Bigr]
\,=\,
\frac 43\; \E\Bigl[\Bigl( \bigl[\u'(\x+r\bm{e},t)- \u'(\x,t)\bigr]\cdot
\bm{e}
\Bigr)^{\!2}\;\!\Bigr] 
\end{equation} 
between the transversal structure function and the longitudinal structure function, 
where $\bm{e}^\perp$ is a unit vector orthogonal to $\bm{e}$.
The identities 
\eqref{eq:long}, \eqref{eq:constant_long}, and  \eqref{eq:trans}  
are well-known 
in the literature  
and correspond to, e.g., \cite[(6.30), (6.245), and (6.31)]{Pope00}. 

Turning back to the 
case of inhomogeneous flow conditions 
and to the model spectrum from \cite[Example~2.3]{AKLMW24}, 
in which the 
$-5/3$ power law 
is restricted 
to a limited range of wavenumbers (inertial subrange), 
we examine whether the idealized identities \eqref{eq:long}--\eqref{eq:trans} 
can be recovered in an approximate sense on a suitable range of distances $r$. 
To this end, we consider Monte Carlo estimates 
for the rescaled longitudinal and transversal structure functions 
\begin{equation}\label{eq:longtrans}
\Bigl(\frac{\eps_r\;\!r}\delta\Bigr)^{\!-2/3}\,\E\bigl[\bigl(u'_1(r\bm{e}_1,0) - u'_1(\bm{0},0)\bigr)^{\!2}\;\!\bigr] 
\qquad \text{ and } \qquad 
\Bigl(\frac{\eps_r\;\!r}\delta\Bigr)^{\!-2/3}\,\E\bigl[\bigl(u'_2(r\bm{e}_1,0) - u'_2(\bm{0},0)\bigr)^{\!2}\;\!\bigr],  
\end{equation}
where $\bm{e}_1$ denotes the unit vector in $x_1$-direction 
and $\eps_r := [\eps(r\bm{e}_1,0)+\eps(\bm{0},0)]/2$ 
accounts for the possibility of a non-uniform dissipation rate. 
Since the range of 
distances $r$ 
for which the two-thirds law can be observed 
is expected to depend on the value of the 
inverse turbulent viscosity ratio (inverse turbulence Reynolds number) $\zeta = \sz(\x,t)z$, 
we compare various simulations based on different values of $z$, 
using $z=10^{-9}, 10^{-8},\ldots,10^{-2}$. 
In all presented simulations, 
spatial inhomogeneity enters either through 
the underlying turbulent kinetic energy $k(\x,t)$ 
or through the dissipation rate $\eps(\x,t)$. 
The turbulence scale ratio is set to $\delta=0.08$,  
the kinematic viscosity $\nu$ is 
assumed to be 
constant with value one, 
the mean velocity $\overline\u$ is 
assumed to be 
identically zero, 
and isotropic one-point correlations are imposed 
by taking the anisotropy factor 
$\bm{L}$ to be the identity matrix.  
The numerical parameter governing the number of quadrature points 
is chosen as $N=4\,000$, 
and the expected values in \eqref{eq:longtrans} 
are estimated via ensemble averages  
based on $20\,000$ realizations of the fluctuation field each.

\begin{figure}[h!]
	\includegraphics[width=.45\textwidth]{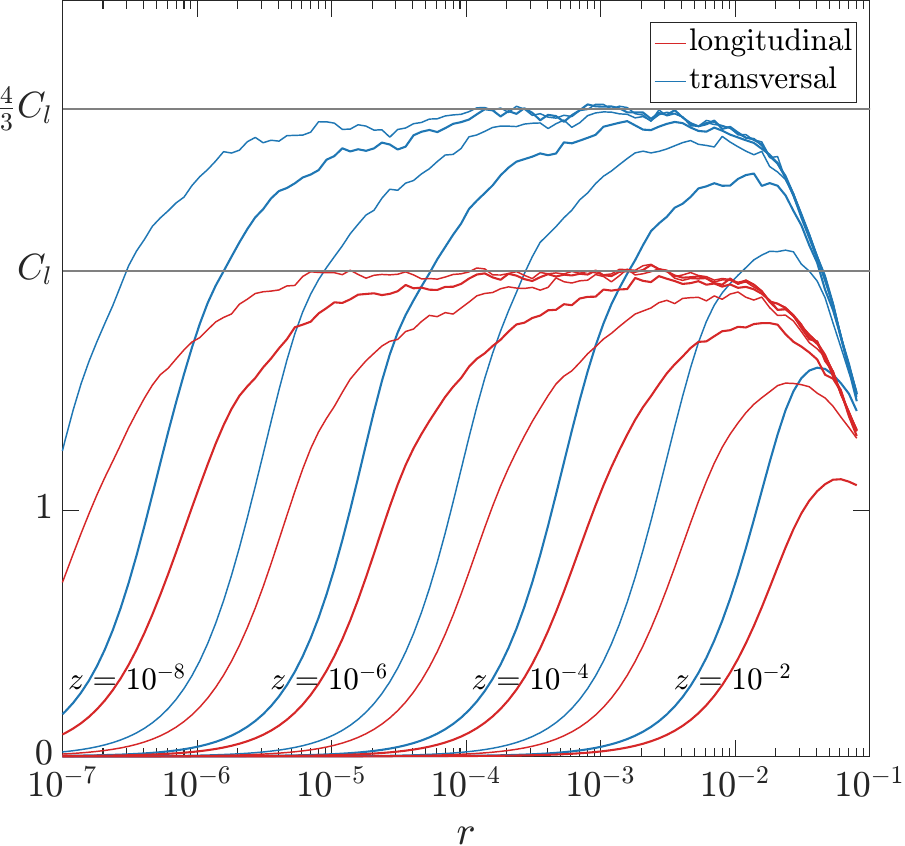} \qquad
	\includegraphics[width=.45\textwidth]{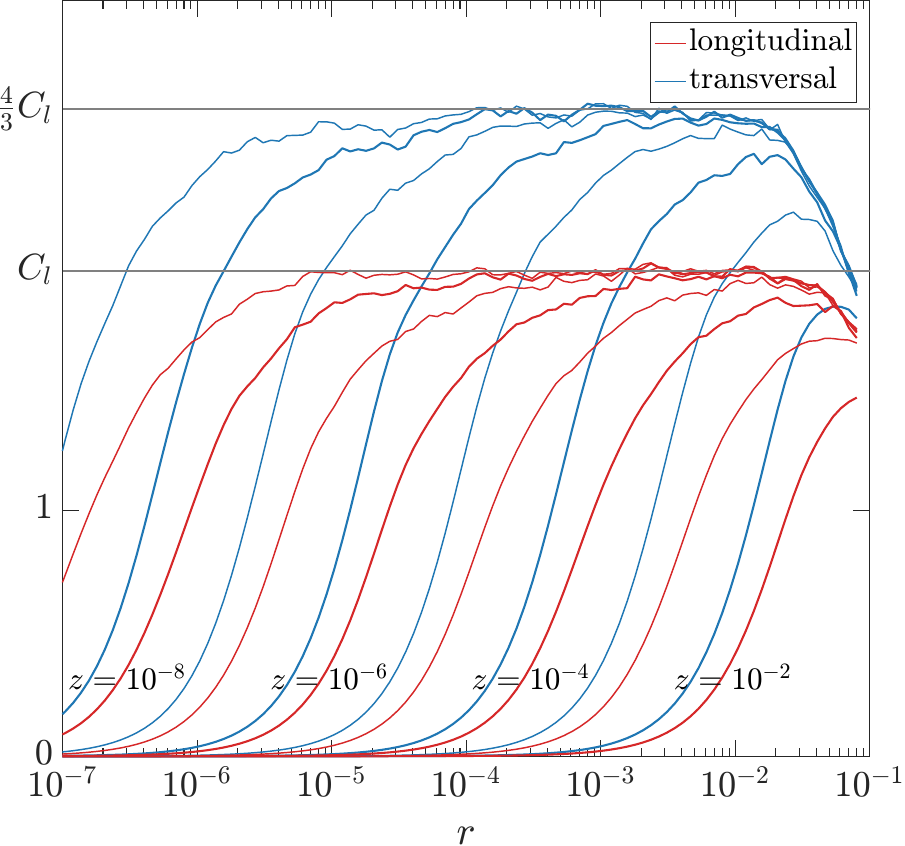} 
	\caption{
	Monte Carlo estimates of the rescaled 
	longitudinal and transversal structure functions~\eqref{eq:longtrans}  
	based on different characteristic 
	values $z$ for the inverse turbulence Reynolds number. 
	The underlying 
	turbulent kinetic energy $k$ increases linearly in $r$ 
	with slope $a=1$ (left) and slope $a=10$ (right). 
	See the text for details. 
	}
	 \label{fig:Kolmogorov1}
\end{figure}

In the scenarios underlying the simulation results 
shown in Figure~\ref{fig:Kolmogorov1},   
the turbulent kinetic energy 
is chosen to 
grow linearly in $x_1\in[0,10^{-1}]$ with slope $a$ 
such that $k(\x,t)=1$ for $x_1=10^{-4}$, 
i.e., $k(\x,t)=a(x_1-10^{-4})+1$.   
The dissipation rate $\eps(\x,t)$ is taken to be constant with value one. 
The plots on the left-hand side and 
on the right-hand side of Figure~\ref{fig:Kolmogorov1} 
correspond to 
the slopes $a=1$ and $a=10$, respectively, 
and depict the 
estimates 
of the rescaled structure functions~\eqref{eq:longtrans}  
for the considered values of $z$. 
The roughly bell-shaped graphs are 
widened towards 
the left as $z$ decreases. 
Note that Kolmogorov's two-thirds scaling 
can indeed be 
approximately 
observed 
in accordance with \eqref{eq:long}--\eqref{eq:trans},  
given that the 
inverse turbulence Reynolds number $\zeta = \sz(\x,t)z$ 
is small enough. 
For larger values of $z$, i.e., for low  Reynolds numbers, 
the inertial subrange is not wide enough for the plotted graphs 
to reach a plateau near the target values $C_l$ and $(4/3)C_l$. 
We remark that, unlike commonly found in the literature, 
the employed model spectrum does not rely on approximation arguments 
involving infinite Reynolds number limits to ensure the fulfillment of the 
essential integral conditions \eqref{eq:energy_spectrum}, 
 so that the low Reynolds number case is consistently addressed as well.   
It is further worth emphasizing that the observed two-thirds scaling occurs in inhomogenous scenarios. 
In fact, 
the stronger inhomogeneity ($a=10$) even leads to 
slightly wider plateaus in the plot on the right-hand side,  
which is consistent with the 
increase and decrease 
of the spatial scaling factors 
$\sx(\x,t)=k^{3/2}(\x,t)/\eps(\x,t)$ and $\sz(\x,t)=\eps(\x,t)\nu(\x,t)/k^2(\x,t)$, 
respectively,  along the $x_1$-axis.

\begin{figure}[h!]
	\includegraphics[width=.45\textwidth]{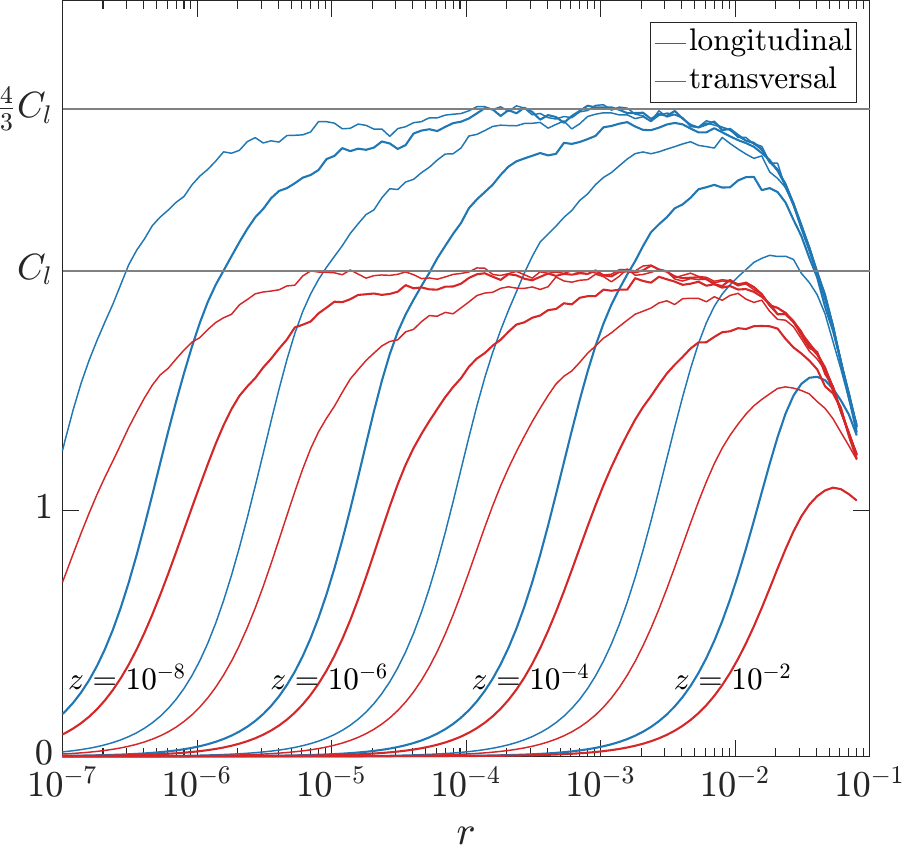} \qquad
	\includegraphics[width=.45\textwidth]{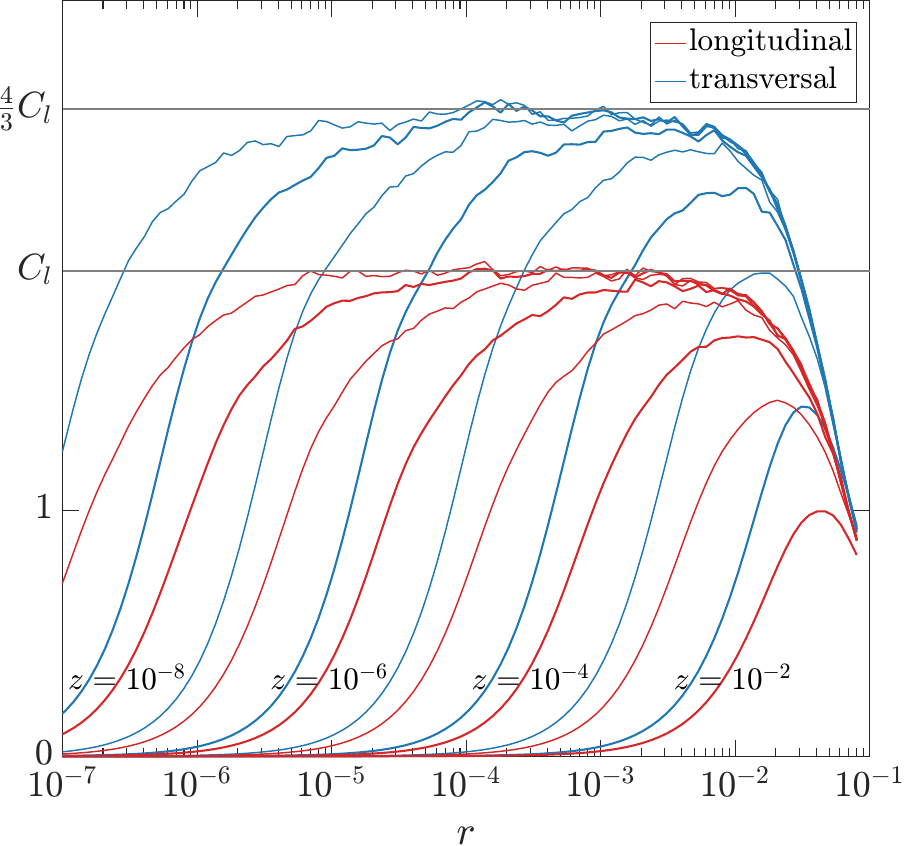} 
	\caption{
	Monte Carlo estimates of the rescaled 
	longitudinal and transversal structure functions~\eqref{eq:longtrans}  
	based on different characteristic 
	values $z$ for the inverse turbulence Reynolds number. 
	The underlying 
	dissipation rate $\eps$ increases linearly in $r$ 
	with slope $a=1$ (left) and slope $a=10$ (right). 
	See the text for details. 
 	}
	 \label{fig:Komogorov2}
\end{figure}

The simulation results presented in Figure~\ref{fig:Komogorov2} 
are based on interchanged assumptions concerning the involved inhomogeneity: 
Here 
the dissipation rate is assumed to grow linearly in $x_1\in[0,10^{-1}]$, 
$\eps(\x,t)=a(x_1-10^{-4})+1$,
while the turbulent kinetic energy $k(\x,t)$ is constant with value one. 
Slopes $a=1$ and $a=10$ are employed for the plots 
on the left-hand side and 
on the right-hand side of Figure~\ref{fig:Komogorov2}, respectively.  
As before, the shown estimates for the rescaled structure functions \eqref{eq:longtrans}  
exhibit Kolmogorov's two-thirds scaling 
in accordance with \eqref{eq:long}--\eqref{eq:trans} 
as long as the inverse turbulence Reynolds number is small enough. 
The plots on the left-hand side of Figure~\ref{fig:Komogorov2} 
are similar to the plots on the left-hand side of Figure~\ref{fig:Kolmogorov1}, 
indicating that the considered 
inhomogeneities with slope $a=1$ 
are not strong enough to 
influence the second-order structure functions in a significant way. 
By contrast, the plots on the right-hand sides of 
Figure~\ref{fig:Kolmogorov1} and Figure~\ref{fig:Komogorov2} 
clearly differ for larger values of $r$. 
This is due to the relatively large variations 
in the employed inhomogeneities with slope $a=10$ 
in combination with the fact that the 
growth 
of $k(\x,t)$ 
underlying 
Figure~\ref{fig:Kolmogorov1} and the 
growth 
of $\eps(\x,t)$  
underlying 
Figure~\ref{fig:Komogorov2} 
lead to directly opposite spatial variations of the 
relevant scaling factors $\sx(\x,t)$ and $\sz(\x,t)$.


\appendix
\renewcommand{\theequation}{\thesection.\arabic{equation}}

\section{Gaussian white noise and stochastic integrals} \label{app:wn}

The turbulent fluctuations $\u'$  are modelled in terms of stochastic integrals \wrt an underlying vector-valued white noise, hence we shortly recall these concepts.

\begin{definition}[Gaussian white noise]\label{def:white_noise}
Let $U \subset \R^m$ be a Borel set, let $\mu \colon \B(U) \to [0,\infty]$ be a $\sigma$-finite measure, and let $\B_0(U)$ denote the system of all sets $A \in \B(U)$ with $\mu(A) < \infty$.  A mapping 
$\wn = (\wnr_{1},\ldots,\wnr_{\ell}) \colon \B_0(U) \to L^2(\Pr;\C^\ell)$ 
is called a \emph{$\C^\ell$-valued Gaussian white noise on $U$ with structural measure $\mu$} 
if the family of real-valued random variables $\Re\;\!\wnr_k(A)$ (real part), $\Im\;\!\wnr_k(A)$ (imaginary part), $A\in \B_0(U)$, $k\in\{1,\ldots,\ell\}$, is jointly Gaussian and for all  $A,B \in \B_0(U)$ we have
\begin{align}\label{eq:white_noise}
\E\bigl[ \wn(A) \bigr] = \bm0, 
\qquad 
\E\bigl[ \wn(A)\otimes\overline{\wn(B)}\bigr] = \mu(A\cap B)\, \bm I, 
\qquad 
\E\bigl[ \wn(A)\otimes\wn(B)\bigr] = \bm 0,
\end{align}
where $\bm I \in \R^{\ell\times\ell}$ denotes the identity matrix.
\end{definition} 
Definition~\ref{def:white_noise} is equivalent to the alternative characterization presented 
in  \cite[Definition~A.1]{AKLMW24}. In particular, the first two conditions in \eqref{eq:white_noise} imply the additivity property 
$\wn(A\cup B)=\wn(A)+\wn(B)$ for disjoint sets $A,B$ 
as well as the fact that the components $\wnr_1,\ldots,\wnr_\ell$ have the same 
structural measure $\mu$ and are uncorrelated in the sense that 
$\E[|\wnr_j(A)|^2]=\mu(A)$ and $\E[\wnr_j(A)\:\!\overline{\wnr_k(A)}]=0$ for $j\neq k$. 
The third condition in \eqref{eq:white_noise} additionally specifies the correlation structure of the real and imaginary parts and ensures that 
$\Re\;\!\wnr_1$, $\Im\;\!\wnr_1,\ldots,\Re\;\!\wnr_\ell$, $\Im\;\!\wnr_\ell$  
 share the  structural measure $\mu/2$ and are uncorrelated as well. 
 As a consequence of these properties,  the stochastic integral $\int_U \bm{G}(\x) \cdot \wn(\d\x)$ is well-defined for integrands $\bm{G}\in L^2(\mu;\C^{d\times\ell})$ as an element of $L^2(\Pr;\C^{d})$ 
satisfying the isometric identities 
 \begin{align}\label{eq:isometric_property}
\E \biggl[ \Bigl\| \, \Re \! \int_U \bm{G}(\x) \cdot \wn(\d\x) \Bigr\|^2 \biggr] 
= 
\E \biggl[ \Bigl\| \, \Im \! \int_U \bm{G}(\x) \cdot \wn(\d\x) \Bigr\|^2 \biggr] 
=
\frac{1}{2}\int_U \bigl\| \bm{G}(\x) \bigr\|^2 \mu(\d\x).
\end{align}
We refer to \cite[Appendix~A]{AKLMW24} and the references therein for further details on stochastic integration \wrt complex vector-valued white noise.

\section{Randomized approximation of  stochastic integrals}\label{app:approx}

The convergence proof for the numerical approximation scheme presented in Section~\ref{sec:discretization_convergence} 
relies on 
general
auxiliary results on 
Monte Carlo quadrature methods for 
stochastic integrals \wrt Gaussian white noise 
established 
below. 
Related results can be found, e.g., in \cite{BK95, PS95, Pri01, LPS14} and the references therein. 
In what follows, the abbreviation ``i.d.d.'' stands as usual for ``independent and identically distributed''. 

We remark that the 
randomized approximation 
method described in 
Proposition~\ref{prop:stratMC} 
involves stratified sampling on the domain of integration via the decomposition $U = \bigcup_j \Delta_j$ as well as importance sampling on each subdomain $\Delta_j$ via the reference density $p_j$.

\begin{proposition}[Monte Carlo integration] \label{prop:stratMC}
Let $U\subset\R^m$ be a Borel set, let $\mu\colon \B(U) \to [0,\infty]$ be a $\sigma$-finite measure, and let $\wn$ be a $\C^\ell$-valued Gaussian white noise on $U$ with structural measure $\mu$. 
Moreover, 
let $\mathcal J$ be a countable index set, 
let $\Delta_j\subset U$, $j\in\mathcal J$, be a measurable partition of $U$ 
(i.e., $\Delta_j\in\mathcal B(U)$, $\bigcup_{j\in\mathcal J}\Delta_j=U$, and $\Delta_j\cap\Delta_k=\emptyset$ for $j\neq k$),   
and let $p_j\colon U\to\R_0^+$, $j\in\mathcal J$, be probability densities \wrt $\mu/2$ 
such that
$\int_{\Delta_j}p_j(\y)\,\mu(\d\y)/2=1$.  
Let 
$\y_{jn}$, $j\in\mathcal J$, $n\in\N$, be 
independent $U$-valued  
random variables with distribution 
\begin{align*}
\y_{jn}\;\!\sim\;\! \frac12\,p_j(\y)\;\!\mu(\d\y)
\end{align*}
and let $\wn_{jn}$, $j\in\mathcal J$, $n\in\N$, be 
i.i.d.\ $\C^\ell$-valued square-integrable random variables such that 
the $\R^\ell$-valued random variables 
$\Re\;\!\wn_{jn}$, $\Im \;\!\wn_{jn}$
are uncorrelated
with mean zero 
and identity covariance matrix 
\begin{align*}
\E\bigl[\Re\;\!\wn_{jn}\bigr]=\E\bigl[\Im\;\!\wn_{jn}\bigr]=\bm0, \quad \E\bigl[\Re\;\!\wn_{jn}\otimes\Re\;\!\wn_{jn}\bigr]=\E\bigl[\Im\;\!\wn_{jn}\otimes\Im\;\!\wn_{jn}\bigr]=\bm I.
\end{align*}
Assume that the 
family $\y_{jn}$, $j\in\mathcal J$, $n\in\N$, is independent 
of 
$\wn_{jn}$, $j\in\mathcal J$, $n\in\N$, 
and for every $N\in\N$ let $\wn_N$ be the $\C^\ell$-valued random measure on $U$ defined by 
\begin{align*}
\wn_N(\;\!\cdot\;\!) = \sum_{j\in\mathcal J}\frac1{\sqrt{N}} \sum_{n=1}^N \frac1{\sqrt{p_j(\y_{jn})}} \, \wn_{jn} \, \delta_{\y_{jn}}(\;\!\cdot\;\!),
\end{align*}
where $\delta_\y$ denotes Dirac measure at $\y \in U$.
Then, for every $\bm{G}\in L^2(\mu;\C^{d\times\ell})$ we have the 
following 
convergence in distribution of $\C^d$-valued random variables 
\begin{align}\label{eq:strat_MC_result}
\int_U \bm{G}(\y)\cdot\wn_N(\d\y) 
= 
\sum_{j\in\mathcal J}\frac1{\sqrt{N}} \sum_{n=1}^N \frac1{\sqrt{p_j(\y_{jn})}} \, \bm{G}(\y_{jn})\cdot\wn_{jn} 
\;\!\xrightarrow{\;\mathrm{d}\;}\;\!
\int_{\bigcup_{j}\{p_j>0\}} \bm{G}(\y)\cdot\wn(\d\y) 
\end{align}
as $N\to\infty$, where $\{p_j>0\}=\{\y\in U\colon p_j(\y)>0\}$.
In the case of an infinite index set $\mathcal J$, the series in \eqref{eq:strat_MC_result} converges unconditionally in quadratic mean. 
\end{proposition}

\begin{proof}
First note 
that in order to verify the assertion it is sufficient to 
establish  
for every $\bm{F} \in L^2(\mu;\R^{2d \times \ell})$ 
the convergence in distribution of $\R^{4d}$-valued random variables
\begin{align}\label{eq:MC_convergence_to_prove}
\begin{pmatrix}
\int_U \bm{F}(\y) \cdot \Re\;\!\wn_N(\d \y) \\ 
\int_U  \bm{F}(\y) \cdot \Im\;\! \wn_N(\d\y) 
\end{pmatrix} 
\;\!\xrightarrow{\;\mathrm{d}\;} \;\!
\begin{pmatrix}
\int_{\bigcup_{j}\{p_j>0\}} \bm{F}(\y) \cdot \Re\;\!\wn(\d \y) \\ 
\int_{\bigcup_{j}\{p_j>0\}}  \bm{F}(\y) \cdot \Im\;\! \wn(\d\y) 
\end{pmatrix}
\end{align}
as $N\to \infty$. Indeed, taking into account the structure of complex multiplication, it is clear that \eqref{eq:MC_convergence_to_prove} with $\bm{F} = \begin{psmallmatrix}\Re \bm{G}\\ \Im \bm{G} \end{psmallmatrix}$ and the continuous mapping theorem imply \eqref{eq:strat_MC_result}. We are going to prove \eqref{eq:MC_convergence_to_prove} by applying the central limit theorem. 
To this end, 
observe 
that the $\R^{4d}$-valued random variable on the left-hand side of \eqref{eq:MC_convergence_to_prove} can be written in the form $N^{-1/2} \sum_{n=1}^N \mathbb{X}_n$ with i.i.d.\ $\R^{4d}$-valued random variables $\mathbb{X}_n$, $n \in \N$, defined by
\begin{align}\label{eq:X_n}
\mathbb{X}_n = \sum_{j\in\mathcal J} \frac{1}{\sqrt{p_j(\y_{jn})}}  \begin{pmatrix} \bm{F}(\y_{jn}) \cdot \Re\;\!\wn_{jn} \\ \bm{F}(\y_{jn}) \cdot \Im\;\!\wn_{jn} \end{pmatrix}.
\end{align}
Note that
in the case of an infinite index set $\mathcal J$,
the series \eqref{eq:X_n} converges unconditionally in quadratic mean, i.e., unconditionally in $L^2(\Pr;\R^{4d})$. Indeed, assume w.l.o.g.\ that $\mathcal{J} = \N$. Then the assumptions on $\y_{jn}$, $\wn_{jn}$ imply for all $\eps_j \in \{\pm 1\}$, $J \in \N$, $K \in \N \cup \{\infty\}$ with $J \leq K$ that
\begin{align*}
\E\Biggl[ \biggl\| \sum_{j = J}^K \eps_j \frac{1}{\sqrt{p_j(\y_{jn})}}  \begin{pmatrix} \bm{F}(\y_{jn}) \cdot \Re\;\!\wn_{jn} \\ \bm{F}(\y_{jn}) \cdot \Im\;\!\wn_{jn} \end{pmatrix} \biggr\|^2\Biggr] = \sum_{j = J}^K \int_{\{p_j > 0\}} \bigl\|\bm{F}(\y) \bigr\|^2 \,\mu(\d\y) \leq \sum_{j = J}^K \int_{\Delta_j} \bigl\|\bm{F}(\y) \bigr\|^2 \,\mu(\d\y).
\end{align*}
This 
and the fact that $\sum_{j = 1}^\infty \int_{\Delta_j} \|\bm{F}(\y) \|^2 \,\mu(\d\y) = \int_U \|\bm{F}(\y)\|^2 \,\mu(\d\y) < \infty$ 
yield the asserted unconditional convergence 
in the space $L^2(\Pr;\R^{4d})$. 
Using this result on convergence, the assumptions on $\y_{jn}$, 
$\wn_{jn}$ 
ensure that $\E[\mathbb{X}_n] = \bm{0}$ and that the covariance matrix $\E[ \mathbb{X}_n \otimes \mathbb{X}_n]$ is given by
\begin{align}\label{eq:MC_covariance}
\begin{pmatrix}
\frac12 \int_{\bigcup_j\{p_j>0\}} \bm{F}(\y) \cdot \bm{F}(\y)^\top \mu(\d \y) & \bm{0} \\
\bm{0} & \frac12 \int_{\bigcup_j\{p_j>0\}} \bm{F}(\y) \cdot \bm{F}(\y)^\top \mu(\d \y)
\end{pmatrix} \in \R^{4d \times 4d}.
\end{align}
Moreover, 
observe 
that the isometric 
identities 
\eqref{eq:isometric_property} 
for  
white noise  
integrals 
imply that the matrix in \eqref{eq:MC_covariance} coincides with the covariance matrix of the $\R^{4d}$-valued centered Gaussian random variable on the right-hand side of \eqref{eq:MC_convergence_to_prove};
compare \cite[Lemma A.2]{AKLMW24}.
An application of the multidimensional central limit theorem thus 
yields 
the convergence in \eqref{eq:MC_convergence_to_prove}.
\end{proof}

As shown in Lemma~\ref{lem:covariance} below, the 
integrals \wrt the discrete random measures $\wn_N$ 
in Proposition~\ref{prop:stratMC} 
satisfy isometric 
identities 
and covariance 
formulas  
that are 
analogous 
to those 
for 
stochastic integrals \wrt 
a
Gaussian white noise $\wn$; 
compare \eqref{eq:isometric_property} and \cite[Appendix~A]{AKLMW24}. 
Since we are mainly interested in the real parts of the complex-valued stochastic integrals, 
we 
restrict the formulation of the results 
accordingly.

\begin{lemma}[Isometric property, covariance formula] \label{lem:covariance}
Assume the setting 
given in
Proposition~\ref{prop:stratMC}, let $\bm{G},\bm{H}\in L^2(\mu;\C^{d\times\ell})$, 
$N\in\N$,  
and assume for every $j\in\mathcal J$ that 
$\mu\bigl(\bigl\{\y\in \Delta_j\colon p_j(\y)=0 \text{ and }$ $ (\bm{G}(\y),\bm{H}(\y))\neq(\bm0,\bm0)\bigr\}\bigr)=0$.
Then, 
we have  
\begin{align}\label{eq:isometric_property_1}
\E\biggl[ \Bigr\|\Re\!\int_U \bm{G}(\y)\cdot\wn_N(\d\y) \Bigl\|^2 \biggr] 
= 
\frac12\int_U \bigl\| \bm{G}(\y)\bigr\|^2 \mu(\d\y)
\end{align}
and 
\begin{align}\label{eq:isometric_property_2}
\E\biggl[ \Re\!\int_U \bm{G}(\y)\cdot\wn_N(\d\y) \,\otimes\, \Re\!\int_U \bm{H}(\y)\cdot\wn_N(\d\y)\biggr] 
= 
\frac12\,\Re \! \int_U \bm{G}(\y)\cdot \overline{\bm{H}(\y)}^\top \mu(\d\y).
\end{align}
\end{lemma}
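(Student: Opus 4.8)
The plan is to prove the covariance formula \eqref{eq:isometric_property_2} first and to deduce the isometric identity \eqref{eq:isometric_property_1} from it as the special case $\bm{H}=\bm{G}$ by taking the trace, using $\|\bm{v}\|^2=\trace(\bm{v}\otimes\bm{v})$ for $\bm{v}\in\R^d$ together with $\trace\bigl(\Re(\bm{G}\cdot\overline{\bm{G}}^\top)\bigr)=\|\bm{G}\|^2$.

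Write $X=\Re\int_U\bm{G}\cdot\wn_N(\d\y)$ and $Y=\Re\int_U\bm{H}\cdot\wn_N(\d\y)$; by Corollary~\ref{cor:stratMC} both are well-defined elements of $L^2(\Pr;\R^d)$, the defining series converging unconditionally in quadratic mean. Abbreviating $\bm{a}_{jn}=p_j(\y_{jn})^{-1/2}\bm{G}(\y_{jn})$ and $\bm{b}_{jn}=p_j(\y_{jn})^{-1/2}\bm{H}(\y_{jn})$, one has $X=\sum_{j,n}N^{-1/2}\Re(\bm{a}_{jn}\cdot\wn_{jn})$ and the analogous expression for $Y$. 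The first step is to note that the independence hypotheses of Corollary~\ref{cor:stratMC} (mutual independence of the $\y_{jn}$, of the $\wn_{jn}$, and of the two families) actually render the pairs $(\y_{jn},\wn_{jn})$ jointly independent across all indices $(j,n)$, which I would verify by factorizing a generic product expectation.

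Next I would expand
\[
\E[X\otimes Y]=\sum_{(j,n),(j',n')}\tfrac1N\,\E\bigl[\Re(\bm{a}_{jn}\cdot\wn_{jn})\otimes\Re(\bm{b}_{j'n'}\cdot\wn_{j'n'})\bigr]
\]
and discard all off-diagonal contributions: for $(j,n)\neq(j',n')$ the two factors are independent, and conditioning on $\y_{jn}$ and using $\E[\wn_{jn}]=\bm0$ shows each factor is centered. The diagonal rests on the deterministic identity
\[
\E\bigl[\Re(\bm{a}\cdot\wn)\otimes\Re(\bm{b}\cdot\wn)\bigr]=\Re\bigl(\bm{a}\cdot\overline{\bm{b}}^\top\bigr),\qquad\bm{a},\bm{b}\in\C^{d\times\ell},
\]
which I would obtain by splitting $\bm{a}$, $\bm{b}$, and $\wn$ into real and imaginary parts: the moment assumptions $\E[\Re\wn\otimes\Re\wn]=\E[\Im\wn\otimes\Im\wn]=\bm I$ and $\E[\Re\wn\otimes\Im\wn]=\bm0$ collapse the four arising terms to exactly $\Re(\bm{a}\cdot\overline{\bm{b}}^\top)$. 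Applying this conditionally on $\y_{jn}$ yields $\E[\Re(\bm{a}_{jn}\cdot\wn_{jn})\otimes\Re(\bm{b}_{jn}\cdot\wn_{jn})\mid\y_{jn}]=p_j(\y_{jn})^{-1}\Re\bigl(\bm{G}(\y_{jn})\cdot\overline{\bm{H}(\y_{jn})}^\top\bigr)$.

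Finally I would integrate against the law $\tfrac12 p_j(\y)\,\mu(\d\y)$ of $\y_{jn}$, whereupon the importance-sampling weight $p_j^{-1}$ is cancelled by the density and one obtains $\tfrac1{2N}\int_{\{p_j>0\}}\Re(\bm{G}\cdot\overline{\bm{H}}^\top)\,\mu(\d\y)$; here the support hypothesis on $\{p_j=0\}$ permits replacing $\{p_j>0\}$ by $\Delta_j$, and $p_j=0$ $\mu$-a.e.\ off $\Delta_j$ guarantees no mass is lost. Summing the $N$ identical terms over $n$ cancels $1/N$, and summing over the partition $\{\Delta_j\}_{j\in\mathcal J}$ reassembles $\int_U$, giving \eqref{eq:isometric_property_2}. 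I expect the main technical point to be the interplay between the formally singular weight $p_j^{-1}$ and the support condition — ensuring the integrand is $\mu$-a.e.\ harmless on $\{p_j=0\}$ — together with, for infinite $\mathcal J$, the term-by-term justification of $\E[X\otimes Y]$: I would handle the latter by passing to the finite partial sums $\sum_{j\le J}$, invoking the unconditional $L^2$-convergence from Corollary~\ref{cor:stratMC} (so that $X\otimes Y$ is an $L^1$-limit componentwise) and the vanishing of cross-block terms by independence.
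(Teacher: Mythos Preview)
Your proposal is correct. The core computation --- expand the double sum over $(j,n),(j',n')$, kill the off-diagonal terms by independence and centering, evaluate the diagonal by cancelling the importance weight $p_j^{-1}$ against the sampling density, and use the support condition to pass from $\{p_j>0\}$ to $\Delta_j$ --- is exactly what the paper does. The only structural difference is the order of the two identities: the paper first establishes the isometric identity \eqref{eq:isometric_property_1} by working with real-valued integrands $\bm F$ (applied to $\Re\bm G$ and $\Im\bm G$ separately, using uncorrelatedness of $\Re\wn_N$ and $\Im\wn_N$) and then derives the covariance formula \eqref{eq:isometric_property_2} via the scalar polarization identity $\Re(u\overline v)=\tfrac12(|u+v|^2-|u|^2-|v|^2)$. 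You instead compute the full covariance directly through the matrix identity $\E[\Re(\bm a\cdot\wn)\otimes\Re(\bm b\cdot\wn)]=\Re(\bm a\cdot\overline{\bm b}^\top)$ and recover the isometry by taking the trace. Your route is marginally more direct (one real/imaginary split instead of a split followed by polarization); the paper's route isolates the real-integrand case more cleanly. Both are equally rigorous, and your handling of the infinite-$\mathcal J$ tail via the $L^2$-convergence from Corollary~\ref{cor:stratMC} is appropriate.
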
 

\begin{proof}
Observe that the assumptions regarding the distributions and 
independence 
properties  
of the random variables $\y_{jn}$, $\Re\;\!\wn_{jn}$ 
imply for every real-valued integrand $\bm{F} \in L^2(\mu;\R^{d\times\ell})$ that
\begin{equation}\label{eq:isometric_property_3}
\begin{aligned}
\E\biggl[ \Bigr\| \int_U \bm{F}(\y)\cdot\Re\;\!\wn_N(\d\y) \Bigl\|^2 \biggr] 
& = 
\sum_{j,k \in \mathcal J} \frac{1}{N} \sum_{n,m=1}^N \E \biggl[ \frac{\Re\;\!\wn_{jn} \cdot \bm{F}(\y_{jn})^\top \cdot \bm{F}(\y_{km}) \cdot \Re\;\!\wn_{km}}{\sqrt{p_j(\y_{jn})\;\! p_k(\y_{km}) }} \biggr]
\\ & = \sum_{j\in \mathcal J} \frac{1}{N} \sum_{n=1}^N \E \biggl[ \frac{1}{p_j(\y_{jn})} \bigl\| \bm{F}(\y_{jn}) \cdot \Re\;\!\wn_{jn}\bigr\|^2 \biggr] 
\\ & = 
\sum_{j\in \mathcal J} \frac{1}{2} \int_{\{p_j > 0\}} \bigl\| \bm{F}(\y) \bigr\|^2 \mu(\d\y) 
= \frac{1}{2} \int_{\bigcup_{j}\{p_j>0\}} \bigl\| \bm{F}(\y) \bigr\|^2 \mu(\d\y),
\end{aligned}
\end{equation}
where we have used the fact that the 
expected values 
appearing in the penultimate line do not depend on $n$ and are equal to 
$\int_{\{p_j > 0\}} \| \bm{F}(\y) \|^2 \mu(\d\y)$. 
Moreover, note that \eqref{eq:isometric_property_3} remains true if the real parts of $\wn_N$, $\wn_{jn}$, and $\wn_{km}$ are replaced by the corresponding imaginary parts. 
Applying these identities with 
$\Re\;\! \bm{G}$ and $\Im \;\!\bm{G}$ in place of $\bm{F}$ 
and using the uncorrelatedness of $\Re\;\! \wn_N$ and $\Im \;\!\wn_N$ 
ensures that  
\begin{align*}
\E\biggl[ \Bigr\|\Re\!\int_U \bm{G}(\y)\cdot\wn_N(\d\y) \Bigl\|^2 \biggr] & = \E\biggl[ \Bigr\| \int_U \Re\;\!\bm{G}(\y)\cdot \Re\;\! \wn_N(\d\y) \Bigl\|^2 \biggr] + \E\biggl[ \Bigr\| \int_U \Im\;\!\bm{G}(\y)\cdot \Im\;\! \wn_N(\d\y) \Bigl\|^2 \biggr]
\\ & = \frac12\int_{\bigcup_{j}\{p_j>0\}} \bigl\| \bm{G}(\y)\bigr\|^2 \mu(\d\y).
\end{align*} 
This
and the fact that 
$\mu(\{\y\in \Delta_j\colon p_j(\y)=0 \text{ and }\bm{G}(\y)\neq\bm0\})=0$ 
establish 
the isometric identity 
\eqref{eq:isometric_property_1}.
The covariance formula 
\eqref{eq:isometric_property_2} 
follows directly from 
\eqref{eq:isometric_property_1} 
and the 
polarization 
identity 
$\Re(u\overline{v}) = (|u+v|^2 - |u|^2 - |v|^2)/2$ 
for 
$u,v \in \C$.
\end{proof}

As a consequence of Proposition~\ref{prop:stratMC} and Lemma~\ref{lem:covariance},  
we obtain the following result on stratified Monte Carlo approximations of random fields given in terms of stochastic integrals.

\begin{corollary}[Monte Carlo integration for random fields]\label{cor:stratMC_randomfields}
Assume the setting of Proposition~\ref{prop:stratMC}, 
let 
$\bm{G}\colon\R^{n_0}\times U\to\C^{d\times\ell}$ 
satisfy 
for every 
$\x\in\R^{n_0}$ 
that $\bm{G}(\x,\bdot)\in L^2(\mu;\C^{d\times\ell})$,   
and assume for every $j\in\mathcal J$, $\x\in\R^{n_0}$ that
$\mu(\{\y\in \Delta_j\colon p_j(\y)=0 \text{ and } \bm{G}(\x,\y)\neq\bm0\})=0$.
Let the $\R^d$-valued random fields 
$\v=(\v(\x))_{\x\in\R^{n_0}}$ and $\v_N=(\v_N(\x))_{\x\in\R^{n_0}}$, 
$N\in\N$, be such that for every 
$x\in\R^{n_0}$ 
it holds $\Pr$-almost surely that
\begin{align*}
\v(\x) = \Re \! \int_U \bm{G}(\x,\y)\cdot\wn(\d\y)
\end{align*}
and 
\begin{align*}
\v_N(\x) = \Re \! \int_U \bm{G}(\x,\y)\cdot\wn_N(\d\y)
= 
\sum_{j\in\mathcal J}\frac1{\sqrt{N}} \sum_{n=1}^N \frac1{\sqrt{p_j(\y_{jn})}} \;\! \Re\bigl[ \bm{G}(\x,\y_{jn})\cdot\wn_{jn} \bigr].
\end{align*}
Then, the finite-dimensional distributions of $\v_N$ converge 
weakly 
to the respective 
finite-dimensional distributions of $\v$ as $N\to\infty$, 
i.e., for any choice of points $\x_1$, $\x_2$, \ldots, $\x_k\in\R^{n_0}$, $k\in\N$, it holds that 
\begin{align*}
\bigl(\v_N(\x_1), \v_N(\x_2),\ldots,\v_N(\x_k)\bigr) 
\;\!\xrightarrow{\;\mathrm{d}\;} \;\!
\bigl(\v(\x_1), \v(\x_2),\ldots,\v(\x_k)\bigr). 
\end{align*} 
Moreover, 
for every $N\in\N$ the covariance structure of the random field $\v_N$ is identical 
to the covariance structure of $\v$. 
\end{corollary}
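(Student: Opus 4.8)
The plan is to deduce both assertions from the single-integrand results already established, namely Corollary~\ref{cor:stratMC} for the weak convergence and Lemma~\ref{lem:covariance} for the covariance structure. The only genuinely new ingredient is the passage from one-dimensional to finite-dimensional distributions, which I would handle by a stacking argument that realizes the joint vector $(\v_N(\x_1),\ldots,\v_N(\x_k))$ as the real part of a single stochastic integral against $\wn_N$ with a taller matrix-valued integrand.

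For the convergence of finite-dimensional distributions, fix points $\x_1,\ldots,\x_k\in\R^{n_0}$ and define the stacked integrand $\widetilde{\bm G}\colon U\to\C^{kd\times\ell}$ by vertically concatenating the blocks $\bm G(\x_1,\bdot),\ldots,\bm G(\x_k,\bdot)$. Then $\int_U\widetilde{\bm G}(\y)\cdot\wn_N(\d\y)\in\C^{kd}$ is exactly the concatenation of the individual integrals $\int_U\bm G(\x_i,\y)\cdot\wn_N(\d\y)$, and taking real parts yields $(\v_N(\x_1),\ldots,\v_N(\x_k))$. I would first check that $\widetilde{\bm G}\in L^2(\mu;\C^{kd\times\ell})$, which is immediate since $\|\widetilde{\bm G}(\y)\|^2=\sum_{i=1}^k\|\bm G(\x_i,\y)\|^2$ and each summand is $\mu$-integrable by hypothesis, and that $\widetilde{\bm G}$ inherits the support condition (the relevant bad set is a finite union of the $\mu$-null sets assumed for the individual $\bm G(\x_i,\bdot)$). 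Corollary~\ref{cor:stratMC}, applied with $kd$ in place of $d$ and with integrand $\widetilde{\bm G}$, then gives $\int_U\widetilde{\bm G}\cdot\wn_N\xrightarrow{\;\mathrm{d}\;}\int_{\bigcup_j\{p_j>0\}}\widetilde{\bm G}\cdot\wn$. Because $p_j=0$ holds $\mu$-a.e.\ on $U\setminus\Delta_j$ and the support condition forces $\widetilde{\bm G}=\bm0$ $\mu$-a.e.\ on $\Delta_j\cap\{p_j=0\}$, the integrand $\widetilde{\bm G}$ vanishes $\mu$-a.e.\ on $U\setminus\bigcup_j\{p_j>0\}$, so this limit coincides with $\int_U\widetilde{\bm G}\cdot\wn$, whose real part is $(\v(\x_1),\ldots,\v(\x_k))$. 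An application of the continuous mapping theorem to the continuous real-part map completes this step.

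For the covariance structure, I would argue that both fields are centered and that all second moments coincide for every $N$. Centering is clear from $\E[\wn_{jn}]=\bm0$ together with the independence of the $\y_{jn}$ from the $\wn_{jn}$ (respectively from the vanishing mean of the white-noise integral for $\v$). For the second moments, I would apply the covariance formula \eqref{eq:isometric_property_2} of Lemma~\ref{lem:covariance} with $\bm G(\x_i,\bdot)$ and $\bm G(\x_{i'},\bdot)$ in place of $\bm G$ and $\bm H$, obtaining $\E[\v_N(\x_i)\otimes\v_N(\x_{i'})]=\tfrac12\Re\int_U\bm G(\x_i,\y)\cdot\overline{\bm G(\x_{i'},\y)}^\top\,\mu(\d\y)$. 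The crucial point is that this right-hand side does not depend on $N$ and is precisely the analogous covariance of the continuous white-noise integral $\v$, which is consistent with the isometric identity \eqref{eq:isometric_property} and the covariance formulas recorded in \cite[Appendix~A]{AKLMW24}. Hence the covariance structures of $\v_N$ and $\v$ agree for every $N$, including the cross-covariances between distinct evaluation points.

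I expect the main obstacle to be bookkeeping rather than conceptual: verifying that the stacked integrand still meets the hypotheses of Corollary~\ref{cor:stratMC} and, in particular, reconciling the domain $\bigcup_j\{p_j>0\}$ appearing in \eqref{eq:strat_MC} with the full domain $U$ used to define $\v$. Both reconciliations rest on the support condition together with the fact that $p_j$ is supported on $\Delta_j$ up to $\mu$-null sets, so the argument reduces to a short measure-theoretic check rather than a new estimate.
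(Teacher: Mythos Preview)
Your proposal is correct and follows essentially the same approach as the paper: the paper's proof likewise applies Corollary~\ref{cor:stratMC} to the stacked integrand $(\bm{G}(\x_1,\bdot),\ldots,\bm{G}(\x_k,\bdot))\in L^2(\mu;\C^{kd\times\ell})$ for the first assertion and invokes Lemma~\ref{lem:covariance} together with the white-noise covariance formulas from \cite[Appendix~A]{AKLMW24} for the second. Your write-up is simply more explicit about the bookkeeping (the $L^2$ and support checks for the stacked integrand, the reconciliation of $\bigcup_j\{p_j>0\}$ with $U$, and the continuous mapping step), all of which the paper leaves implicit.
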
 

\begin{proof}
The first statement follows 
readily 
from 
an application of 
Proposition~\ref{prop:stratMC} 
using 
the 
integrand
$(\bm{G}(\x_1,\bdot),\ldots,\bm{G}(\x_k,\bdot))$  
considered as an element in $L^2(\mu;\C^{kd\times \ell})$. 
The second statement 
is a 
consequence 
of the covariance identities in 
Lemma \ref{lem:covariance}
and \cite[Lemma A.2]{AKLMW24}.
\end{proof}

 
\printbibliography

\end{document}